\numberwithin{equation}{section}
\newtheorem{Theorem}{Theorem}[section]
\newtheorem{cor}[Theorem]{Corollary}
\newtheorem{lem}[Theorem]{Lemma}
\theoremstyle{definition}
\newtheorem{de}[Theorem]{Definition}
\newtheorem{rem}[Theorem]{Remark}
\def\Zb{\mathbf{Z}}
\def\ab{\mathbf{a}}
\def\bb{\mathbf{b}}
\def\pb{\mathbf{p}}
\def\rb{\mathbf{r}}
\def\tb{\mathbf{t}}
\def\xb{\mathbf{x}}
\def\zb{\mathbf{z}}
\def\det{\mathrm {det}}
\def\Pf{\mathrm {Pf}}
\newcommand{\Pa}{\mathrm{P}}
\newcommand{\DP}{\mathrm{DP}}
\newcommand{\OP}{\mathrm{OP}}
\def\bc{\begin{corollary}}
\def\ec{\end{corollary}}
\def\be{\begin{equation}}
\def\ee{\end{equation}}
\def\bea{\begin{eqnarray}}
\def\eea{\end{eqnarray}}
\def\PP{{\mathcal P}}
\def\FF {{\mathcal F}}
\DeclareMathOperator{\GL}{GL}
\DeclareMathOperator{\SO}{SO}
\def\gl{\mathfrak{gl}}
\DeclareMathOperator{\sgn}{sgn}
\begin{document}
\allowdisplaybreaks

\newcommand{\arXivNumber}{2401.06032}

\renewcommand{\PaperNumber}{098}

\FirstPageHeading

\ShortArticleName{Bilinear Expansions of KP Multipair Correlators in BKP Correlators}

\ArticleName{Bilinear Expansions of KP Multipair Correlators\\ in BKP Correlators}

\Author{Aleksandr Yu.~ORLOV~$^{\rm ab}$}

\AuthorNameForHeading{A.Yu.~Orlov}

\Address{$^{\rm a)}$~National Research University Higher School of Economics, Moscow, Russia}
\EmailD{\href{mailto:orlovs55@mail.ru}{orlovs55@mail.ru}}

\Address{$^{\rm b)}$~Institute of Oceanology, Moscow, Russia}

\ArticleDates{Received February 03, 2024, in final form October 08, 2024; Published online October 31, 2024}

\Abstract{I present a generalization of our joint works with John Harnad (2021) that relates Schur functions, KP tau functions and KP correlation functions to Schur's $Q$-functions, BKP tau functions and BKP correlation functions, respectively.}

\Keywords{Schur function; Schur's $Q$-function; charged fermions; neutral fermions; KP tau function; BKP tau function}

\Classification{05E05; 37K10; 81R12}

\begin{flushright}
\begin{minipage}{60mm}
\it Dedicated to John Harnad\\ on the occasion of his 75th birthday
\end{minipage}
\end{flushright}

\subsection*{Extended abstract}

The relationship between the Schur function $s_{\lambda}(\mathbf{z})$ and the projective Schur function $Q_\alpha(\mathbf{x})$ (the same Schur's $Q$-functions) is well known in the case where the partitions $\lambda$ and $\alpha$ are related as follows: the partition $\lambda$ has a special type, namely, it is the double $D(\alpha)$ of the strict partition $\alpha$, and at the same time the argument $\zb$ of the Schur function is also special: it can be written as a supersymmetric Newton sum $D(\mathbf{x})=\mathbf{x}/{-}\mathbf{x}$ (one could say that the argument of the Schur function must be the ``double'' of the argument of the projective Schur function). This relationship looks like this: $s_{D(\alpha)} (D(\mathbf{x}) )= (Q_\alpha(\mathbf{x}) )^2$, see \cite{Mac1}. In \cite{HO1}, we obtained a more general bilinear relationship between $s_\lambda$ and $Q_\alpha$ by removing the mentioned restriction on the partition of $\lambda$. In a similar way, bilinear relations were obtained between the determinant and Pfaffian tau functions (namely, between the tau functions of the KP and BKP hierarchies). In this paper, we remove restrictions from the argument of the Schur function, which no longer has to be the “double” of the argument of the projective Schur function, and obtain the most general connection between $s_\lambda$ and $Q_\alpha$, as well as between the determinant and Pfaffian tau functions and correlators. We use Japanese fermionic technique. This work is a continuation of joint works with John Harnad \cite{HO1,HO2,HO3}.

\section{Introduction}

This work is a continuation of \cite{HO1,HO2,HO3,O-2021} which
concerned bilinear expansions of Schur lattices $\{\pi_{\lambda}(g) ({\bf t})\}$ of KP
$\tau$-functions, labeled by partitions
$\lambda$ and
$\GL(\infty)$ group elements $\hat{g}$
expressible as sums of products of corresponding lattices $\{\kappa_{\alpha} (h)({\bf t}_B)\}$ of BKP $\tau$ functions,
labeled by strict partitions $\alpha$ and $\SO(\infty)$ group elements $\hat{h}$.

The approach was based on the notion of tau function, as introduced by Sato \cite{Sa} expressed
as vacuum state expectation values (VEV's) of products free fermionic operators, and their
exponentials, as developed in the works of Kyoto school \cite{DJKM1,DJKM2,JM}.

\subsection{Few words about the problem}

Definitions will be given later below in Section \ref{Preliminaries}.

{\it $(a)$ Schur functions and the projective Schur functions.}
Let $\alpha=(\alpha_1,\dots,\alpha_r)$ is a strict partition and
$D(\alpha):=(\alpha_1,\dots,\alpha_r|\alpha_1-1,\dots,\alpha_r-1)$ and $\hat r =r$ if $r$ is even
and $\hat r=r+1$ if~$r$~is odd.
 The following relation is known
(see, for instance, \cite{Mac1}):
\be\label{Q^2=s}
2^{-\hat r}\bigl(Q_\alpha\bigl({\pb^{\rm B}}\bigr) \bigr)^2 = s_{D(\alpha)}(\pb),
\ee
where $Q_\alpha$ is the projective Schur function written as the function of power sum variables
\be
\label{p_B}
{\pb^{\rm B}}=\bigl(p^{\rm B}_1,p^{\rm B}_3,p^{\rm B}_5,\dots\bigr)
\ee
and where $s_\lambda$ is the Schur function written as function of power sum variables
\begin{gather*}
\pb=(p_1,p_2,p_3,p_4,p_5,\dots),
\end{gather*}
and in formula (\ref{Q^2=s}) these two sets of variables are related as follows:
\be\label{p'}
\pb = \pb' :=(p_1,0,p_3,0,p_5,0,\dots),
\ee
where
\be\label{p_p^B}
p_{i}=2p^{\rm B}_{i},\qquad i\ \text{odd},
\ee
where $p^{\rm B}_n$ and~$p_n$ are commonly related to the two different sets of the variables,
say $z_1,\dots,z_N$ and $x_1,\dots,x_M$, respectively, as Newton sums
\begin{gather}\label{power_sum_via_z}
p^{\rm B}_{2m-1}=p_{2m-1}(z_1,z_2,\dots)=\sum_{i=1}^{N} z_i^{2m-1},\qquad m=1,2,3,\dots,
\\
p_m=p_m(x_1,\dots,x_M)=\sum_{i=1}^M x_i^m,\qquad m=1,2,3,\dots, \nonumber
\end{gather}
and therefore are called power sum variables.

 The projective Schur function
is a polynomial quasihomogeneous function in the power sum variables and symmetric homogeneous polynomial
in the variables $\zb = (z_1,\dots,z_{N(\zb)})$, related to the power sums by (\ref{power_sum_via_z}).
It is labeled by a strict partition (multiindex)
$\alpha=(\alpha_1,\dots,\alpha_k)$ (where $\alpha_1>\alpha_2>\cdots >\alpha_k \ge 0$ is the set of integers).
 We~will recall the definition later in the text.
We~will write the projective Schur function either as symmetric function~$Q_\alpha(\zb)$ or as polynomial~$Q_\alpha\bigl({\pb^{\rm B}}\bigr)$ and we hope it does not produce a misunderstanding.

The function $s_{D(\alpha)}(\pb)$ in the right-hand side is the Schur function
labeled by a special partition (the multiindex) denoted by $D(\alpha)$ which is called the double of
the strict partition $\alpha$. In general, the Schur function $s_\lambda(\pb) $ is defined for {\em any}
partition $\lambda$
and is a quasihomogeneous polynomial in the variables $(p_1,p_2,p_3,\dots)=:\pb$, where in contrast to (\ref{p'})
the variables indexed with even numbers are also in presence. However the Schur function in the right-hand side
of (\ref{Q^2=s}) is evaluated for the restricted
set of variables denoted by $\pb'$ which is $(p_1,0,p_3,0,p_5,0,\dots)$ and~$\lambda$ is chosen to be the double of
$\alpha$. In what follows, we also use the so-called Frobenius coordinated of $\lambda$,
$\lambda=(\alpha|\beta)$, where $\alpha=(\alpha_1,\dots,\alpha_r)$ and $\beta=(\beta_1,\dots,\beta_r)$ is the pair
of strict partitions. The function $s_\lambda$ is a symmetric homogeneous polynomial in the variables $\xb$
and commonly is written as $s_\lambda(\xb)$ which is $s_\lambda(\pb(\xb))$ (we hope it will not produce the
incomprehension).

Equality (\ref{Q^2=s}) is the fundamental equality which relates Schur functions and projective Schur functions,
to our knowledge at first it was found in \cite{You}.

Consider two independent sets of variables $\ab=(a_1,\dots,a_N)$ and $\bb=(b_1,\dots,b_N)$.
(Sometimes we will use the notation $N(\ab)$ for the number of the variables in the set $\ab$.)

Let us choose
\be\label{p=p(x,y)}
p_n = p_n(\ab/{-}\bb):=\sum_{i=1}^N \left(a_i^n-(-b_i)^n \right),\qquad n=1,2,3,\dots .
\ee
Such $\pb(\ab/{-}\bb)$ we will call {\em supersymmetric Newton sums} of the variables $\ab$, $\bb$.
Below, we use the notation $\pb(\ab/{-}\bb)=\left( p_1(\ab/{-}\bb),p_2(\ab/{-}\bb), p_3(\ab/{-}\bb),\dots\right)$
and $s_\lambda(\ab/{-}\bb):=s_\lambda(\pb(\ab/{-}\bb))$ (the similar notation is used in~\cite[Chapter~I, Section~3, Example~23]{Mac1}).

Notice that $s_\lambda(\ab/{-}\ab)$ can be written as $s_\lambda(\pb')$, where $p'_{2n-1}=2\sum_{i=1}^N a_i^{2n-1}$.
We will see that the ``sypersymmetric Newton sums'' are quite natural for our problems and we shall use it throughout
the paper.

\begin{rem}
One can easily conjecture that for any set $\pb=(p_1,p_2,\dots)$ there exits such a~number $N$ (perhaps, infinite) and
two sets $\ab=(a_1,\dots,a_N) $ and $\bb=(b_1,\dots,b_N)$ that (\ref{p=p(x,y)}) is~true.
\end{rem}

In the present work, we find the generalization of (\ref{Q^2=s}) which symbolically may be written as%
\be
\label{main_example}
s_{(\alpha|\beta)}(\ab/{-}\bb)=2^{-\hat r}\sum_{(\zeta^+,\zeta^-)\in \PP(\alpha,\beta)
\atop (\zb^+,\zb^-)\in \PP(\ab,\bb)}
\left[ \zeta^+,\zeta^-\atop \alpha,\beta \right]\left[\zb^+,\zb^-\atop \ab,\bb\right]
Q_{\zeta^+}(\zb^+)Q_{\zeta^-}(\zb^-) ,
\ee
where the partition $\lambda=(\alpha|\beta)$ and the sets of complex numbers $\ab$, $\bb$ are free.
The sum in the right-hand side is taken over splittings of the set of the Frobenius coordinated
$\alpha\cup(\beta+1)$ into ordered subsets $\zeta^-$ and $\zeta^+$, and over splittings of the set
of the coordinates $\ab$, $\bb$ into ordered subsets $\zb^-$ and $\zb^+$,
 The weights denoted by square
brackets will be written down in~(\ref{a}) and~(\ref{d*}),
see Section \ref{notations-we-need}.
The projective Schur functions in the right-hand side are written as symmetric functions in the variables
$\zb^\pm=\bigl(z^\pm_1,\dots,z^\pm_{N(\zb^\pm)}\bigr)$ selected from the set~$a_1,\dots,a_N,b_1,\dots,b_N$. The parts of the partitions $\zeta^\pm=\bigl(\zeta^\pm_1,\dots,\zeta^\pm_{m(\zeta^\pm)}\bigr)$
 are selected from the set
$\alpha_1,\dots,\alpha_r,\beta_1+1,\dots,\beta_r+1$.\footnote{It would be symmetric for semiinteger partitions where $\alpha_i\to \alpha_i+\frac 12$,
$\beta_i\to \beta_i-\frac 12$ and the labels of the Fourier modes $\psi_i\to \psi_{i+\frac 12}$,
as it was done in Kac's papers,
but we avoid semiintegers. }

Actually, the case $\ab=\bb$ was already studied in our previous work \cite{HO1}.
Let us note that under the pair of restrictions: $\alpha=\beta+1$ and $\ab=\bb$
we get formula (\ref{Q^2=s}) as the particular case of (\ref{main_example}).

{\it $(b)$ KP and BKP lattice tau functions.}
Apart of the relation (\ref{main_example}) we get much more general relation, that is a relation
between KP and BKP tau functions.
Let us remind that the Schur function is the simplest nontrivial example of the KP tau function \cite{Sa} while the
projective Schur function is the simplest nontrivial example of the BKP tau function \cite{Nim,You}.\footnote{Let us note that by BKP tau function we mean the tau function introduced by Kyoto school in \cite{DJKM1}.
Another tau function also called the BKP one was introduced in \cite{KvdL1}. Both tau functions have
a lot of applications in various problems of mathematical physics, for instance, in random matrix theory.}

The lattice KP tau function can be written as a sum over partitions as follows:
\be\label{lattice_KP_series}
S_\lambda(\pb|\hat{g})=\sum_{\mu\in\Pa} s_\mu(\pb) \hat{g}_{\mu,\lambda}
\ee
and depends on $\pb=(p_1,p_2,\dots)$ and also on a partition (a multiindex) $\lambda\in\Pa$, where
$\Pa$ is the set of all partitions.

Below we imply that $\pb=\pb(\ab/{-}\bb)$.
The lattice BKP tau function can be written as
\be\label{lattice_BKP_series}
K_\mu\bigl(\zb|\hat{h}\bigr)=\sum_{\nu\in\DP} Q_\nu(\zb) \hat{h}_{\nu,\mu} .
\ee
In these formulas, $\hat{g}_{\mu,\lambda}$ and $\hat{h}_{\nu,\mu}$ are certain coefficients
given by the choice of KP and BKP tau functions, see Section \ref{2-sided,lattice} below.
This choice may be treated as a choice of a certain element of the Clifford group $\hat{g}$ in the KP case
which is defined by a choice of $\hat{g}\in\widehat{{\rm GL}}_\infty$ in the KP case
(see~(\ref{g=exp}) below) and
by $\hat{h}\in \hat{B}_\infty$ in the BKP case (formula (\ref{h=exp}) below) (therefore, we label the left-hand sides of
(\ref{lattice_KP_series}) and (\ref{lattice_BKP_series}) with these symbols).

\begin{rem}\label{smallSchur}\quad
\begin{itemize}\itemsep=0pt
\item If $\hat{g} = 1$, then $S_\lambda(\pb|\hat{g}=1)=s_\lambda(\pb)$.

\item If $\hat{h} = 1$, then
$K_\mu\bigl(\zb|\hat{h}=1\bigr)=Q_\mu(\zb)$.
\end{itemize}
\end{rem}
\begin{rem}\quad
\begin{itemize}\itemsep=0pt
\item If $\lambda=0$, then $S_{\lambda =0}(\pb|\hat{g})$ is usual (``one-side'') KP tau function.
\item If $\mu=0$, then $K_{\mu=0}\bigl({\pb^{\rm B}}|\hat{h}\bigr)$ is usual (``one-side'') BKP tau function.
\end{itemize}
\end{rem}

 In the present work, it is supposed that
$\hat{g},\hat{h}^\pm\in \hat{B}_\infty$ and
$\hat{g}=\hat{h}^-\hat{h}^+$.

Under parametrization (\ref{p=p(x,y)}) and the mentioned condition
$\hat{g}=\hat{h}^-\hat{h}^+\in \hat{B}_\infty$
explained in~Section~\ref{2-sided,lattice},
the relation between (\ref{lattice_KP_series}) and (\ref{lattice_BKP_series})
is identical to the relation (\ref{main_example}), where $s_\lambda(\ab/{-}\bb)$ is replaced by
$S_\lambda(\pb(\ab/{-}\bb)|\hat{g})$
of (\ref{lattice_KP_series}) and where $Q_{\zeta^\pm}\bigl(\zb^\pm\bigr) $ are replaces by $K_\mu\bigl(\zb |\hat{h}^\pm\bigr)$ of
(\ref{lattice_BKP_series}):
\be\label{main_example_dressed}
S_{(\alpha|\beta)}\bigl(\ab/{-}\bb|\hat{h}^+\hat{h}^-\bigr)=\sum_{ (\zeta^+,\zeta^-)\in \PP(\alpha,\beta)
\atop (\zb^+,\zb^-)\in \PP(\ab,\bb)}
\left[ \zeta^+,\zeta^-\atop \alpha,\beta\right]\left[\zb^+,\zb^-\atop \ab,\bb\right]
K_{\zeta^+}\bigl(\zb^+|\hat{h}^+\bigr)K_{\zeta^-}\bigl(\zb^-|\hat{h}^-\bigr) ,
\ee
where the summation range
and the weights denoted by square
brackets are the same as in (\ref{main_example}).
This is a subject of Theorem \ref{lattice-tau}.

\begin{rem}
In the spirit of the terminology common in soliton theory, one may say that relation (\ref{main_example_dressed})
is the {\em dressed} by element $\hat{g}$ version of (\ref{main_example}).
\end{rem}

\begin{rem}\label{notational_remark}
In our previous work \cite{HO2}, we use different notations
\begin{gather*}
\pi_{(\alpha|\beta)}(\hat{g}) ({\pb})=S_{(\alpha|\beta)}(\pb|\hat{g}),
\qquad
\kappa_{\alpha} \bigl(\hat{h}\bigr)\bigl(2{\pb^{\rm B}}\bigr)=K_\alpha\bigl(2{\pb^{\rm B}}|\hat{h}\bigr)
\end{gather*}
see Remarks \ref{Plucker} and \ref{Cartan}.
\end{rem}

\begin{rem}
We call a KP lattice tau function {\em polynomial} in case for any $\lambda$, in (\ref{lattice_KP_series}), there is
only a finite number of terms in the right-hand side. Examples of the polynomial KP tau functions
were presented in \cite{HL, HO3}.
Similarly, we call a BKP lattice tau function polynomial if there is a finite number of terms
in the right-hand side of (\ref{lattice_BKP_series}) for any $\mu$. Polynomial BKP tau functions
were studied in \cite{KvdL2,KvdLRoz}.

As we mentioned (see Remark \ref{smallSchur}),
the simplest example of the KP polynomial tau function is the Schur function $s_\lambda(\pb)$. Other examples,
like characters of linear groups or Laguerre polynomials can be found in \cite{HL, HO3}, respectively.
The simplest example of the BKP polynomial tau function is the projective $Q_\mu$-function. Other examples may be
found in \cite{HO3} and in the references therein.
\end{rem}

{\it $(c)$ The relation between KP and BKP two-sided tau functions.}
A two-sided KP tau function can be written as a double sum over partitions
\be\label{2_sided_series}
\tau(\pb,\tilde{\pb}|\hat{g})=
\sum_{\lambda,\mu} s_\mu(\pb) \hat{g}_{\mu,\lambda} s_\lambda(\tilde{\pb}) .
\ee
Each two-sided KP tau function depends on two sets
of higher times $\pb=(p_1,p_2,\dots)$ and $\tilde{\pb}=(\tilde{p}_1,\tilde{p}_2,\dots)$.
In the present paper we consider only the case $\pb=\pb(\ab/{-}\bb)$ according to~(\ref{p=p(x,y)}).
Similarly, we choose $\tilde{\pb}$ to be
\be\label{p=p(x,y)_tilde}
\tilde{p}_n=\tilde{p}_n\bigl(\tilde{\ab}/{-}\tilde{\bb}\bigr)=
\sum_{i=1}^{\tilde{N}}\bigl( \tilde{a}_i^n -\bigl(-\tilde{b}_i\bigr)^n \bigr) .
\ee
We will write two-sided KP tau function as $\tau\bigl(\ab/{-\bb},\tilde{\ab}/{-\tilde{\bb}}|\hat{g}\bigr)$ for the choice
(\ref{p=p(x,y)}) and (\ref{p=p(x,y)_tilde}).

The solution of these equations is defined
by the choice of $\hat{g}$ which gives rise to the coefficients~$\hat{g}_{\mu,\lambda}$,
in a way described in \cite{Takasaki_Schur}.
The alternative description uses the notation of Sato Grassmannian and to speak about a point of the Grassmannian
instead of $\hat{g}$.
We will try to avoid this notion in order not to convert a given clear problem of writing down an explicit equality
to a~part of geometry.\footnote{The geometrical approach of the related topics can be found in \cite{BHH}.}

The two-sided BKP tau function can be written as a double sum over strict partitions
(definitions see below):
\be\label{2_sided_series_B}
\tau^{\rm B}\bigl({\pb^{\rm B}},\tilde{\pb}^{\rm B}|\hat{h}\bigr)=
\sum_{\lambda,\mu} 2^{-\frac12 \ell(\mu)-\frac12\ell(\lambda)}Q_\mu\bigl({\pb^{\rm B}}\bigr) \hat{h}_{\mu,\lambda} Q_\lambda\bigl(\tilde{\pb}^{\rm B}\bigr) .
\ee
It depends on two sets of higher times labeled with odd numbers: ${\pb^{\rm B}}=\bigl(p^{\rm B}_1,p^{\rm B}_3,\dots\bigr)$ and
$\tilde{\pb}^{\rm B}=\bigl(\tilde{p}^{\rm B}_1,\tilde{p}^{\rm B}_3,\dots\bigr)$. In the case, for sets
$\zb=\bigl(z_1,\dots,z_{N(\zb)}\bigr)$ and $\tilde{\zb}=\bigl(z_1,\dots,z_{N(\tilde{\zb})}\bigr)$, we have
\be\label{p_B_tilde}
p^{\rm B}_n=p^{\rm B}_n(\zb)=\sum_{i=1}^{N(\zb)} z^n_i,\qquad
\tilde{p}^{\rm B}_n=\tilde{p}^{\rm B}_n(\tilde{\zb})=\sum_{i=1}^{N(\tilde{\zb})} \tilde{z}^n_i .
\ee
We will write
$\tau^{\rm B}\bigl({\pb^{\rm B}}(\zb),{\pb^{\rm B}}(\tilde{\zb})|\hat{h}\bigr) =:\tau^{\rm B}\bigl(\zb,\tilde{\zb}|\hat{h}\bigr)$.

If we compare (\ref{lattice_KP_series}) with (\ref{2_sided_series}) (also (\ref{lattice_BKP_series})
with (\ref{2_sided_series_B})), we see that it is a sort of Fourier transform.

The relation between two-sided KP and two-sided BKP tau functions is as follows:
\begin{gather}
\tau\bigl(\ab/{-}\bb,\tilde{\ab}/{-}\tilde{\bb}|\hat{h}^+\hat{h}^-\bigr)\nonumber\\
\qquad{}=
\sum_{(\zb^+,\zb^-)\in \PP(\ab,\bb)\atop(\tilde{\zb}^+,\tilde{\zb}^-)\in\PP(\tilde\ab,\tilde\bb)}
\left[\zb^+,\zb^-\atop \ab,\bb\right]\left[\tilde{\zb}^+,\tilde{\zb}^-\atop \tilde{\ab},\tilde{\bb}\right]^*
\tau^{\rm B}\bigl(\zb^+,\tilde{\zb}^+|\hat{h}^+\bigr) \tau^{\rm B}\bigl(\zb^-,\tilde{\zb}^-|\hat{h}^-\bigr).\label{relation_c}
\end{gather}
The sum in the right-hand side is taken over the natural numbers $N(\zb^-),N(\tilde{\zb}^-)$ and over the sets
of (complex) numbers
$
\zb^\pm =\bigl(z^\pm_1,\dots,z^\pm_{N(\zb^\pm)}\bigr)$, $\zb^\pm =\bigl(\tilde{z}^\pm_1,\dots,z^\pm_{N(\tilde{\zb}^\pm)}\bigr)$,
where $N(\zb^+)+N(\zb^-)=2N$, $N\bigl(\tilde{\zb}^+\bigr)+N(\tilde{\zb}^-)=
2\tilde{N}$, and
implies that $\zb^+\cup\zb^-=\ab\cup\bb$
and $\tilde{\zb}^+\cup\tilde{\zb}^-=\tilde{\ab}\cup\tilde{\bb}$.
The coefficients denoted
by square brackets will be written down in (\ref{d*}) and in (\ref{d}).

The simplest and trivial example of relation (\ref{relation_c}) is the case where
$\hat{g}_{\mu,\lambda}=\delta_{\mu,\lambda}$ (the case $\hat{g}=1$) and $\ab=\bb$, $\tilde{\ab}=\tilde{\bb}$.
In the power sum variables, it is written as
\begin{align}
\sum_{\lambda} s_\lambda(\pb')s_\lambda(\tilde{\pb}')&{}=
{\rm e}^{\sum_{n>0,\,\text{odd}}\frac{1}{n}p_n\tilde{p}_n}=\bigl({\rm e}^{\sum_{n>0}\frac {2}{n} p^{\rm B}_n\tilde{p}_n^{\rm B} } \bigr)^2 \nonumber\\
&{}=\biggl(\sum_{\mu} 2^{-\ell(\mu)} Q_\mu\bigl({\pb^{\rm B}}\bigr)Q_\mu\bigl(\tilde{\pb}^{\rm B}\bigr)\biggr)^2 ,\label{vacuum}
\end{align}
where the sets $\pb'$ and ${\pb^{\rm B}}$ are the same as in (\ref{p_B}), (\ref{p'}) and (\ref{p_p^B}).
In the variables $\ab$, $\tilde\ab$, both sides of (\ref{vacuum}) are equal to
\[
\Biggl( \prod_{N\ge i>j} \frac{a_i-a_j}{a_i+a_j}
\prod_{\tilde{N}\ge i>j} \frac{\tilde{a}_i-\tilde{a}_j}{\tilde{a}_i+\tilde{a}_j}\Biggr)^2 .
\]

{\it $(d)$ Bi-lattice KP and bi-lattice BKP tau functions.}
In view of the notion of the lattice tau functions (\ref{lattice_KP_series}) and (\ref{lattice_BKP_series}),
it is natural to call $\hat{g}_{\lambda,\tilde{\lambda}}$ bi-lattice KP tau function and to call
$\hat{h}_{\mu,\tilde{\mu}}$ bi-lattice BKP tau function. Then we get
\[
\hat{g}_{(\alpha|\beta),(\tilde{\alpha}|\tilde{\beta})}=
\sum_{ (\zeta^+,\zeta^-)\in \PP(\alpha,\beta)\atop
 (\tilde\zeta^+,\tilde\zeta^-)\in \PP(\tilde\alpha,\tilde\beta) }
\left[ \zeta^+,\zeta^-\atop \alpha,\beta\right]\left[\tilde{\zeta}^+,\tilde{\zeta}^-\atop \tilde{\alpha}, \tilde{\beta} \right]
h^+_{\zeta^+,\tilde{\zeta}^+}h^-_{\zeta^-,\tilde{\zeta}^-} ,
\]
where $\zeta^+\cup\zeta^-=\alpha\cup(\beta+1)$ and $\tilde{\zeta}^+\cup\tilde{\zeta}^-=\tilde{\alpha}\cup\bigl(\tilde{\beta}+1\bigr)$
and where $\lambda=(\alpha|\beta)$, $\tilde{\lambda}=\bigl(\tilde{\alpha}|\tilde{\beta}\bigr)$,
$\alpha=(\alpha_1,\dots,\alpha_r)$, $\beta=(\beta_1,\dots,\beta_r)$,
 $\tilde{\alpha}=(\tilde{\alpha}_1,\dots,\tilde{\alpha}_{\tilde{r}})$,
 \smash{$\tilde{\beta}=(\tilde{\beta}_1,\dots,\tilde{\beta}_{\tilde{r}})$}.
The weights denoted by square
brackets will be written below, see (\ref{a}).

The simple nontrivial example of the bi-lattice KP tau function $\hat{g}_{\lambda,\tilde{\lambda}}$ is the product
$s^*_\mu(\lambda)\!\dim\lambda$, where $s^*_\mu(\lambda)$ is the so-called shifted Schur function
introduced by Okounkov in \cite{EOP} and $\dim\lambda$ is the number of standard tableaux of the shape
$\lambda$, see \cite{Mac1}. The shifted Schur functions were used in an approach to the representation theory
developed by G. Olshanski and A. Okounkov in~\cite{OO}.

The simple nontrivial example of the bi-lattice BKP tau function $h_{\mu,\nu}$
is the following product $Q^*_\mu(\nu) \dim^{\rm B}\mu$, where $Q^*_\mu(\nu)$ is the shifted projective Schur function
introduced by Ivanov in \cite{Iv} and $\dim^{\rm B}\mu$ is the number of the shifted standard tableaux of
the shape $\mu$. Functions $Q^*_\mu(\nu)$ are of use in the study of spin Hurwitz numbers \cite{MMN2019}.

The relation between shifted Schur functions and the shifted projective Schur functions was written
done in \cite{O-2021}.

\section{Preliminaries}\label{Preliminaries}

Here we review known basic facts and certain previous results and introduce notations.

\subsection{Some notations we use (partitions and ordered sets)}\label{notations-we-need}

{\bf Partitions.}
We recall that a nonincreasing set of nonnegative integers $\lambda_1\ge\cdots \ge \lambda_{k}\ge 0$,
we call partition $\lambda=(\lambda_1,\dots,\lambda_{l})$, and $\lambda_i$ are called parts of $\lambda$.
The sum of parts is called the weight~$|\lambda|$ of~$\lambda$. The number of nonzero parts of $\lambda$
is called the length of $\lambda$, it will be denoted by $\ell(\lambda)$, see~\cite{Mac1} for details.
Partitions will be denoted by Greek letters: $\lambda,\mu,\dots$. The set of all partitions is denoted by
$\Pa$. The set of all partitions with odd parts is denoted by $\OP$.
Partitions with distinct parts are called strict partitions, we prefer
letters $\alpha$, $\beta$, $\zeta$ \big(also $\zeta^\pm$\big) to denote them.
The set of all strict partitions will be denoted by $\DP$.
The Frobenius coordinated $\alpha$, $\beta$ for partitions $(\alpha|\beta)=\lambda\in\Pa$ are of usernames.
Let me recall that the coordinates $\alpha=(\alpha_1,\dots,\alpha_r)\in\DP$ consists of the lengths of arms counted
from the main diagonal (the diagonal nodes are not included) of the Young diagram of $\lambda$ while
$\beta=(\beta_1,\dots,\beta_r)\in\DP$ consists of the lengths of legs counted
from the main diagonal (again without the nodes on the main diagonal) of the Young diagram of $\lambda$, $r$
is the length of the main diagonal of $\lambda$, we call it Frobenius rank.

For example, the partition $(1)$
in Frobenius coordinate is written as $(0|0)$ and its Frobenius rank is 1. Other examples: the partitions
$(1,1,1)$, $(n)$ and $(n,m)$, $m\ge 2$ in the Frobenius coordinates are written as $(0|2)$ (rank equal to~1),
$(n-1|0)$ (rank equal to~1) and $(n-1,m-2|1,0))$ (rank equal to~2), respectively.
See \cite{Mac1} for details about partitions and their Young diagrams.

The number of the nonvanishing part of a partition $\lambda$ is called the length of $\lambda$ and is denoted by
$\ell(\lambda)$. A partition
also can be presented as $\lambda=(1^{m_1}2^{m_2}3^{m_3}\cdots)$, where $m_i$ is the number of time the number
$i$ occurs in the partition. We shall use the notation
\be\label{z-lambda}
z_\lambda =\prod_i m_i!i^{m_i}.
\ee

\begin{de}[supplemented partitions]\label{supplemented-partition}
If $\zeta$ is a strict partition of cardinality $m(\zeta)$ (with $0$ allowed as a part),
we define the associated {\em supplemented partition} $\hat{\zeta}$ to be
\begin{gather*}
\hat{\zeta} := \begin{cases} \zeta & \text{if } m(\zeta) \ \text{is even}, \\
 (\zeta,0) & \text{if } m(\zeta) \ \text{is odd}.
 \end{cases}
\end{gather*}
We denote by $m\big(\hat{\zeta}\big)$ the cardinality of $\hat{\zeta}$ which is an even number.
\end{de}
Let us note that the supplemented partition is not necessarily strick. For instance,
if $\zeta=0$, then $\hat\zeta=(0,0)$.

{\bf Polarization $\boldsymbol{\PP(\alpha,\beta)}$ of the set of the Frobenius coordinates $\boldsymbol{\alpha}$, $\boldsymbol{\beta}$ \cite{HO2}.}
This paragraph is taken from \cite{HO2}.
Consider a partition $\lambda$ written in the Frobenius coordinate as $(\alpha|\beta)=(\alpha_1,\dots,\alpha_r|\beta_1,\dots,\beta_r)$, $\alpha_1>\cdots >\alpha_r\ge 0$,
$\beta_1>\cdots >\beta_r\ge 0$.

\begin{de}
A {\em polarization} of $(\alpha|\beta)$, is a pair $\zeta:=\bigl(\zeta^+, \zeta^-\bigr)$
of strict partitions with cardinalities (or {\em lengths})
\begin{gather*}
m\bigl(\zeta^+\bigr):= \#\bigl(\zeta^+\bigr), \qquad m(\zeta^-)):= \#(\zeta^-)
\end{gather*}
(including possibly a zero part $\zeta^+_{m(\zeta^+)}=0$ or $\zeta^-_{m(\zeta^-)}=0$), satisfying
\begin{gather*}
\zeta^+ \cap \zeta^- = \alpha \cap I(\beta), \qquad \zeta^+ \cup \zeta^- = \alpha \cup I(\beta),
\end{gather*}
where
\begin{gather*}
I(\beta) := (I_1(\beta), \dots, I_r(\beta))
\end{gather*}
is the strict partition \cite{Mac1} with parts
\begin{gather*}
I_j(\beta) = \beta_j +1, \qquad j=1, \dots, r.
\end{gather*}
The set of all polarizations of $(\alpha|\beta)$ is denoted by $\PP(\alpha,\beta)$.
\end{de}
We denote the strict partition obtained by intersecting $\alpha$ with $I(\beta)$ as
\begin{gather*}
S := \alpha \cap I(\beta)
\end{gather*}
and its cardinality as
\[
s := \#(S).
\]
Since both $\alpha$ and $I(\beta)$ have cardinality $r$, it follows that
\begin{gather*}
m\bigl(\zeta^+\bigr) + m(\zeta^-) = 2r,
\end{gather*}
so $m\bigl(\zeta^\pm\bigr)$ must have the same parity. It is easily verified \cite{HO1} that the cardinality
of $\PP(\alpha, \beta)$ is~$2^{2r -2s}$.
The following was proved in \cite{HO1}.

\begin{lem}[binary sequence associated to a polarization]
For every polarization $\zeta:=\bigl(\zeta^+, \zeta^-\bigr)$ of $\lambda=(\alpha| \beta)$, there is a unique binary sequence
of length $2r$
\begin{gather*}
\epsilon(\zeta) =(\epsilon_1(\zeta), \dots, \epsilon_{2r}(\zeta)),
\end{gather*}
with
$\epsilon_j(\zeta) =\pm$, $j=1, \dots 2r$,
such that
\begin{enumerate}\itemsep=0pt
\item[$(1)$] The sequence of pairs
\be
(\alpha_1, \epsilon_1(\zeta)), \dots, (
\alpha_r, \epsilon_r(\zeta)), (\beta_1+1, \epsilon_{r+1}(\zeta)), \dots , ( \beta_r+1, \epsilon_{2r}(\zeta))
\label{alpha_beta_sequence}
\ee
is a permutation of the sequence
\be
\bigl(\zeta^+_1,+\bigr), \dots, \bigl( \zeta^+_{m^+(\zeta)},+\bigr), (\zeta^-_1, -), \dots , (\zeta^-_{m^-(\zeta)}, -\bigr).
\label{mu_sequence}
\ee
\item[$(2)$]
$\epsilon_j(\zeta)= +$ if $\alpha_j \in S$, and $\epsilon_{r+j}(\zeta)=
 -$ if $\beta_j +1\in S$, $j=1, \dots, r.
$
\end{enumerate}
\end{lem}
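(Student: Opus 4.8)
The plan is to determine each sign $\epsilon_i(\zeta)$ slot by slot, according to the value carried by that slot, and then to verify that the resulting string is the unique one meeting conditions $(1)$ and $(2)$. First I would sort the $2r$ values $\alpha_1,\dots,\alpha_r,\beta_1+1,\dots,\beta_r+1$ appearing in the slots of (\ref{alpha_beta_sequence}) by their relation to $S=\alpha\cap I(\beta)$. Since $\alpha$ and $I(\beta)$ are both strict of length $r$, a value $v\in S$ occupies exactly two slots --- one among the $\alpha_j$ and one among the $\beta_k+1$ --- while each value of $(\alpha\cup I(\beta))\setminus S$ occupies a single slot. On the right-hand side, the defining relations $\zeta^+\cap\zeta^-=S$ and $\zeta^+\cup\zeta^-=\alpha\cup I(\beta)$ say precisely that a value $v\in S$ lies in both $\zeta^+$ and $\zeta^-$, and so contributes the two pairs $(v,+)$ and $(v,-)$ to (\ref{mu_sequence}), whereas each free value lies in exactly one of $\zeta^\pm$ and hence contributes a single pair $(v,+)$ or $(v,-)$.

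For existence I would read off the signs group by group. A free value receives $+$ or $-$ according to whether it lies in $\zeta^+$ or $\zeta^-$, which matches the single pair it contributes; and for $v\in S$, condition $(2)$ assigns $+$ to its $\alpha$-slot and $-$ to its $\beta$-slot, supplying exactly the required pair $(v,+),(v,-)$. Collecting these assignments, the multiset $\{(v_i,\epsilon_i(\zeta))\}$ agrees value by value with the multiset of (\ref{mu_sequence}), so condition $(1)$ holds. Uniqueness I would argue in the same spirit: a free value occupies one slot and appears with one sign in (\ref{mu_sequence}), so its sign is forced by $(1)$ alone; a shared value $v\in S$ occupies two slots which, by $(1)$, must carry $+$ and $-$ in some order, and condition $(2)$ removes this remaining ambiguity by fixing the $\alpha$-slot to $+$ and the $\beta$-slot to $-$.

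The hard part, and the only place where both hypotheses are genuinely needed, is the treatment of the coincidences $S$: there condition $(1)$ leaves a two-fold choice of sign at each of the $s$ repeated values, and it is condition $(2)$ that pins the assignment down, whereas away from $S$ condition $(2)$ is vacuous and $(1)$ suffices on its own. Keeping these two regimes separate is what makes the existence-and-uniqueness statement clean; it also shows that all the freedom in choosing $\zeta$ resides in the $2r-2s$ free slots, consistently with the count $\#\PP(\alpha,\beta)=2^{2r-2s}$ recorded above.
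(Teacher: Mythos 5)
Your argument is correct and complete: the case split between shared values $v\in S$ (two slots, whose $\pm$ assignment is forced by condition $(2)$) and free values (one slot, whose sign is forced by condition $(1)$ alone) is exactly what is needed for both existence and uniqueness, and your closing consistency check against $\#\PP(\alpha,\beta)=2^{2r-2s}$ is a nice touch. Note that the paper itself gives no proof of this lemma --- it simply quotes it from the earlier work of Harnad and Orlov (``The following was proved in [HO1]'') --- so there is nothing to compare against here; your direct combinatorial argument is the natural one and can stand on its own.
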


\begin{de}
 The {\em sign} of the polarization $\bigl(\zeta^+, \zeta^-\bigr)$, denoted by $\sgn(\zeta)$, is defined
as the sign of the permutation that takes the sequence (\ref{alpha_beta_sequence}) into the sequence
(\ref{mu_sequence}).
\end{de}
Denote by
\begin{gather*}
\pi\bigl(\zeta^\pm\bigr) :=\#\bigl(\alpha \cap \zeta^{\pm}\bigr)
\end{gather*}
the cardinality of the intersection of $\alpha$ with $\zeta^{\pm}$. It follows that
\begin{gather*}
\pi\bigl(\zeta^+\bigr) + \pi(\zeta^-) = r+s.
\end{gather*}
Now we introduce the notation
\be
\label{a}
 \left[\zeta^+,\zeta^-\atop \alpha,\beta \right]:={(-1)^{\frac{1}{2}r(r+1) + s}\over 2^{r-s}}
\sgn(\zeta)(-1)^{\pi(\zeta^-)+\frac12 m(\hat{\zeta}^-)} ,
\ee
where $r$ is the Frobenius rank of $(\alpha|\beta)$.

{\bf The ordered coordinate sets.}
Consider given sets $\ab$, $\bb$ of complex numbers
\be\label{Fixed_sets}
\ab=(a_1,\dots,a_N), \qquad \bb=(b_1,\dots,b_N).
\ee
We want to have the similarity of the sets of complex numbers with partitions.
For this purpose, let us introduce the following order in the set $\ab\cup\bb$: in the set called ordered
we place complex number as a set with the weakly decaying set with respect to the absolute values
of the numbers. In case the absolute values in a subset is the same, we just fix any order inside the subset
and keep it in what follows. Such sets we will call time ordered sets.\footnote{We recall that the complex
coordinate, say $z$ of a quantum field in Euclidean 2D theory is presented as ${\rm e}^{\sqrt{-1}\varphi -\tau}$,
where $\tau$ is interpreted as time variable.}
Let us also treat the subsets~(\ref{Fixed_sets}) $\ab$ and~$\bb$ of the set $\ab\cup\bb$ as also ordered:
$|a_1|\ge \cdots \ge |a_N|$ and $|b_1|\ge\cdots \ge |b_N|$.

Starting from now, all coordinate sets will be treated as the ordered sets whose each member is labeled in
the appropriate way (labels goes up from the left to the right). One can consider another pair of the ordered
complementary subsets, let us denote it as $\zb^+$ and $\zb^-$, $\zb^+\cup\zb^-=\ab\cup\bb$
(the order inherits the given order of the set $\ab\cup\bb$)
whose cardinalities $N(\zb^+)$ and $N(\zb^-)$
are not necessarily equal to $N$:
\be\label{current_sets}
\zb^\pm=\bigl(z^\pm_1,\dots,z^\pm_{N(\zb^\pm)}\bigr),\qquad \bigl|z^\pm_i\bigr|\ge \bigl|z^\pm_{i+1}\bigr|, \qquad N(\zb^+)+N(\zb^-)=2N .
\ee
In what follow, sets (\ref{Fixed_sets}) and (\ref{current_sets}) will play different roles:
the sets (\ref{Fixed_sets}) and the given number~$N$ are constant along the paper
while the sets $\zb^+$ and $\zb^-$ will be varying and the numbers $N\bigl(\zb^\pm\bigr)$ can be summation indices.

On the given sets (\ref{Fixed_sets}) and also on sets (\ref{current_sets}) let us introduce
the following involution $*$:
\be\label{x*y*}
a^*_i:=-a^{-1}_{N-i},\qquad b^*_i:=-b^{-1}_{N-i},\qquad z^{\pm*}_i:=-\bigl(z^{\pm}_{N(\zb)-i}\bigr)^{-1} .
\ee
Then the sets
$
\ab^*=(a^*_1,\dots,a^*_N)$, $\bb^*=(b^*_1,\dots,b^*_N)$, $\zb^{\pm*}=\bigl(z^{\pm*}_1,\dots,z^{\pm*}_{N(\zb^\pm)}\bigr)$
are also ordered in the sense that $|a^*_i|\ge|a^*_{i+1}|$,
$|b^*_i|\ge|b^*_{i+1}|$, $\bigl|z^{\pm*}_i\bigr|\ge\bigl|z^{\pm*}_{i+1}\bigr|$ and the set
$\ab^*\cup\bb^*$ is ordered.

By analogy with Definition \ref{supplemented-partition}, we need the following.

\begin{de}[supplemented set of coordinates $\zb^\pm$, $\zb^{\pm*}$]\label{supplemented-coordinate-sets}
If $\zb^\pm$ is a strict partition of cardinality $N\bigl(\zb^\pm\bigr)$ (with $0$ allowed as a part),
we define the associated {\em supplemented set} $\hat{\zb}^\pm$ to~be%
\begin{gather*}
\hat{\zb}^\pm := \begin{cases} \zb^\pm & \text{if } N\big(\zb^\pm\big) \ \text{is even}, \\
 \big(\zb^\pm,0\big) & \text{if } N\big(\zb^\pm\big) \ \text{is odd},
 \end{cases}\qquad
 \hat{\zb}^{\pm*} := \begin{cases} \zb^{\pm*} & \text{if } N\big(\zb^\pm\big) \ \text{is even}, \\
 \big(\zb^{\pm*},\infty\big) & \text{if } N\big(\zb^{\pm*}\big) \ \text{is odd}.
 \end{cases}
\end{gather*}
We denote by $N\bigl(\hat{\zb}^\pm\bigr)$ and $N\bigl(\hat{\zb}^{\pm*}\bigr)$ the cardinality of respectively $\hat{\zb}^\pm$
and $\hat{\zb}^{\pm*}$.
We get $N\bigl(\hat{\zb}^\pm\bigr)=N\bigl(\hat{\zb}^{\pm*}\bigr)$ is an even number.
\end{de}

{\bf Vandermond-like products.}
For given sets
$\ab=(a_1,\dots,a_N)$, $\zb^\pm=\bigl(z^\pm_1,\dots,z^\pm_{N(\zb^\pm)}\bigr)$,
we use the following notations:
\begin{gather*}
\Delta(\ab):=\prod_{i<j\le N}(a_i-a_j) ,\qquad
\Delta^{\rm B}\bigl(\zb^\pm\bigr):=\prod_{i<j\le N(\zb^\pm)}\frac{z^\pm_i-z^\pm_j}{z^\pm_i+z^\pm_j} ,
\\
\Delta(\ab/{-}\bb):=\frac{\Delta(\ab)\Delta(\bb)}{\prod_{i,j=1}^N(a_i+b_j)}=
\det \bigl((a_i+b_j)^{-1}\bigr)_{i,j=1,\dots,N}.
\end{gather*}

As we see
$\Delta^{\rm B}\bigl(\hat{\zb}^\pm\bigr)=\Delta^{\rm B}\bigl(\zb^\pm\bigr)$.
One verifies that
\[
\Delta(\ab^*):=\prod_{i<j\le N}\bigl(-a_{N-i}^{-1}+a_{N-j}^{-1}\bigr)
=(-1)^{-\frac 12 N(N-1)}\Delta(\ab)\prod_{i=1}^N a_i^{-(N-1)}
\]
and we obtain
\begin{align}
\Delta(\bb^*/{-}\ab^*):={}&
(-1)^{N^2}\frac{\Delta(\bb^*)\Delta(\ab^*)}{\prod_{i,j=1}^N\bigl(a_i^{-1}+b_j^{-1}\bigr)}\label{Delta(-b^-1,-a^-1)}\\
 ={}&
\det \bigl(\bigl(-a_{N-i}^{-1}-b_{N-j}^{-1}\bigr)^{-1}\bigr)_{i,j=1,\dots,N}
 =(-1)^{N^2}\Delta(\ab/{-}\bb)\prod_{i=1}^N a_i b_i\label{Delta(-b^-1,-a^-1)=}
\end{align}
and
\begin{gather*}
\Delta^{\rm B}\bigl(\zb^{\pm*}\bigr)=\Delta^{\rm B}\bigl(\zb^\pm\bigr)=\Delta^{\rm B}\bigl(\hat{\zb}^{\pm*}\bigr)=\Delta^{\rm B}\bigl(\hat{\zb}^\pm\bigr).
\end{gather*}

{\bf Polarization $\boldsymbol{\PP(\ab,\bb)}$ of the coordinate set $\boldsymbol{\ab}$, $\boldsymbol{\bb}$.}

\begin{de}
A {\em polarization} of the pair of the ordered sets $(\ab,\bb) = (a_1,\dots,a_N,b_1,\dots, b_N)$,
is a pair of ordered sets $\zb=\bigl(\zb^+, \zb^-\bigr)$
 with cardinalities (or {\em lengths})
\begin{gather*}
N(\zb^+):= \#(\zb^+), \qquad N(\zb^-):= \#(\zb^-),
\end{gather*}
 satisfying
\begin{gather*}
\zb^+ \cap \zb^- = \ab \cap \bb, \qquad \zb^+ \cup \zb^- = \ab \cup \bb .
\end{gather*}
The set of all polarizations of $\ab$, $\bb$ is denoted by $\PP(\ab,\bb)$.
\end{de}

Introduce the following notations:
$\tilde{Q} := \ab \cap \bb$, $\tilde{C}:=\ab \cup \bb$
and their cardinalities as
$
\tilde{q} := \#\bigl(\tilde{Q}\bigr)$, $\tilde{c} := \#\bigl(\tilde{C}\bigr)=2N-\tilde{q}$.
We have
$N(\zb^+)+ N(\zb^-)=2N$,
so $N\bigl(\zb^\pm\bigr)$ must have the same parity. It is easy to see that the cardinality of
$\PP(\ab, \bb)$ is $2^{2N-2\tilde{q}}$.

\begin{lem}[binary sequence associated to a polarization]
For every polarization $\zb:= (\zb^+, \zb^-)$ of $(\ab, \bb)$ there is a unique binary sequence
of length $2N$
\begin{gather*}
\epsilon(\zb) =(\epsilon_1(\zb), \dots, \epsilon_{2N}(\zb)),
\end{gather*}
with
$\epsilon_j(\zb) =\pm$, $j=1, \dots 2N$,
such that the sequence of pairs
\begin{gather}
((a_1, \epsilon_1(\zb)), \dots (a_N, \epsilon_N(\zb)), (b_1, \epsilon_{N+1}(\zb)), \dots , ( b_N, \epsilon_{2N}(\zb))
\label{a_b_sequence}
\end{gather}
is a permutation of the sequence
\be
\bigl(\bigl(z^+_1,+\bigr), \dots \bigl( z^+_{N(\zb^+)},+\bigr), \bigl(z^-_1, -\bigr), \dots , \bigl(z^-_{N(\zb^-)}, -\bigr)\bigr) .
\label{z_sequence}
\ee
\end{lem}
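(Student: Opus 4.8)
The plan is to prove existence and uniqueness separately, exactly mirroring the already-established lemma for polarizations of the Frobenius coordinates $(\alpha|\beta)$. The key structural fact is that the two sets $\ab$ and $\bb$ each have cardinality $N$, so the concatenated sequence of pairs in \eqref{a_b_sequence} has length $2N$, which matches the total length $N(\zb^+)+N(\zb^-)=2N$ of the sequence in \eqref{z_sequence}. Thus the question is purely combinatorial: given that $\zb^+\cup\zb^-=\ab\cup\bb$ (as multisets, with the intersection $\tilde Q=\ab\cap\bb$ contributing to both $\zb^+$ and $\zb^-$), I must assign to each of the $2N$ slots in \eqref{a_b_sequence} a sign $\epsilon_j(\zb)\in\{+,-\}$ so that the signed reordering reproduces \eqref{z_sequence}.

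First I would establish existence. For each element $c$ of the multiset $\ab\cup\bb$, I read off from the polarization whether it belongs to $\zb^+$, to $\zb^-$, or (when $c\in\tilde Q$) to both. Since $\zb^+$ and $\zb^-$ are \emph{ordered} sets inheriting the fixed total order on $\ab\cup\bb$ given in \eqref{current_sets}, and since $\ab$, $\bb$ are themselves ordered by decreasing absolute value, each value in $\ab\cup\bb$ occurs in \eqref{a_b_sequence} in a determined position. I then define $\epsilon_j(\zb)$ by declaring the $j$-th pair to carry $+$ if the corresponding coordinate is routed to $\zb^+$ and $-$ if it is routed to $\zb^-$; for a coordinate lying in the intersection $\tilde Q$ I use its two occurrences (one in $\ab$, one in $\bb$) to absorb the two copies, one labelled $+$ and one labelled $-$. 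The condition $\zb^+\cap\zb^-=\ab\cap\bb$ guarantees that exactly the $\tilde q$ repeated values split as one $+$ and one $-$, so the sign assignment is consistent and the resulting signed sequence is, after stable reordering within each sign class, precisely \eqref{z_sequence}.

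For uniqueness I would argue that any two valid sign sequences $\epsilon(\zb)$ and $\epsilon'(\zb)$ must agree slot by slot. Because the target sequence \eqref{z_sequence} records each coordinate together with its sign, and because the ordering within $\zb^+$ and within $\zb^-$ is fixed by the inherited total order, the slot in \eqref{a_b_sequence} occupied by any non-repeated value of $\ab\cup\bb$ forces its sign uniquely; for a repeated value in $\tilde Q$, the two occurrences must receive opposite signs, and the convention that $a$-entries and $b$-entries keep their relative order pins down which occurrence gets $+$. The main obstacle — and the only genuinely delicate point — is handling the intersection $\tilde Q=\ab\cap\bb$ correctly, since these are the coordinates that appear in \emph{both} $\zb^+$ and $\zb^-$ and hence do not have a single unambiguous destination; one must check that the ordered-set conventions of Section~\ref{notations-we-need} remove this ambiguity, just as the analogous condition $\zeta^+\cap\zeta^-=\alpha\cap I(\beta)$ did in the partition case. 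Granting this, the argument is a direct transcription of the proof of the earlier lemma, with $(\alpha,I(\beta))$ replaced by $(\ab,\bb)$ and $S=\alpha\cap I(\beta)$ replaced by $\tilde Q=\ab\cap\bb$.
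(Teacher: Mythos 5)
The paper gives no proof of this lemma (it is stated by analogy with the earlier lemma on polarizations of $(\alpha|\beta)$, whose proof is itself cited to [HO1]), so your sketch has to stand on its own. The existence half is fine, and you correctly identify the intersection $\tilde Q=\ab\cap\bb$ as the only delicate point. But your uniqueness argument has a genuine gap there. Take $N=1$, $\ab=(c)$, $\bb=(c)$, so $\tilde q=1$ and the unique polarization is $\zb^+=(c)$, $\zb^-=(c)$. The target sequence \eqref{z_sequence} is $((c,+),(c,-))$, and \emph{both} $\epsilon=(+,-)$ and $\epsilon=(-,+)$ make \eqref{a_b_sequence} a permutation of it: a permutation of a sequence of pairs only sees the multiset of pairs, and swapping the signs on the two occurrences of $c$ produces the same multiset. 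Your appeal to ``the convention that $a$-entries and $b$-entries keep their relative order'' does not help, because the relative order of slots in \eqref{a_b_sequence} constrains nothing about which slot carries which sign.

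What actually removes the ambiguity in the partition case is not the condition $\zeta^+\cap\zeta^-=\alpha\cap I(\beta)$, as you suggest, but the \emph{separate clause} $(2)$ of that lemma, which decrees $\epsilon_j=+$ when $\alpha_j\in S$ and $\epsilon_{r+j}=-$ when $\beta_j+1\in S$. The coordinate-set lemma as printed omits the analogous clause, so to get uniqueness you must impose it explicitly: declare $\epsilon_j(\zb)=+$ whenever $a_j\in\tilde Q$ and $\epsilon_{N+j}(\zb)=-$ whenever $b_j\in\tilde Q$. With that stipulation added, your argument closes: every value outside $\tilde Q$ occurs in exactly one of $\zb^\pm$, so its sign is forced, and the signs on $\tilde Q$ are fixed by fiat. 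Without it, uniqueness fails whenever $\tilde q>0$.
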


\begin{de}
 The {\em sign} of the polarization $(\zb^+, \zb^-)$, denoted by $\sgn(\zb)$, is defined
as the sign of the permutation that takes the sequence (\ref{a_b_sequence}) into the sequence
(\ref{z_sequence}).
\end{de}
Denote by
\begin{gather*}
\pi\bigl(\zb^\pm\bigr) :=\#\bigl(\ab \cap \zb^{\pm}\bigr)
\end{gather*}
the cardinality of the intersection of $\ab$ with $\zb^{\pm}$. It follows that
\begin{gather*}
\pi(\zb^+) + \pi(\zb^-) = N +\tilde{q}.
\end{gather*}

We introduce the following notations:
\begin{gather}
\label{d*}
 \left[\zb^+,\zb^-\atop \ab,\bb\right] :=\frac{\Delta^{\rm B}(\zb^+)\Delta^{\rm B}(\zb^-)}{\Delta(\bb^*/{-}\ab^*)}
{(-1)^{\frac{1}{2}N(N+1)+q}\over 2^{N-q}}
\sgn(\zb)(-1)^{\pi(\zb^-)+\frac12 m(\hat{\zb}^-)},\\
\label{d}
\left[{\zb}^+,{\zb}^-\atop {\ab},{\bb}\right]^*
:= (-1)^{{N}^2}
\frac{\Delta^{\rm B}\bigl({\zb}^+\bigr)\Delta^{\rm B}({\zb}^-)}{\Delta({\ab}/{-}{\bb})}
{(-1)^{\frac{1}{2}{N}({N}+1)+{q}}\over 2^{{N}-{q}}} \nonumber\\
\hphantom{\left[{\zb}^+,{\zb}^-\atop {\ab},{\bb}\right]^*:=}{} \times
\sgn({\zb})(-1)^{\pi({\zb}^-)+\frac12 m(\hat{{\zb}}^-)}
\prod_{i=1}^{{N}} {a}_i{b}_i,
\end{gather}
where $\Delta(\bb^*/{-}\ab^*)$ is given by (\ref{Delta(-b^-1,-a^-1)}).

\subsection[Charged and neutral fermions and currents {[12]}]{Charged and neutral fermions and currents \cite{JM}}\label{free_fermions}
The fermionic creation and annihilation operators satisfy the anticommutation relations
\be
[\psi_j,\psi_k]_+= \bigl[\psi^\dag_j,\psi^\dag_k\bigr]_+=0,\qquad \bigl[\psi_j,\psi^\dag_k\bigr]_+=\delta_{jk} .
\label{charged-canonical}
\ee

The {\em vacuum} element $|n\rangle$ in each charge sector
$\FF_n$ is the basis element
corresponding to the trivial partition $\lambda = \varnothing$:
\[
| n\rangle :=|\varnothing; n \rangle = e_{n-1} \wedge e_{n-2} \wedge \cdots .
\]
Elements of the dual space $\FF^*$ are denoted by {\em bra} vectors $\langle w |$,
with the dual basis $\{\langle \lambda ;n|\}$ for~$\FF^*_n$ defined by the pairing
$\langle \lambda; n | \mu; m\rangle = \delta_{\lambda \mu} \delta_{nm}$.
For KP $\tau$-functions, we need only consider the $n=0$ charge sector $\FF_0$,
and generally drop the charge $n$ symbol, denoting the basis elements simply as
$|\lambda\rangle :=|\lambda;0\rangle$.
 For $j>0$, $\psi_{-j}$ and $\psi^\dag_{j-1}$
\big(resp.\ $\psi^\dag_{-j}$ and $\psi_{j-1}$\big) annihilate the right (resp.\ left) vacua:
\begin{gather}
\psi_{-j} |0\rangle =0, \qquad \psi^\dag_{j-1} |0\rangle =0, \qquad \forall j >0,
\label{vac_annihil_psi_j_r}
 \\
\langle 0| \psi^\dag_{-j} =0, \qquad \langle 0 | \psi_{j-1} =0, \qquad \forall j >0.
\nonumber
\end{gather}

Neutral fermions $\phi^+_j$ and $\phi^-_j$ are defined \cite{DJKM1} by
 \be
\phi^+_j :=\frac{\psi_j+ (-1)^j\psi^\dag_{-j}}{\sqrt 2},\qquad
\phi^-_j := {\rm i}\frac{\psi_j-(-1)^j \psi^\dag_{-j}}{\sqrt 2},\qquad j \in \mathbb{Z}
\label{charged-neutral}
\ee
(where ${\rm i}=\sqrt{-1}$), and satisfy
\be
\label{neutral-canonical}
 \bigl[\phi^+_j,\phi^-_k\bigr]_+=0,\qquad \bigl[\phi^+_j,\phi^+_k\bigr]_+ = \bigl[\phi^-_j,\phi^-_k\bigr]_+ =(-1)^j \delta_{j+k,0}.
\ee
In particular,
\begin{gather*}
\bigl(\phi^+_0\bigr)^2=(\phi^-_0)^2=\frac{1}{2}.
\end{gather*}
Acting on the vacua $|0\rangle$ and $|1 \rangle$, we have
\begin{gather}
\phi^+_{-j}|0\rangle \&= \phi^-_{-j} |0\rangle = \phi^+_{-j}|1\rangle =
\phi^-_{-j} |1\rangle =0 , \qquad \forall j > 0 , \quad \forall j > 0,
\label{phi_vac_r}
\\
\langle 0| \phi^+_{j} \&= \langle 0|\phi^-_{j} = \langle 1| \phi^+_{j} =
\langle 1|\phi^-_{j} =0 , \qquad \forall j > 0 ,
\nonumber
\\
\phi^+_0|0\rangle \& =
- {\rm i} \phi^-_0 |0\rangle =
\frac{1}{\sqrt{2}} \psi_0|0\rangle = \frac{1}{\sqrt{2}} |1\rangle ,
\label{phi_0_vac_r}
\\
 \langle 0| \phi^+_0 \& =
 {\rm i} \langle 0|\phi^-_0 =
\frac{1}{\sqrt{2}} \langle 0 | \psi_0^\dag = \frac{1}{\sqrt{2}}\langle 1|.
\nonumber
\end{gather}

{\bf Lie groups. Factorization condition.}
Let us define the normal ordering ${:}\psi_j\psi_k{:}$ of $\psi_j\psi_k$ as
$\psi_j\psi_k -\langle 0|\psi_j\psi_k|0\rangle $
and the normal ordering of $\phi^\pm_j\phi^\pm_k$ as
${:}\phi^\pm_j\phi^\pm_k{:} =\phi^\pm_j\phi^\pm_k -\langle 0|\phi^\pm_j\phi^\pm_k|0\rangle $.

Let us denote
\be\label{hat_g_via_fermions}
\hat{g}={\rm e}^{\sum_{j,k} A_{j,k}{:}\psi_j\psi^\dag_k{:}},
\ee
where $A_{j,k}$ are complex numbers.

If we ask $A_{j,k}$ decay in a fast enough way as $|j-k|\to\infty$,
then the exponents form Lie algebras~$\widehat{\mathfrak{gl}}_\infty$, and
 elements (\ref{hat_g_via_fermions}) form ${\rm GL}_\infty$ group.

\begin{rem}
``Fast enough'' implies the possibility of the exponentials to form Lie algebra, where the
commutator is well defined. Say, finite matrices satisfy this. The other example are the generalized
Jacobian matrices (matrices with a finite number of nonvanishing diagonals).
\end{rem}

The elements
\be\label{h(B)}
\hat{h}^\pm(B)={\rm e}^{\sum_{j,k} B_{j,k}{:}\phi^\pm_j\phi^\pm_k{:}},
\ee
where $B_{j,k}$ are complex numbers and $B_{j,k}=-B_{k,j}$.

If we ask $B_{j,k}$ decay in a fast enough way as $|j-k|\to\infty$,
the elements $\hat{h}^\pm(B)$ form $B_\infty$ group.
One can check the equality
\begin{gather*}
(-1)^k{:}\psi_j\psi^\dag_{-k}{:} - (-1)^j {:}\psi_k \psi^\dag_{-j}{:} = {:}\phi^+_j\phi^+_k{:} + {:}\phi^-_j \phi^-_k{:}.
\end{gather*}
Then we have the important theorem.

\begin{lem}[\cite{JM}]\label{g=h^+h^-Lemma} Suppose
\be\label{g_B_infty}
\hat{g}= {\rm e}^{\sum_{j,k} B_{jk} ( (-1)^k{:}\psi_j\psi^\dag_{-k}{:} - (-1)^j {:}\psi_k \psi^\dag_{-j}{:} )},
\ee
where $B$ is a given $($perhaps infinite$)$ antisymmetric matrix.
Then
\begin{gather*}
 \hat{g} = \hat{h}^+\hat{h}^-,
\end{gather*}
where $\hat{h}^\pm$ are defined by \eqref{h(B)}
\[
\hat{h}^\pm(B)={\rm e}^{\sum_{j,k} B_{j,k}{:}\phi^\pm_j\phi^\pm_k{:}}.
\]
\end{lem}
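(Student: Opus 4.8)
The plan is to reduce the statement to the single observation that the quadratic neutral-fermion generators of $\hat{h}^+$ and $\hat{h}^-$ commute, after which the factorization follows immediately from the rule for commuting exponentials. First I would rewrite the generator of $\hat{g}$ in \eqref{g_B_infty} by means of the operator identity displayed just above the statement. Introducing the abbreviation $X^\pm := \sum_{j,k} B_{j,k}{:}\phi^\pm_j\phi^\pm_k{:}$, I multiply that identity by $B_{j,k}$ and sum over $j,k$; the left-hand side becomes exactly the exponent appearing in \eqref{g_B_infty}, while the right-hand side becomes $X^+ + X^-$. Hence $\hat{g} = {\rm e}^{X^+ + X^-}$, whereas $\hat{h}^\pm = {\rm e}^{X^\pm}$ by the definition \eqref{h(B)}.

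The heart of the matter is then to prove $[X^+, X^-]=0$. Here I would invoke the anticommutation relation $\bigl[\phi^+_j,\phi^-_k\bigr]_+ = 0$ from \eqref{neutral-canonical}: every operator of type $\phi^+$ anticommutes with every operator of type $\phi^-$. A short algebraic computation shows that whenever two fermionic operators $a,b$ each anticommute with two fermionic operators $c,d$, the bilinears $ab$ and $cd$ commute: moving $cd$ past $ab$ produces four sign changes, which cancel in pairs. Applying this with $a=\phi^+_j$, $b=\phi^+_k$, $c=\phi^-_l$, $d=\phi^-_m$ gives $\bigl[\phi^+_j\phi^+_k,\phi^-_l\phi^-_m\bigr]=0$. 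Since normal ordering alters each bilinear only by a c-number, and c-numbers are central, we also get $\bigl[{:}\phi^+_j\phi^+_k{:},{:}\phi^-_l\phi^-_m{:}\bigr]=0$; summing against $B_{j,k}B_{l,m}$ yields $[X^+,X^-]=0$.

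With $X^+$ and $X^-$ commuting, the exponentials factor, ${\rm e}^{X^+ + X^-} = {\rm e}^{X^+}{\rm e}^{X^-}$, which is precisely $\hat{g} = \hat{h}^+\hat{h}^-$. The one point requiring genuine care is the infinite-dimensional setting: the sums defining $X^\pm$ are infinite, so one must use the ``fast enough'' decay hypothesis on $B_{j,k}$ to ensure that $X^\pm$ lie in a Lie algebra on which the bracket and the exponential map are well defined, and on which the equality ${\rm e}^{X^+ + X^-} = {\rm e}^{X^+}{\rm e}^{X^-}$ for commuting generators holds rigorously rather than merely formally. For finitely supported $B$ (for instance band matrices) every step above is literal; the general case then follows by the same continuity argument used to define the groups $B_\infty$ and ${\rm GL}_\infty$ earlier in this section.
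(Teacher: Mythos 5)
Your proof is correct and follows exactly the route the paper intends: the paper states the key identity ${(-1)^k{:}\psi_j\psi^\dag_{-k}{:} - (-1)^j {:}\psi_k \psi^\dag_{-j}{:}} = {:}\phi^+_j\phi^+_k{:} + {:}\phi^-_j \phi^-_k{:}$ immediately before the lemma and then cites \cite{JM} without writing out the argument, and your completion (summing against $B_{jk}$ to get $\hat g = {\rm e}^{X^++X^-}$, then checking $[X^+,X^-]=0$ from $\bigl[\phi^+_j,\phi^-_k\bigr]_+=0$ via the four-sign-change computation, with normal ordering only shifting by central c-numbers) is the standard one. The caveat about the infinite sums and the ``fast enough'' decay of $B$ is appropriately noted and matches the paper's own framing of when these exponentials are well defined.
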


\begin{rem}
The form (\ref{g_B_infty}) says that $\hat{g}\in B_\infty$.
\end{rem}

{\bf Fermi fields.} Introduce
\[
\psi(z)=\sum_{i\in\mathbb{Z}} z^i\psi_i,\qquad \psi^\dag(z)=\sum_{i\in\mathbb{Z}} z^{-i-1}\psi^\dag_i
\qquad \text{and}\qquad
\phi^\pm(z)=\sum_{i\in\mathbb{Z}} z^i\phi^\pm_i ,
\]
where $z$ is a formal parameter.\footnote{In Euclidian 2D QFT $z={\rm e}^{{\rm i}\varphi-\tau}$, where $\varphi$ has
the meaning of the coordinate of the fermion $\psi(z)$ (of $\phi(z)$) and~$\tau$ has the meaning of time variable. }
From (\ref{charged-neutral}), it follows
\be\label{psi(z)-phi(z)}
\psi(z) = \frac { \phi^+(z) - {\rm i} \phi^-(z)}{\sqrt{2}}, \qquad
 \psi^\dag(-z)= \frac{\phi^+(z) + {\rm i} \phi^-(z)}{z\sqrt{2}} .
\ee

\begin{rem}
The formula
\[
\psi_j =\frac{\phi^+_j - {\rm i}\phi^- _{j}}{\sqrt 2},\qquad
 (-1)^j \psi^\dag_{-j} =\frac{\phi^+_j+ {\rm i} \phi^-_{j}}{\sqrt 2}
\]
is quite similar to formula (\ref{psi(z)-phi(z)}) and is rather similar to
the relations
\[
\psi\bigl(-z^{-1}\bigr) =\frac{\phi^+\bigl(-z^{-1}\bigr) - {\rm i}\phi^-\bigl(-z^{-1}\bigr)}{\sqrt 2},\qquad
\psi^\dag\bigl(z^{-1}\bigr) =- z\frac{\phi^+\bigl(-z^{-1}\bigr)+ {\rm i} \phi^-\bigl(-z^{-1}\bigr)}{\sqrt 2}.
\]
These equations result in the useful equalities
\begin{gather} 
(-1)^j\psi_j\psi^\dag_{-j}= {\rm i}\phi^+_j\phi^-_j +\frac12 \delta_{j,0} , \nonumber
\\ \label{psipis=phiphi-fields}
 \psi(z) \psi^\dag(-z)=\frac{{\rm i}}{z}\phi^+(z)\phi^-(z) ,
\\ \label{psipis=phiphi-fields'}
 \psi\bigl(-z^{-1}\bigr) \psi^\dag\bigl(z^{-1}\bigr)=-{\rm i}z\phi^+\bigl(-z^{-1}\bigr)\phi^-\bigl(-z^{-1}\bigr) .
\end{gather}
\end{rem}

We say that $\psi_j$, $\psi^\dag_j$ and $\phi^\pm_j$ are the Fermi modes of the Fermi fields
$\psi(z)$, $\psi^\dag_j(z)$ and $\phi^\pm(z)$, respectively.

{\bf Pairwise VEV.}
From
\begin{gather*}
\langle 0|\psi_k\psi^\dag_j |0\rangle = \begin{cases}\delta_{k,j},& k<0,\\
 0,& k>0,
 \end{cases}
\\
\langle 0|\phi^\pm_k\phi^\pm_j|0\rangle =\begin{cases}(-1)^k\delta_{k,-j},& k<0,\\
 0,& k>0 ,
 \end{cases}
\\
\langle 0|\phi^\pm_k\phi^\mp_j|0\rangle =\pm \frac {\rm i}2 \delta_{k,0}\delta_{j,0},
\end{gather*}
one can easily obtain by direct calculation
\begin{gather}\label{pairwise-corr}
\langle 0|\psi(a_k)\psi^\dag(-b_j) |0\rangle = \frac{1}{b_j+a_k} ,
\\ \label{VEV-phi-phi}
\langle 0|\phi^\pm(a_k)\phi^\pm(a_j)|0\rangle =\frac 12 \frac{a_k-a_j}{a_k+a_j} ,
\\
\langle 0|\phi^\pm(a_i)\phi^\mp(a_j)|0\rangle =\pm \frac {\rm i}2 . \nonumber
\end{gather}

\begin{rem}
Actually
the time order\footnote{In 2D QFT the argument of
the Fermi field is written as $x={\rm e}^{\sqrt{-1}\varphi - \tau}$, where $\varphi$ is the space and $\tau$ is the time
coordinate of the fermion.}
(the operation $\mathbb{T}$) is implied for the pairwise correlation function (say, in case of neutral
fermions)
\[
\langle 0|\mathbb{T}\bigl[\phi^\pm(z_a)\phi^\pm(z_b)\bigr]|0\rangle :=\begin{cases}
\langle 0|\phi^\pm(z_a)\phi^\pm(z_b)|0\rangle & {\rm if}\ |z_a| > |z_b|,\\
-\langle 0|\phi^\pm(z_b)\phi^\pm(z_a)|0\rangle & {\rm if}\ |z_a| < |z_b| .
\end{cases}
\]
Say, for $|z_a|>|z_b|$, we have
\begin{gather*}
\langle 0|\phi^\pm(z_a)\phi^\pm(z_b)|0\rangle=
\langle 0|\sum_{i}\phi^\pm_i (z_a)^i\sum_j\phi^\pm_j (z_b)^j|0\rangle=\frac 12+\sum_{j > 0} (-1)^{j}\biggl(\frac{z_b}{z_a}\biggr)^j
=\frac 12 \frac{1-\frac{z_b}{z_a}}{1+\frac{z_b}{z_a}} .
\end{gather*}
While in case $|z_a|<|z_b|$ we get
\[
-\frac 12\frac{1-\frac{z_a}{z_b}}{1+\frac{z_a}{z_b}}.
\]
 In
{\em both} cases ($|z_a| > |z_b|$ and $|z_a| < |z_b|$), the answer
can be written as $\frac{z_a - z_b}{z_a+z_b}$ (and the limit $|z_a|\to |z_b|$ where $z_a\neq -z_b$ is smooth).
We see, if our goal is to present $\frac 12 \frac{z_a - z_b}{z_a+z_b}$
as the correlation function, we imply the time ordering. The same convention will be true for higher
correlation functions of neutral fermions. The point is that the final formulas are clever enough not to take
special care about the time ordering.

Next, the time ordering for the charged fermions has the form
\[
\langle 0|\mathbb{T}'\bigl[\psi(x)\psi^\dag(y)\bigr]|0\rangle :=\begin{cases}
\langle 0|\psi(x)\psi^\dag(y)|0\rangle & {\rm if}\ |x| > |y| ,\\
-\langle 0|\psi^\dag(y)\psi(x)|0\rangle & {\rm if}\ |x| < |y| .
\end{cases}
\]
In case $|x|=|y|$ and $x\neq y$, the fermions $\psi(x)$ and $\psi^\dag(y)$ anticommute
\big(Dirac delta function on the circle, $\frac{dy}{y}\sum_{n\in\mathbb{Z}}\frac{a^n}{b^{n}}$, vanishes\big).

There is the following matching:
\[
\langle 0|\mathbb{T}'\bigl[\psi(x)\psi^\dag(-y)\bigr]|0\rangle
=\langle 0|\mathbb{T}\biggl[\biggl(\frac { \phi^+(x) - {\rm i} \phi^-(x)}{\sqrt{2}} \biggr)
\biggl(\frac{\phi^+(y) + {\rm i} \phi^-(y)}{y\sqrt{2}} \biggr)\biggr]|0\rangle .
\]
So $\mathbb{T}'$-ordering is consistent with $\mathbb{T}$-ordering. This is also correct
for all higher correlators.

In what follows, we shall not return to this topic and will never use $\mathbb{T}$ symbols.
\end{rem}

{\bf Factorization lemma.}
As it was done in \cite{HO1,HO2,HO3}, we need Lemmas \ref{Polarization-fields}--\ref{Psi-lambda-Phi-mu-Phi-mu} to calculate tau functions we have

 \begin{lem}[factorization] \label{factorization_lemma}
 If $U^+$ and $U^-$ are either even or odd degree elements of the subalgebra generated by the operators
 $\bigl\{\phi^+_i\bigr\}_{i \in \Zb}$ and $\{\phi^-_i\}_{i \in \Zb}$, respectively, the VEV of their product can be factorized as
\begin{gather*}
\langle 0 | U^+ U^-|0\rangle =
\begin{cases}
\langle 0 | U^+|0\rangle \langle 0| U^- |0\rangle
&
\text{if } U^+ \text{ and } U^- \text{ are both of even degree},\\
0 & \text{if } U^+ \text{ and } U^- \text{ have different parity},\\
2{\rm i} \langle 0|U^+\phi^+_0|0\rangle \langle 0|U^-\phi^-_0 |0\rangle &
\text{if } U^+ \text{ and } U^+ \text{ are both of odd degree}.
\end{cases}
\end{gather*}
\end{lem}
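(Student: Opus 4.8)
The plan is to read off $\langle0|U^+U^-|0\rangle$ from Wick's theorem, the vacuum being quasi-free, so that the VEV of any product of the $\phi^\pm_j$ is the Pfaffian of their pairwise VEVs. Two features of those pairwise VEVs drive everything. First, a $\phi^+$ mode contracts with a $\phi^-$ mode only when both are the zero mode: the only nonvanishing mixed two-point function is $\langle0|\phi^+_0\phi^-_0|0\rangle=\frac{\mathrm i}{2}$, all other $\langle0|\phi^\pm_k\phi^\mp_j|0\rangle$ vanishing. Second, within one sector each zero mode contracts only with a zero mode, since $\langle0|\phi^\pm_0\phi^\pm_j|0\rangle=0$ for $j\neq0$ while $\langle0|(\phi^\pm_0)^2|0\rangle=\frac12$. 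Using $(\phi^\pm_0)^2=\frac12$ I would first reduce each argument to $U^\pm=A^\pm+\phi^\pm_0B^\pm$, where $A^\pm,B^\pm$ are polynomials in the nonzero modes $\{\phi^\pm_j\}_{j\neq0}$; if $U^\pm$ has even degree then $A^\pm$ is even and $B^\pm$ odd, and conversely if $U^\pm$ is odd.

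Alongside Wick I would record one operator identity and one factorization principle. The identity is $\langle0|\phi^+_0\phi^-_0=\frac{\mathrm i}{2}\langle0|$: writing the zero modes through (\ref{charged-neutral}) gives $\phi^+_0\phi^-_0=\frac{\mathrm i}{2}-\mathrm i\,\psi_0\psi^\dagger_0$, and since $\langle0|\psi_0=0$ by the bra relations accompanying (\ref{vac_annihil_psi_j_r}) the second term drops. The factorization principle is that for $X$ a polynomial in $\{\phi^+_j\}_{j\in\mathbb{Z}}$ and $Y$ a polynomial in the nonzero modes $\{\phi^-_j\}_{j\neq0}$ one has $\langle0|XY|0\rangle=\langle0|X|0\rangle\langle0|Y|0\rangle$ (and symmetrically); this is immediate from the first contraction fact, or directly by normal ordering $Y$ and commuting the surviving creation/annihilation modes through $X$ — which costs only signs by (\ref{neutral-canonical}) — onto the vacua annihilated by (\ref{phi_vac_r}).

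The three cases then follow quickly. When $U^+$ and $U^-$ have opposite parity the total number of fermions is odd, so there is no perfect matching and the VEV vanishes (equivalently, apply the charge-parity involution that fixes $|0\rangle$ and negates every $\phi^\pm_j$). When both are even, I would expand $U^+U^-$ into its four pieces: the two pieces carrying a single zero mode vanish because that zero mode has no partner, the piece $\phi^+_0B^+\phi^-_0B^-$ reduces via $\langle0|\phi^+_0\phi^-_0=\frac{\mathrm i}{2}\langle0|$ and the factorization principle to a multiple of $\langle0|B^+|0\rangle=0$ ($B^+$ being odd), and the surviving piece factorizes as $\langle0|A^+|0\rangle\langle0|A^-|0\rangle=\langle0|U^+|0\rangle\langle0|U^-|0\rangle$. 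When both are odd the first three pieces vanish for the same partner-counting reasons (now $A^\pm$ is odd, so $\langle0|A^\pm|0\rangle=0$), and only $\langle0|\phi^+_0B^+\phi^-_0B^-|0\rangle$ survives, equal by the same reduction to $\frac{\mathrm i}{2}\langle0|B^+|0\rangle\langle0|B^-|0\rangle$; a one-line computation with $(\phi^\pm_0)^2=\frac12$ gives $\langle0|U^\pm\phi^\pm_0|0\rangle=\frac12\langle0|B^\pm|0\rangle$, so this is exactly $2\mathrm i\,\langle0|U^+\phi^+_0|0\rangle\langle0|U^-\phi^-_0|0\rangle$.

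I expect the only real difficulty to be the disciplined treatment of the fermionic zero modes $\phi^\pm_0$: they annihilate neither $|0\rangle$ nor $\langle0|$, they square to $\frac12$ rather than to $0$, and a single shared zero-mode degree of freedom ties the $\phi^+$ and $\phi^-$ sectors together, as in (\ref{phi_0_vac_r}). Keeping exact track of which operator is allowed to contract with which — in particular the identity $\langle0|\phi^+_0\phi^-_0=\frac{\mathrm i}{2}\langle0|$ and the rule that a zero mode pairs only with a zero mode — is what simultaneously forces the vanishing in the mixed-parity case and produces the precise normalization $2\mathrm i$ in the odd–odd case.
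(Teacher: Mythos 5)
Your proof is correct and complete. The paper itself states this factorization lemma without proof (it is imported from the earlier Harnad--Orlov papers \cite{HO1,HO2,HO3} and ultimately from \cite{JM}), and your route --- Wick's theorem together with the observation that the two neutral sectors couple only through the zero modes, implemented via the decomposition $U^\pm=A^\pm+\phi^\pm_0B^\pm$, the identity $\langle 0|\phi^+_0\phi^-_0=\frac{\mathrm i}{2}\langle 0|$, and the computation $\langle 0|U^\pm\phi^\pm_0|0\rangle=\frac12\langle 0|B^\pm|0\rangle$ which produces the normalization $2\mathrm i$ --- is exactly the standard argument and all constants check out (the only step left implicit, $\langle 0|U^\pm|0\rangle=\langle 0|A^\pm|0\rangle$ in the even case, follows from the same ``unpartnered zero mode'' observation you already use).
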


{\bf Currents.}
Define currents
\begin{gather*}
J_m=\sum_{i\in Z} \psi_i\psi^\dag_{i+m},\quad m=\pm 1,\pm 2,\pm 3,\dots ,
\\ 
 J^{{\rm B}\pm}_m=\frac 12 \sum_{i\in Z} (-)^i\phi^\pm_{-i-m}\phi^\pm_{i},\quad m\ \text{odd} .
\end{gather*}
We have from (\ref{charged-neutral})
\begin{gather}\label{J=J^B+J^B}
J_m = J^{{\rm B}+}_m + J^{{\rm B}-}_m,\quad m\ \text{odd} ,
\\
J_m = \sqrt{-1}\sum_{j\in\mathbb{Z}} \phi^+_{j-m}\phi^-_{-j},\quad m\ \text{even} . \nonumber
\end{gather}
The currents form the Heisenberg algebras as follows:
\begin{gather}\label{Hei}
[J_m,J_n]=m\delta_{m,n},\qquad m\neq 0 ,
\\ \label{Hei-B}
\bigl[J^{{\rm B}\pm}_m,J^{{\rm B}\pm}_n\bigr] = \frac 12 m\delta_{m+n,0},\qquad
\bigl[J^{{\rm B}+}_m,J^{{\rm B}-}_n\bigr]=0,\qquad m,n\ \text{odd}.
\end{gather}
One can see that
\begin{gather}\label{J-vac}
J_m|0\rangle = 0 =\langle 0|J_{-m},\qquad m>0.
\\
\label{J^B-vac}
J^{{\rm B}\pm}_m|0\rangle =0= \langle 0|J^{{\rm B}\pm}_{-m},\qquad m>0.
\end{gather}

{\bf Partitions for products of Fermi modes.}
For a given $\lambda=(\alpha|\beta)\in\Pa$, let us use the following notations
\begin{gather}\label{Psi-lambda}
\Psi_\lambda
 :=(-1)^{\sum_{j=1}^r\beta_j}(-1)^{\frac{1}{2}r(r-1)}
\psi_{\alpha_1}\cdots
\psi_{\alpha_r}\psi^\dag_{-\beta_1-1} \cdots \psi^\dag_{-\beta_r-1} ,\\
\label{Psi^dag-lambda}
\Psi^*_{\lambda} :=(-1)^{\sum_{j=1}^r\beta_j}(-1)^{\frac{1}{2}r(r-1)}
\psi_{-\beta_r-1}\cdots \psi_{-\beta_1-1} \psi^\dag_{\alpha_r} \cdots
\psi^\dag_{\alpha_1} .
\end{gather}
 Note that $|\lambda\rangle = \Psi_\lambda|0 \rangle$,
$\langle \lambda|=\langle 0|\Psi^*_\lambda$.

Let $\alpha=(\alpha_1,\dots,\alpha_k)\in\DP$, where $\alpha_k\ge 0$. (Thus, you pay attention
on the last part of $\alpha$ which can be equal to $0$.)
We also use the notation $m(\alpha)$ for the number $k$. (Notice that $m(\alpha)$ is
either $\ell(\alpha)$ (which is the number of non-zero parts of $\alpha$), or it is $\ell(\alpha)+1$).

Introduce
\begin{gather*}
\Phi^\pm_\alpha := 2^{\frac k2}\phi^\pm_{\alpha_1}\cdots \phi^\pm_{\alpha_{k}} ,\qquad
\Phi^\pm_{-\alpha} := (-1)^{\sum_{i=1}^k\alpha_i}2^{\frac k2}\phi^\pm_{-\alpha_k}\cdots \phi^\pm_{-\alpha_{1}} ,
\end{gather*}
where $k=m(\alpha)$.
Apart from the products $\Phi^\pm_\alpha$ the products $\Phi^\pm_{\hat{\alpha}}$ will be of
use where $\hat{\alpha}$ denotes the supplemented partition, see Definition \ref{supplemented-partition}.
Then we get
\begin{gather}\label{orthonormalitb_Psi}
\langle 0|\Phi^\pm_{-\alpha}\Phi^\pm_\beta |0\rangle =
\langle 0|\Phi^\pm_{-\hat{\alpha}}\Phi^\pm_{\hat{\beta}} |0\rangle =2^{\ell(\alpha)}\delta_{\alpha,\beta} ,
\\ \label{orthogonalitb_Phi}
\langle 0|\Phi^\pm_{-\alpha}\Phi^\mp_\beta |0\rangle =
\langle 0|\Phi^\pm_{-\hat{\alpha}}\Phi^\mp_{\hat{\beta}} |0\rangle =\pm\frac {\rm i}2 \delta_{\alpha,0}\delta_{\beta,0} .
\end{gather}

 In what follows, sometimes we shall write $\phi_{x}$ and $\Phi_{x}$ instead of $\phi^+_{x}$ and $\Phi^+_{x}$
 omitting superscripts $\pm$. We hope it will not produce a confusion.

{\bf Notations for products of Fermi fields.}
Next, we introduce
\begin{gather}\label{Psi-x}
\Psi(\ab/{-}\bb):=\prod_{i=1}^N \psi(a_i)\psi^\dag(-b_i)\\ \hphantom{\Psi(\ab/{-}\bb)}{}
\label{Psi-x1}
=(-1)^{\frac12 N(N-1)} \psi(a_1)\cdots \psi(a_N)\psi^\dag(-b_1) \cdots \psi^\dag(-b_N),
\\
\Psi(\bb^*/{-}\ab^*)= \prod_{i=1}^N \psi\bigl(-b_i^{-1}\bigr)\psi^\dag\bigl(a_i^{-1}\bigr) \nonumber\\ \hphantom{\Psi(\bb^*/{-}\ab^*)}{}
=(-1)^{\frac12 N(N-1)} \psi\bigl(-b_N^{-1}\bigr)\cdots \psi\bigl(-b_1^{-1}\bigr) \psi^\dag\bigl(a_N^{-1}\bigr) \cdots \psi^\dag\bigl(a_1^{-1}\bigr) \nonumber
\end{gather}
and
\begin{gather*}
\Phi^{\pm}(\zb^\pm):=
2^{\frac {N(\zb^\pm)}{2} }\phi^\pm\bigl(z^\pm_1\bigr)\cdots \phi^\pm\bigl(z^\pm_{N(\zb^\pm)}\bigr),\\
\Phi^{\pm}\bigl(\zb^{\pm*}\bigr)= 2^{\frac {N(\zb^\pm)}{2} }
\phi^\pm\bigl(z^{\pm*}_1\bigr)\cdots \phi^\pm\bigl(z^{\pm*}_{N(z^\pm)}\bigr),
\end{gather*}
where each $z^{\pm*}_i$ is defined in (\ref{x*y*}).

Now consider VEV of these products. By (\ref{pairwise-corr}) and by the Wick's rule (see Appendix \ref{wick_app}),
we have for (\ref{Psi-x})
\begin{gather}\label{VEV-Psi}
\langle 0|\Psi(\ab/{-}\bb)|0\rangle =
\det \bigl( \langle 0|\psi(a_i)\psi^\dag(-b_j) |0\rangle\bigr)_{i,j=1,\dots,N}
=\Delta(\ab/{-}\bb). 
\end{gather}

Then thanks to (\ref{Delta(-b^-1,-a^-1)=}), we get
\begin{gather*}
\langle 0|\Psi(\bb^*/{-}\ab^*)|0\rangle = (-1)^{N^2}\Delta(\ab/{-}\bb)\prod_{i=1}^N a_i b_i .
\end{gather*}

Similarly, from (\ref{VEV-phi-phi})
by Wick's rule, we obtain
 \begin{gather}\label{VEV-Phi}
\langle 0|\Phi^\pm(\hat{\zb})|0\rangle =
\Pf \bigl( \langle 0|\phi^\pm(z_i)\phi^\pm(z_j) |0\rangle\bigr)_{i,j=1,\dots,N(\hat{\zb})}
=\Delta^{\rm B}(\zb) .
\end{gather}
Then it follows that
 \begin{gather*}
\langle 0|\Phi^\pm(\hat{\zb}^*)|0\rangle =
\Delta^{\rm B}(\zb^*)
= \Delta^{\rm B}(\zb) .
\end{gather*}

\begin{rem}
One can make sure that
\begin{align}
\langle 0|\Psi(\ab/{-}\ab)|0\rangle=\Delta(\ab/{-}\ab)&{}=
\bigl(\Delta^{\rm B}(\ab)\bigr)^2 \prod_{j=1}^N (2a_j)^{-N} \nonumber
\\&{}=
\langle 0|\Phi^+(\ab)\Phi^-(\ab)|0\rangle \prod_{j=1}^N (2a_j)^{-N} .\label{VEV-Psi=VEV-PhiPhi}
\end{align}

The equality (\ref{VEV-Psi=VEV-PhiPhi}) can be obtained in a different way as follows:
\begin{gather*}
\psi(a_j)\psi^\dag(-a_j)=\frac{{\rm i}}{a_j}\phi^+(a_j)\phi^-(a_j)
\end{gather*}
one gets
\[
\Psi(\ab/{-}\ab)= (-1)^{\frac 12 N(N-1)} {\rm i}^N \Phi^+(\ab)\Phi^-(\ab)\prod_{j=1}^N \frac{1}{a_j} .
\]
Then we obtain (\ref{VEV-Psi=VEV-PhiPhi})
from (\ref{Psi-x}), (\ref{VEV-Psi}) and (\ref{pairwise-corr}), (\ref{VEV-phi-phi}), (\ref{VEV-Phi}).
\end{rem}

Let us introduce
\begin{gather*}
\tilde{\Psi}(\ab/{-}\bb) = \frac{\Psi(\ab/{-}\bb)}{\langle 0|\Psi(\ab/{-}\bb)|0\rangle}
\qquad \text{and}\qquad
\tilde{\Phi}^{\pm}(\hat{\zb})=\frac{\Phi^{\pm}(\hat{\zb})}{\langle 0|\Phi^\pm(\hat{\zb})|0\rangle}.
\end{gather*}
So we have
\be\label{normalization}
\langle 0|\tilde{\Psi}(\ab/{-}\bb)|0\rangle =
\langle 0|\tilde{\Phi}(\hat{\zb})|0\rangle =1.
\ee

In what follows apart of the products $\Phi^\pm(\zb)$, the products $\Phi^\pm(\hat{\zb})$ will be of
use where $\hat{\zb}$ denotes the supplemented coordinate set, see Definition \ref{supplemented-coordinate-sets}.

{\bf Partitions for currents.}
We introduce
\begin{gather}\label{J-Delta}
J_\Delta := \prod_{i=1}^{\ell(\Delta)} J_{\Delta_i},\qquad
J_{-\Delta} := \prod_{i=1}^{\ell(\Delta)} J_{-\Delta_i},\qquad \Delta\in\Pa,\\
J^{{\rm B}\pm}_\Delta := \prod_{i=1}^{\ell(\Delta)} J^{{\rm B}\pm}_{\Delta_i},\qquad
J^{{\rm B}\pm}_{-\Delta} := \prod_{i=1}^{\ell(\Delta)} J^{{\rm B}\pm}_{-\Delta_i},\qquad \Delta\in\OP .\nonumber
\end{gather}

\subsection{Bosonization formulas}

The nice part of the classical integrability worked out by Kyoto school is a number of bosonization formulas.
Following \cite{DJKM1,DJKM2}, we consider
\begin{gather}\label{gamma}
\hat{\gamma}(\pb)={\rm e}^{\sum_{m>0} \frac 1m p_{m}J_{m}},\qquad
\hat{\gamma}^\dag(\pb)={\rm e}^{\sum_{m>0} \frac 1m p_{m}J_{-m}},\\
\label{gamma-B}
\hat{\gamma}^{{\rm B}\pm}\bigl(2{\pb^{\rm B}}\bigr)={\rm e}^{\sum_{m>0,\text{odd}} \frac 2m p^{\rm B}_{m}J^{{\rm B}\pm}_{-m}},\qquad
\hat{\gamma}^{\dag{\rm B}\pm}\bigl(2{\pb^{\rm B}}\bigr)={\rm e}^{\sum_{m>0,\text{odd}} \frac 2m p^{\rm B}_{m}J^{{\rm B}\pm}_{-m}},
\end{gather}
where $\pb=(p_{1},p_{2},p_{3},\dots)$ and ${\pb^{\rm B}}=\bigl(p^{\rm B}_{1},p^{\rm B}_{3},p^{\rm B}_{5},\dots\bigr)$
a given sets of parameters.

For our purposes,
let us introduce the following notations:
\be\label{pb(xb/-yb)}
\pb(\ab/{-}\bb)=(p_{ 1}(\ab/{-}\bb),p_{ 2}(\ab/{-}\bb),p_{ 3}(\ab/{-}\bb),\dots),
\ee
where
\[
p_{m}(\ab/{-}\bb)=\sum_{i=1}^{N}(a_i^{m} -(-b_i)^{m} ),
\]
and also
\be\label{tilde-pb(xb/-yb)}
{\pb^{\rm B}}(\zb)=\bigl(p^{\rm B}_{ 1}(\zb),p^{\rm B}_{ 3}(\zb),p^{\rm B}_{ 5}(\zb),\dots\bigr),
\qquad \text{where}
\quad
p^{\rm B}_{m}(\zb)=\sum_{i=1}^{N(\zb)} z_i^m .
\ee

\begin{rem}\label{when-p=p'} Let us note that for each $m$ and for all possible $\ab$, $\tilde{\ab}$, we get
\[
p_{2m}(\ab/{-}\ab)=0=\tilde{p}_{2m}(\tilde{\ab}/{-}\tilde{\ab}).
\]
\end{rem}

The following bosonization formulas are well known:
\begin{gather*}
\psi(a)\psi^\dag(b)= \frac{1}{a-b}
{\rm e}^{\sum_{m>0} \frac 1m (b^{-m}-a^{-m})J_{-m}}
{\rm e}^{\sum_{m>0} \frac 1m (a^{m}-b^{m})J_{m}},
\\
\phi(a)\phi(b)= \frac 12 \frac{a-b}{a+b}
{\rm e}^{-\sum_{m>0,\text{odd}} \frac 2m (b^{-m}+a^{-m})J^{\rm B}_{-m}}
 {\rm e}^{\sum_{m>0,\text{odd}} \frac 2m (a^{m}+b^{m})J^{\rm B}_{m}} ,
\end{gather*}
therefore one can write
\begin{gather*}
\hat{\gamma}^\dag(\pb(\bb^*/{-}\ab^*))\hat{\gamma}(\pb(\ab/{-}\bb))=\tilde{\Psi}^*(\ab/{-}\bb),\\
\hat{\gamma}^\dag(\pb(\ab/{-}\bb))\hat{\gamma}(\pb(\bb^*/{-}\ab^*))=\tilde{\Psi}(\ab/{-}\bb),
\\ 
 \hat{\gamma}^{\dag{\rm B}\pm}\bigl(2{\pb^{\rm B}}(\zb^*)\bigr) \hat{\gamma}^{{\rm B}_\pm}\bigl(2{\pb^{\rm B}}(\zb)\bigr)=\tilde{\Phi}^{*\pm}(\hat{\zb}),
 \qquad
 \hat{\gamma}^{\dag{\rm B}\pm}\bigl(2{\pb^{\rm B}}(\zb)\bigr) \hat{\gamma}^{{\rm B}_\pm}\bigl(2{\pb^{\rm B}}(\zb^*)\bigr)=\tilde{\Phi}^{\pm}(\hat{\zb}).
\end{gather*}

These relations are in agreement with (\ref{normalization}) because of (\ref{J-vac}) and (\ref{J^B-vac}).

From above and from (\ref{J-vac}) and (\ref{J^B-vac}), we get
\begin{gather}\label{bosonization}
\langle 0|\hat{\gamma}(\pb(\ab/{-}\bb))=\langle 0|\tilde{\Psi}(\bb^*/{-}\ab^*),\qquad
\hat{\gamma}^\dag(\pb(\ab/{-}\bb))|0\rangle =\tilde{\Psi}(\ab/{-}\bb) |0\rangle,
\\
\label{bosonization-B}
\langle 0| \hat{\gamma}^{{\rm B}_\pm}\bigl(2{\pb^{\rm B}}(\zb)\bigr) =\langle 0|\tilde{\Phi}^{\pm}(\hat{\zb}^*),\qquad
\hat{\gamma}^{\dag{\rm B}\pm}\bigl(2{\pb^{\rm B}}(\zb)\bigr)|0\rangle = \tilde{\Phi}^{\pm}(\hat{\zb})|0\rangle .
\end{gather}

\subsection{Fermions and symmetric functions}

The bosonization formulas are also manifested by the representation of the known symmetric polynomials
as fermionic vacuum expectation values. This is an interesting part of the soliton theory.

{\bf Power sums and Schur functions.}
For preliminary information about symmetric functions, we recommend the textbook \cite{Mac1}.
This is about polynomial functions symmetric in variables $a_1,\dots,a_N$, the widely known examples
are the so-called power sums (or, the same, Newton sums) labels by a multiindex
$\lambda=(\lambda_1,\dots,\lambda_n)\in\Pa$:
\begin{gather*}
\pb_\lambda(\ab):=p_{\lambda_1}p_{\lambda_2}\cdots p_{\lambda_n},\qquad {\rm where}\quad
p_m=\sum_{i=1}^N a_i^m
\end{gather*}
or the Schur functions labeled by $\lambda=(\lambda_1,\dots,\lambda_n)\in\Pa$:
\[
s_\lambda(\ab)=\frac{\det \bigl( a_i^{\lambda_j-j+N}\bigr)_{i,j\le N}}{\Delta(\ab)}
\]
(where it is supposed that $n \le N$). For the empty partition, we put $\pb_0(\ab)=0$ and $s_0(\ab)=1$.
The power sums and the Schur functions are related
\[
\pb_\lambda = \sum_{\mu\atop |\mu|=|\lambda|} \chi_\mu(\lambda) s_\mu,
\]
where the coefficients $ \chi_\mu(\lambda)$ are very important in many problems. $\chi_\mu(\lambda)$ has the
meaning of the character of the irreducible representation $\mu$ of the permutation group $S_d$, $d=|\mu|$
evaluated on the cycle class $\lambda$, see for instance the textbook \cite{Mac1}.

On the space of polynomial symmetric functions denoted by $\Lambda_N$ the scalar product is given~by%
\be\label{pp}
\langle\pb_\lambda,\pb_\mu\rangle = z_\lambda \delta_{\lambda,\mu}=
\langle 0|J_{-\lambda} J_\mu |0\rangle
,\qquad \lambda,\mu\in \Pa,
\ee
where $z_\lambda$ was defined in (\ref{z-lambda}). Notice that the relation does not
include any $N$-dependence. The Schur functions form the orthonormal basis there
\be\label{ss}
\langle s_\lambda,s_\mu \rangle=
\delta_{\lambda,\mu}
=\langle 0|\Psi^*_\lambda\Psi_\mu|0\rangle
,\qquad \lambda,\mu\in\Pa .
\ee
The last equalities in both formulas (\ref{pp}) and (\ref{ss}) is the result of the direct computation
(using respectively (\ref{Hei}), (\ref{J-vac}) and (\ref{charged-canonical}), (\ref{vac_annihil_psi_j_r})),
however it is not an occasion: it is the manifestation
of the boson-fermion correspondence which is very popular in 2D physics.

In power case, where the sum variables are labeled with only odd parts, namely if
$\lambda=(\lambda_1,\dots,\lambda_k)\in\OP$, we denote $\pb_\lambda$ by
\smash{$\pb^{\rm B}_\lambda=\prod_{i=1}^{\ell(\lambda)}p_{\lambda_i}^{\rm B}$}.
These symmetric functions form the subspace of $\Lambda_N$ denoted by $\Gamma$ in \cite[Part~III]{Mac1}.
There is the following natural scalar product on~$\Gamma$:%
\be\label{pBpB}
\langle\pb^{\rm B}_\lambda,\pb^{\rm B}_\mu\rangle_{\rm B} =
2^{-\ell(\lambda)} z_{\lambda}\delta_{\lambda,\mu}
=
\langle 0|J^{\rm B}_{-\lambda} J^{\rm B}_\mu |0\rangle
,\qquad \lambda,\mu\in \OP .
\ee
Also
\be\label{QQ}
\langle Q_\alpha,Q_\beta\rangle_{\rm B}=
2^{\ell{(\alpha)}}\delta_{\alpha,\beta}
=\langle 0|\Phi_{-\alpha}\Phi_{\beta}|0\rangle =\langle 0|\Phi_{-\hat{\alpha}}\Phi_{\hat{\beta}}|0\rangle
,\qquad \alpha,\beta \in\DP,
\ee
 see \cite{Mac1}. The validity of the second equalities in (\ref{pBpB}) and in (\ref{QQ}) is obtained
 by the direct evaluation with the help respectively of (\ref{Hei-B}), (\ref{J^B-vac}) and of
 (\ref{neutral-canonical})--(\ref{phi_0_vac_r}).

We also have
\begin{gather*}
s_\lambda\bigl({\bf J}^\dag\bigr)|0\rangle =\Psi_\lambda |0\rangle,\qquad \langle 0|s_\lambda({\bf J})=\langle 0|\Psi^*_\lambda,\qquad
\lambda=(\alpha|\beta)\in\Pa,
\\
Q_\mu\bigl({\bf J}^{\dag{\rm B}\pm}\bigr)|0\rangle = \Phi_{\hat\mu}|0\rangle,
\qquad \langle 0|Q_\mu\bigl({\bf J}^{{\rm B}\pm}\bigr) = \langle 0|
\Phi_{-\hat{\mu}},\qquad \mu\in\DP,
\end{gather*}
where both Schur functions are considered as polynomials in power sum variables and the role
of power sums play respectively ${\bf J}=(J_1,J_2,J_3,\dots)$, ${\bf J}^\dag=\bigl(J^\dag_1,J^\dag_2,J^\dag_3,\dots\bigr)$ and
${\bf J}^{{\rm B}\pm}=\bigl(J^{{\rm B}\pm}_1,J^{{\rm B}\pm}_3,J^{{\rm B}\pm}_5,\dots\bigr)$,
${\bf J}^{\dag{\rm B}\pm}=\bigl(J^{\dag{\rm B}\pm}_1,J^{\dag{\rm B}\pm}_3,J^{\dag{\rm B}\pm}_5,\dots\bigr)$
(compare to
\cite{TauMM}).

We get
\begin{align}
\langle 0|\hat{\gamma}(\pb)\hat{\gamma}^\dag(\tilde{\pb})|0\rangle ={\rm e}^{\sum_{m>0}\frac 1m p_m\tilde{p}_m} =
\sum_{\lambda\in\Pa}\frac{1}{z_\lambda}\pb_\lambda\tilde{\pb}_\lambda
 =\sum_{\lambda\in\Pa} s_\lambda(\pb)s_\lambda(\tilde{\pb}),\label{vacuum-2-KP'}
\end{align}
which can be derived either from (\ref{Hei}) or also from (\ref{pp}).

From (\ref{Hei-B}),
\begin{align}
\langle 0|\hat{\gamma}^{{\rm B}\pm}\bigl(2{\pb^{\rm B}}\bigr)\hat{\gamma}^{\dag{\rm B}\pm}\bigl(2\tilde{\pb}^{\rm B}\bigr)|0\rangle
& ={\rm e}^{\sum_{m>0,\text{odd}}\frac 2m p_m\tilde{p}_m}
\nonumber\\
\label{vacuum-2-BKP'}
& =
\sum_{\lambda\in\OP}2^{\ell(\lambda)}\frac{1}{z_\lambda}\pb^{\rm B}_\lambda\tilde{\pb}^{\rm B}_\lambda =\sum_{\mu\in\DP} 2^{-\ell(\mu)} Q_\mu\bigl({\pb^{\rm B}}\bigr)Q_\mu\bigl(\tilde{\pb}^{\rm B}\bigr) .
\end{align}

Let us write down the following equalities:
\be\label{coherent_states}
\langle 0|\hat{\gamma}(\pb) = \sum_{\lambda\in\Pa} s_\lambda(\pb)\langle 0|\Psi^\dag_{\lambda} ,
\qquad
\hat{\gamma}^\dag(\tilde{\pb})|0\rangle =
\sum_{\lambda\in\Pa}\Psi_\lambda|0\rangle s_\lambda(\tilde{\pb}),
\ee
which, thanks to (\ref{orthonormalitb_Psi}), is equivalent to the Sato formula (\ref{Schur}) below, and
\begin{gather*}
\langle 0|\hat{\gamma}^{{\rm B}}\bigl({\pb^{\rm B}}\bigr)=
\sum_{\mu\in\DP}2^{-\ell(\mu)} Q_\mu\bigl({\pb^{\rm B}}\bigr)\langle 0|\Phi_{-\mu}
,\qquad
\hat{\gamma}^{\dag{\rm B}}\bigl(\tilde{\pb}^{\rm B}\bigr)|0\rangle =
\sum_{\mu\in\DP}2^{-\ell(\mu)}\Phi_\mu|0\rangle Q_\mu\bigl(\tilde{\pb}^{\rm B}\bigr) .
\end{gather*}
which, thanks to (\ref{orthogonalitb_Phi}), is equivalent to the relation found in \cite{You}, see (\ref{You_Q}) below.

\subsection{Fermions and tau functions: KP and BKP cases}\label{Fermions_and}

There are different treatments of the notion of tau function.
Here we use the fermionic approach to tau functions.

{\bf $\boldsymbol{\tau}$ functions as vacuum expectation values (VEVs).}\footnote{We give one of possible
definitions of the tau function. Here we want to avoid the notions of the Plucker
and the Cartan coordinates on Sato Grassmannian and isotropic Grassmannian which are out of real use
in the present work.}
According to \cite{JM}, KP tau functions can be presented in form of the following vacuum expectation value (VEV)
\be\label{tau-KP}
\tau(\pb|\hat{g})=\langle 0|\hat{\gamma}(\pb)\hat{g} |0\rangle,
\ee
where $\hat{g}$ is an exponential of a bilinear in $\{\psi_i\}$ and $\big\{\psi^\dag_i\big\}$ expression
\be\label{g=exp}
\hat{g} = {\rm e}^{\sum_{i,j} A_{ij} \psi_i\psi^\dag_j},
\ee
where $A$ is an infinite matrix which is treated as
$A\in\gl(\infty)$ (see Section~\ref{free_fermions}) and where
$\hat{\gamma}(\pb)$ is given by~(\ref{gamma}).

\begin{rem}\label{g_entries_exist}
Throughout the text, we assume that the so-called matrix elements
\be\label{g_mu_lambda}
\hat{g}_{\mu,\lambda}:=\langle 0|\Psi^\dag_\mu \hat{g} \Psi_\lambda |0\rangle
\ee
do exist for each pair $\mu,\lambda\in\Pa$. This assumption allows to identify tau functions
with their Taylor series in power sums by substituting (\ref{coherent_states}) and (\ref{g_mu_lambda})
into (\ref{tau-KP}).
\end{rem}

The tau function depends on the choice of $\hat{g}$ (the same, on the choice of the matrix $A$)
and on the infinite set of parameters
$\pb=(p_1,p_2,p_3,\dots)$ (the power sum variables).

\begin{rem}\label{adjoint}
Tau function (\ref{tau-KP}) can also be written as
\[
\tau(\pb|\hat{g})= \langle 0|\hat{g}^\dag\hat{\gamma}^\dag(\pb) |0\rangle,
\qquad \text{where}\quad
\hat{g}^\dag = {\rm e}^{\sum_{i,j} A_{ij} \psi_j\psi^\dag_i}
\]
and $\hat{\gamma}^\dag(\pb)$ is given by (\ref{gamma}).
\end{rem}

The BKP tau function depends on the set of odd-labeled power sums ${\pb^{\rm B}}=(p_1,p_3,\dots)$ and
can be presented as
\be\label{tau-BKP}
\tau^{B}\bigl(2{\pb^{\rm B}}|\hat{h}^\pm\bigr)=\langle 0|\hat{\gamma}^{\rm B\pm}\bigl(2{\pb^{\rm B}}\bigr)\hat{h}^\pm |0\rangle,
\ee
where $\hat{h}^\pm$ is an exponential of a quadratic in $\big\{\phi^\pm_i\big\}$ expression
\be\label{h=exp}
 \hat{h}^\pm ={\rm e}^{\sum_{i,j} B_{ij}\phi^\pm_i \phi^\pm_j },
\ee
where ${B}^\pm$ is infinite antisymmetric matrices
and where $\hat{\gamma}^{\rm B\pm}$ is given by (\ref{gamma-B}).

\begin{rem}\label{h_entries_exist}
We assume that the matrix elements
\[
\hat{h}^\pm_{\mu,\nu}:=\langle 0|\Phi^\pm_{-\mu} \hat{h} \Phi_\nu |0\rangle
\]
do exist for each pair $\mu,\nu\in\DP$.
\end{rem}

\begin{rem}
 The set of $t_m=\frac 1m p_m$, $m=1,2,3,\dots $, is called the set of the {\em KP
 higher times} and
 the set of $t_m^{\rm B}=\frac 2m p_m^{\rm B}$, $m=1,3,5,\dots $ is called the set of the {\em BKP
 higher times}.
\end{rem}

\begin{rem}\label{adjointB}
Tau function (\ref{tau-BKP}) can be also written as
\[
\tau^{\rm B}({\pb^{\rm B}}|\hat{h}^\pm)=
\langle 0|\bigl(\hat{h}^\pm\bigr)^\dag\hat{\gamma}^{\dag{\rm B}\pm}\bigl({\pb^{\rm B}}\bigr) |0\rangle,
\qquad \text{where}\quad
\bigl(\hat{h}^\pm\bigr)^\dag = {\rm e}^{\sum_{i,j} B_{ij} \phi^\pm_j\phi^\pm_i}
\]
and $\hat{\gamma}^{\dag{\rm B}\pm}\bigl({\pb^{\rm B}}\bigr)$ is given by (\ref{gamma-B}).
\end{rem}

In the basic works of Kyoto school (for instance, in \cite{JM}), the following relation was proved.
\begin{Theorem}[\cite{DJKM2}]
Under conditions
\begin{gather*}
\hat{g}=\hat{h}^+\hat{h}^- \in B_\infty
\end{gather*}
$($as written down in Lemma $\ref{g=h^+h^-Lemma})$ and also under
\[
{\pb^{\rm B}}=\bigl(p^{\rm B}_1,p^{\rm B}_3,p^{\rm B}_5,\dots\bigr),
\qquad
\pb':=(p_1,0,p_3,0,p_5,0,\dots),
\]
which are related by \eqref{p_B}--\eqref{p_p^B},
the relation between KP tau function \eqref{tau-KP} and $($any the both$)$ BKP tau functions \eqref{tau-BKP} is as follows:
\[
\bigl( \tau^{\rm B}\bigl({\pb^{\rm B}}|\hat{h}^\pm\bigr) \bigr)^2=\tau\bigl(\pb'|\hat{h}^+\hat{h}^-\bigr).
\]
\end{Theorem}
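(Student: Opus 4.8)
The plan is to factor the KP current exponential $\hat\gamma(\pb')$ into the product of the two BKP current exponentials, then to use the Clifford-group factorization $\hat g=\hat h^+\hat h^-$ together with the vacuum factorization Lemma \ref{factorization_lemma}. First I would restrict $\hat\gamma(\pb)$ of \eqref{gamma} to the locus $\pb=\pb'=(p_1,0,p_3,0,\dots)$, so that only odd modes survive: $\hat\gamma(\pb')=\exp\bigl(\sum_{m>0,\text{ odd}}\tfrac1m p_m J_m\bigr)$. Using the decomposition $J_m=J^{{\rm B}+}_m+J^{{\rm B}-}_m$ for odd $m$ from \eqref{J=J^B+J^B}, together with the commutativity $[J^{{\rm B}+}_m,J^{{\rm B}-}_n]=0$ of \eqref{Hei-B}, the exponential splits as $\hat\gamma(\pb')=\exp\bigl(\sum\tfrac1m p_m J^{{\rm B}+}_m\bigr)\exp\bigl(\sum\tfrac1m p_m J^{{\rm B}-}_m\bigr)$. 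Substituting $p_m=2p^{\rm B}_m$ for odd $m$ from \eqref{p_p^B} and comparing with \eqref{gamma-B} identifies each factor as a BKP current exponential, yielding $\hat\gamma(\pb')=\hat\gamma^{{\rm B}+}\bigl(2\pb^{\rm B}\bigr)\,\hat\gamma^{{\rm B}-}\bigl(2\pb^{\rm B}\bigr)$.

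Next I would insert this into the KP VEV \eqref{tau-KP} and apply the hypothesis $\hat g=\hat h^+\hat h^-$ from Lemma \ref{g=h^+h^-Lemma}, obtaining $\tau(\pb'|\hat g)=\langle 0|\hat\gamma^{{\rm B}+}\bigl(2\pb^{\rm B}\bigr)\hat\gamma^{{\rm B}-}\bigl(2\pb^{\rm B}\bigr)\hat h^+\hat h^-|0\rangle$. The operators $\hat\gamma^{{\rm B}+}$ and $\hat h^+$ are even-degree elements of the $\{\phi^+_i\}$ subalgebra, while $\hat\gamma^{{\rm B}-}$ and $\hat h^-$ are even-degree elements of the $\{\phi^-_i\}$ subalgebra; since each $\phi^+$ anticommutes with each $\phi^-$ by \eqref{neutral-canonical}, an even-degree element of the first commutes with an even-degree element of the second (the accumulated sign is $(-1)^{\text{even}\cdot\text{even}}=+1$). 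Hence I may move $\hat\gamma^{{\rm B}-}$ to the right past $\hat h^+$, regrouping the product as $\bigl(\hat\gamma^{{\rm B}+}\hat h^+\bigr)\bigl(\hat\gamma^{{\rm B}-}\hat h^-\bigr)=U^+U^-$, with $U^+$ even in $\phi^+$ and $U^-$ even in $\phi^-$. The factorization Lemma \ref{factorization_lemma} (the both-even case) then gives $\langle 0|U^+U^-|0\rangle=\langle 0|U^+|0\rangle\langle 0|U^-|0\rangle=\tau^{\rm B}\bigl(2\pb^{\rm B}|\hat h^+\bigr)\,\tau^{\rm B}\bigl(2\pb^{\rm B}|\hat h^-\bigr)$ by the definition \eqref{tau-BKP}.

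It remains to see that the two factors coincide. Here I would invoke that $\hat h^+$ and $\hat h^-$ are built from the \emph{same} antisymmetric matrix $B$ (Lemma \ref{g=h^+h^-Lemma}), and that the assignment $\phi^+_i\mapsto\phi^-_i$ is an isomorphism of the two neutral-fermion algebras which preserves the vacuum pairing, since the relations \eqref{neutral-canonical} and the pairwise VEVs \eqref{VEV-phi-phi} are identical for the $+$ and $-$ families. Consequently $\tau^{\rm B}\bigl(2\pb^{\rm B}|\hat h^+\bigr)=\tau^{\rm B}\bigl(2\pb^{\rm B}|\hat h^-\bigr)$, the product becomes a square, and recalling \eqref{p_p^B} this is exactly $\bigl(\tau^{\rm B}(\pb^{\rm B}|\hat h^\pm)\bigr)^2=\tau(\pb'|\hat h^+\hat h^-)$. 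The step I expect to require the most care is the regrouping: one must confirm that the parity argument really leaves no residual sign, and that the normal-ordering constants in \eqref{h=exp} versus those in Lemma \ref{g=h^+h^-Lemma} are handled consistently, so that no spurious scalar factor contaminates the factorized VEV.
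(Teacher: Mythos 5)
Your proposal is correct and follows essentially the same route the paper indicates: factor $\hat{\gamma}(\pb')=\hat{\gamma}^{{\rm B}+}\bigl(2{\pb^{\rm B}}\bigr)\hat{\gamma}^{{\rm B}-}\bigl(2{\pb^{\rm B}}\bigr)$ via $J_m=J^{{\rm B}+}_m+J^{{\rm B}-}_m$ and the commutativity in \eqref{Hei-B}, regroup with $\hat{g}=\hat{h}^+\hat{h}^-$, and apply the factorization Lemma \ref{factorization_lemma} in the even--even case. Your closing observation that the two BKP factors coincide because $\hat{h}^\pm$ are built from the same antisymmetric matrix $B$ (via the $\phi^+\leftrightarrow\phi^-$ isomorphism) is exactly the ingredient the paper uses implicitly when it writes the product as a square.
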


{\bf The Schur and the projective Schur functions as tau functions -- a remark.}

\begin{rem}\label{g,h,Psi,Phi} For any given $\lambda=(\alpha|\beta)\in\Pa$ and $\alpha\in\DP$,
there exists such $\hat{g}=\hat{g}(\lambda)$ and such $\hat{h}^\pm=\hat{h}^\pm(\alpha)$ that
\begin{gather*}
\hat{g}(\lambda)|0\rangle =\Psi_\lambda|0\rangle ,
\qquad
\hat{h}^\pm(\alpha)|0\rangle =\Phi^\pm_{\hat{\alpha}} |0\rangle .
\end{gather*}
This is the basic fact of the Sato theory. For a given $\lambda=(\alpha|\beta)$ and $\alpha$, one can present both
$\hat{g}(\lambda)$ and $\hat{h}^\pm(\alpha)$ in the explicit way, see appendix.
\end{rem}

The wonderful observation by Sato and his school \cite{JM,Sa} is the fermionic formula for
the Schur polynomial
\be\label{Schur}
s_\lambda(\pb)=\langle 0|\hat{\gamma}(\pb)\Psi_\lambda|0\rangle =
\langle 0|\Psi^*_\lambda\hat{\gamma}^\dag(\pb)|0\rangle .
\ee
In the BKP case, the similar formula was found in \cite{You}
\be\label{You_Q}
Q_\alpha\bigl({\pb^{\rm B}}\bigr)=\langle 0|\hat{\gamma}^{{\rm B}\pm}\bigl(2{\pb^{\rm B}}\bigr)\Phi^\pm_{\hat{\alpha}}|0\rangle
= \langle 0|\Phi^\pm_{-\hat{\alpha}}\hat{\gamma}^{\dag B\pm}\bigl(2{\pb^{\rm B}}\bigr)|0\rangle .
\ee

{\bf Schur functions $\boldsymbol{s_\lambda(\ab/{-}\bb)}$ and $\boldsymbol{Q_\mu(\zb)}$ and products of Fermi fields $\boldsymbol{\Psi(\ab/{-}\bb)}$,
$\boldsymbol{\Phi(\zb)}$.}
With the help of (\ref{bosonization}) and (\ref{bosonization-B}),
we rewrite formulas (\ref{Schur}) and (\ref{You_Q}) for both Schur functions as follows:
\be\label{s-PsiPsi}
s_\lambda(\pb(\ab/{-}\bb))=\langle 0|\tilde{\Psi}(\bb^*/{-}\ab^*)\Psi_\lambda|0\rangle
=\langle 0|\Psi^*_\lambda\tilde{\Psi}(\ab/{-}\bb)|0\rangle =:s_\lambda(\ab/{-}\bb),
\ee
where $\lambda=(\alpha|\beta)$ and
\begin{gather*}
Q_\alpha({\pb^{\rm B}}(\zb))=\langle 0|\tilde{\Phi}(\hat{\zb}^*)\Phi_{\hat{\alpha}}|0\rangle
=\langle 0|\Phi_{-\hat{\alpha}}\tilde{\Phi}(\hat{\zb})|0\rangle =:Q_\alpha(\zb) .
\end{gather*}

\begin{rem}
By the limiting procedure, we obtain from (\ref{s-PsiPsi})
\begin{gather}\label{s-Ff}
s_\lambda(\ab)\Delta(\ab)=\langle -N|\psi^\dag\bigl(a_1^{-1}\bigr)\cdots \psi^\dag\bigl(a_N^{-1}\bigr)\Psi_\lambda |0\rangle =
\langle 0|\Psi^*_\lambda \psi(a_1)\cdots \psi(a_N) |N\rangle,
\end{gather}
where to get the first equality we send $b_1> \cdots >b_N \to\infty$ in the second member of~(\ref{s-PsiPsi})
and to get the second equality in (\ref{s-Ff})we send $ b_N < \cdots <b_1\to 0$ in the last member of~(\ref{s-PsiPsi}).
\end{rem}

We rewrite (\ref{s-PsiPsi}) as
\[
s_\lambda(\ab/{-}\bb)\Delta(\ab/{-}\bb)=
\langle 0|\Psi(\bb^*/{-}\ab^*)\Psi_\lambda |0\rangle =(-1)^{N^2}
\langle 0|\Psi^*_\lambda \Psi(\ab/{-}\bb) |0\rangle \prod_{i=1}^N a_i^{-1}b_i^{-1} .
\]

\subsection[Two-sided KP tau function, two-sided BKP tau function, lattice KP tau function, lattice BKP tau function]{Two-sided KP tau function, two-sided BKP tau function,\\ lattice KP tau function, lattice BKP tau function}\label{2-sided,lattice}

All objects which we need are tau functions introduced in works of Kyoto school.
To be more precise, in the terminology concerning tau functions which we use, let us give definitions.

{\bf Two-sided KP tau functions. Lattice KP tau functions.}

\begin{de}
We call
\be\label{2-KP}
\tau(\pb,\tilde{\pb}|\hat{g})=\langle 0|\,\hat{\gamma}(\pb) \,\hat{g}\, \hat{\gamma}^\dag(\tilde{\pb})\,|0\rangle
\ee
the two-sided KP tau function. Here $\pb=(p_{1},p_{2},p_{ 3},\dots)$
and $\tilde{\pb}=(\tilde{p}_{1},\tilde{p}_{2},\tilde{p}_{ 3},\dots)$
are parameters. We call $t_m=\frac 1m p_m$ and $\tilde{t}_m=\frac 1m \tilde{p}_m$ the two-sided KP higher times.
\end{de}
This notion of the KP higher times is the same as it was introduced in \cite{DJKM2,MJD} and is common.
Two-sided KP tau function is usual (one-sided) KP tau function (\ref{tau-KP}) with respect to the higher times $\tb=\pb$
and at the same time it is one-sided KP tau function with respect to the higher time variables
$\tilde{\tb}=\tilde{\pb}$, see Remark \ref{adjoint}.\footnote{Tau function (\ref{2-KP}) can be also considered as the Toda lattice tau function
$\tau_N\bigl(\tb,\tilde\tb\bigr)$ \cite{JM,Takasaki_Schur,Takebe_Schur}, where $N=0$.}

\begin{de}
We call
\be
S_\lambda(\pb |\hat{g}):= \langle 0 |\hat{\gamma}(\pb) \hat{g} \Psi_\lambda| 0 \rangle
\label{largeS}
\ee
 lattice KP tau function labeled with a partition $\lambda\in\Pa$.
\end{de}

We do not assume that $S_\lambda(\pb)$ is a polynomial function in the variables $\pb$.
For the polynomial case, see \cite{HO2}.

\begin{rem}
The tau function (\ref{largeS}) can be considered as a discrete version of the two-way tau function
(\ref{2-KP}), where the dependence on continuous variables $p_1,p_2,\dots$ is replaced
by discrete variables that are parts of partitions. The tau function (\ref{largeS}) solves
the same discrete equations (which play the role of Hirota equations) with respect to the Frobenius parts of $\lambda$ as the Schur function $s_\lambda$ (Plucker relations).
\end{rem}

With the assumption of Remark \ref{g_entries_exist}, one can write both tau function as Taylor series
in power sum variables as follows:
\begin{gather}\label{Takasaki}
\tau(\pb,\tilde{\pb}|\hat{g})=\sum_{\mu,\lambda\in\Pa} \hat{g}_{\mu,\lambda}s_\mu(\pb)s_\lambda(\tilde{\pb}),
\\ \label{S_Taylor}
S_\lambda(\pb |\hat{g})=\sum_{\mu\in\Pa} \hat{g}_{\mu,\lambda}s_\mu(\pb),
\end{gather}
which results from (\ref{coherent_states}).
The series (\ref{Takasaki}) was written down in \cite{Takasaki_Schur,Takebe_Schur}
in the context of the study of the Toda lattice hierarchy.

For a given $\hat{g}$, the relation between two-sided and lattice KP tau functions is as follows:
\be\label{2-KP_lattice-KP}
\tau(\pb,\tilde{\pb}|\hat{g})=\sum_{\lambda\in\Pa} S_\lambda(\pb|\hat{g}) s_\lambda(\tilde{\pb}).
\ee

In the case $\hat{g}=1$, the last equality yields (see (\ref{vacuum-2-KP'}))
\[
\tau(\pb,\tilde{\pb}|\hat{g}=1)=\sum_{\lambda\in\Pa} s_\lambda(\pb) s_\lambda(\tilde{\pb})=
{\rm e}^{\sum_{m>0}\frac 1m p_m\tilde{p}_m},
\]
which is known as the Cauchy--Littlewood identity and occurs to be the simplest example of the two-sided KP tau function.

\begin{rem}
We also have
\begin{gather*}\label{Takasaki-generalization}
\tau(\pb,\tilde{\pb}|\hat{g}_1\hat{g}\hat{g}_2)=\sum_{\mu,\lambda\in\Pa} \hat{g}_{\mu,\lambda}S_\mu(\pb|\hat{g}_1)S_\lambda(\tilde{\pb}|\hat{g}_2),
\end{gather*}
which is a generalization of the Takasaki series \cite{Takasaki_Schur}.
\end{rem}

\begin{rem}\label{Plucker}
In the Grassmannian approach to the tau functions \cite{Sa}, the coefficients $S_\lambda(\pb|\hat{g})$
in (\ref{2-KP_lattice-KP}) have the meaning of the Plucker coordinates $\pi_\lambda(\hat{g})$ for the
one-side KP tau function, see Remark \ref{adjoint} where one should insert $\tilde{\pb}$ instead of $\pb$.
The notation was used in \cite{HO2} instead of $S_\lambda(\pb|\hat{g})$, see Remark \ref{notational_remark}.
\end{rem}

{\bf Two-sided BKP tau functions. Lattice BKP tau functions.}
\begin{de}
We call
\begin{gather*}
\tau\bigl(2{\pb^{\rm B}},2\tilde{\pb}^{\rm B}|\hat{h}^\pm\bigr)=
\langle 0| \hat{\gamma}^{\rm B\pm}\bigl(2{\pb^{\rm B}}\bigr) \hat{h}^\pm
\hat{\gamma}^{\dag\rm B\pm}\bigl(2\tilde{\pb}^{\rm B}\bigr) |0\rangle
\end{gather*}
the two-sided BKP (2-BKP) tau function.
Here ${\pb^{\rm B}}=(p_{1},p_{3},p_{ 5},\dots)$
and $\tilde{\pb}^{\rm B}=(\tilde{p}_{1},\tilde{p}_{3},\tilde{p}_{ 5},\dots)$
are parameters. We call $t^{\rm B}_m=\frac 2m p^{\rm B}_m$ and
$\tilde{t}^{\rm B}_m=\frac 2m \tilde{p}^{\rm B}_m$ the two-sided BKP higher times.
\end{de}
This notion of the BKP higher times is the same as it was introduced in \cite{DJKM1,DJKM2} and \cite{KvdL1} and is common.

Two-sided BKP tau function is usual (one-sided) BKP tau function (\ref{tau-BKP}) with respect to the higher times
$\tb^{\rm B}=\bigl(t^{\rm B}_1,t^{\rm B}_3,\dots\bigr)$
and at the same time it is one-sided KP tau function with respect to the higher time variables
$\tilde{\tb}^{\rm B}=\bigl(\tilde{t}^{\rm B}_1,\tilde{t}^{\rm B}_3,\dots\bigr)$, see Remark \ref{adjointB}.

\begin{rem}
The case $\tilde{N}=0$, see (\ref{p=p(x,y)_tilde}) and (\ref{p_B_tilde})
is related to the (one-side) tau function.
\end{rem}

\begin{de}
We call
\be
K_\mu\bigl({\pb^{\rm B}} |\hat{h}^\pm\bigr):=
\langle 0 |\hat{\gamma}^{{\rm B}\pm}\bigl({\pb^{\rm B}}\bigr) \hat{h}^\pm \Phi_\mu| 0 \rangle,
\label{K}
\ee
 lattice BKP tau function labeled with a partition $\mu\in\DP$.
\end{de}

We do not assume that $K_\lambda\bigl(\pb^{\rm B}\bigr)$ is a polynomial function in the variables
$\pb^{\rm B}$. For the polynomial case, see \cite{HO2}.

With the assumption of Remark \ref{h_entries_exist}, one can write both BKP tau functions as Taylor series
in power sum variables as follows:
\begin{gather}
\tau\bigl(2{\pb^{\rm B}},2\tilde{\pb}^{\rm B}|\hat{h^\pm}\bigr)
 =\sum_{\mu,\nu\in\DP} \hat{h}_{\mu,\nu}^\pm Q_\mu\bigl({\pb^{\rm B}}\bigr)Q_\nu\bigl(\tilde{\pb}^{\rm B}\bigr), \nonumber
\\ \label{K_Taylor}
 K_\mu\bigl({\pb^{\rm B}} |\hat{h}^\pm\bigr)=\sum_{\nu\in\DP} \hat{h}_{\mu,\nu}^\pm Q_\nu\bigl({\pb^{\rm B}}\bigr),
\end{gather}
which results from (\ref{coherent_states}). For a given $\hat{h}^\pm$, we obtain
\[
\tau\bigl(2{\pb^{\rm B}},2\tilde{\pb}^{\rm B}|\hat{h}^\pm\bigr)
 =\sum_{\mu\in\DP} 2^{-\ell(\mu)} K_\mu\bigl({\pb^{\rm B}} |\hat{h}^\pm\bigr)Q_\mu\bigl(\tilde{\pb}^{\rm B}\bigr).
\]
For $\hat{h}^\pm=1$ (the same, $B=0$), we get the simplest two-sided BKP tau function~(\ref{vacuum-2-BKP'}).

\begin{rem}\label{Cartan}
In the Grassmannian approach to the BKP tau functions developed in \cite{HB}, the
coefficients $K_\lambda\bigl({\pb^{\rm B}}|\hat{h}^\pm\bigr)$
in (\ref{2-KP_lattice-KP}) have the meaning of the Cartan coordinates $\kappa_\lambda\bigl(\hat{h}^\pm\bigr)$.
The notation was used in \cite{HO2} instead of \smash{$K_\lambda\bigl(\pb|\hat{h}^\pm\bigr)$}, see Remark \ref{notational_remark}.
\end{rem}

{\bf The known relation between KP and BKP two-sided tau functions.}
It is well known~\cite{JM} that the square of a (one-side) BKP tau function is equal to a certain
(one-side) KP tau function under
restriction $p_{2m}\equiv 0$;
the same is obviously true for two-sided tau functions.
\begin{Theorem}
\be\label{2-BKP2-BKP=2-KP}
\tau^{\rm B}\bigl(2{\pb^{\rm B}},2\tilde{\pb}^{\rm B}|\hat{h}^+\bigr)\tau^{\rm B} \bigl(2{\pb^{\rm B}},2\tilde{\pb}^{\rm B}|\hat{h}^-\bigr)=
\tau\bigl(\pb',\tilde{\pb}'|\hat{h}^+\hat{h}^-\bigr),
\ee
where $\tau^{\rm B}\bigl(2{\pb^{\rm B}},2\tilde{\pb}^{\rm B}|\hat{h}^+\bigr)=
\tau^{\rm B} \bigl(2{\pb^{\rm B}},2\tilde{\pb}^{\rm B}|\hat{h}^-\bigr)$, ${\pb^{\rm B}}$ is given in
\eqref{p_B},
\begin{gather*}
\tilde{\pb}^{\rm B}=\bigl(\tilde{p}^{\rm B}_1,\tilde{p}^{\rm B}_3,\tilde{p}^{\rm B}_5,\dots\bigr)
\end{gather*}
\big(with $2\tilde{p}^{\rm B}_i=\tilde{p}_i$, $i$ odd\big)
are two given sets of parameters and
\begin{gather}\label{p=p'}
\pb':=(p_1,0,p_3,0,p_5,0,\dots),
\\ \label{tilde-p=p'}
\tilde{\pb}':=(\tilde{p}_1,0,\tilde{p}_3,0,\tilde{p}_5,0,\dots) .
\end{gather}
\end{Theorem}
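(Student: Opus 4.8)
The plan is to reduce the identity to the factorization of a single vacuum expectation value, exploiting that the charged current splits into two commuting neutral pieces. Writing $\hat{g}=\hat{h}^+\hat{h}^-$, I would begin from the fermionic definition of the two-sided KP tau function,
\[
\tau\bigl(\pb',\tilde{\pb}'|\hat{h}^+\hat{h}^-\bigr)=\langle 0|\hat{\gamma}(\pb')\hat{h}^+\hat{h}^-\hat{\gamma}^\dag(\tilde{\pb}')|0\rangle,
\]
and rewrite the vertex operators. Since $\pb'$ and $\tilde{\pb}'$ have vanishing even-indexed components, only odd modes $J_m$ occur in $\hat{\gamma}(\pb')$ and $\hat{\gamma}^\dag(\tilde{\pb}')$. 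Using the splitting $J_m=J^{{\rm B}+}_m+J^{{\rm B}-}_m$ for odd $m$, together with the commutativity $\bigl[J^{{\rm B}+}_m,J^{{\rm B}-}_n\bigr]=0$, and the relation $p_m=2p^{\rm B}_m$ (for $m$ odd) that matches the factor of $2$ appearing in the definition \eqref{gamma-B}, the exponential factorizes cleanly into
\[
\hat{\gamma}(\pb')=\hat{\gamma}^{{\rm B}+}\bigl(2{\pb^{\rm B}}\bigr)\hat{\gamma}^{{\rm B}-}\bigl(2{\pb^{\rm B}}\bigr),\qquad \hat{\gamma}^\dag(\tilde{\pb}')=\hat{\gamma}^{\dag{\rm B}+}\bigl(2\tilde{\pb}^{\rm B}\bigr)\hat{\gamma}^{\dag{\rm B}-}\bigl(2\tilde{\pb}^{\rm B}\bigr).
\]

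Next I would separate the ``$+$''-type and ``$-$''-type operators. Because $\phi^+_j$ and $\phi^-_k$ anticommute by \eqref{neutral-canonical}, any operator that is an even polynomial in $\{\phi^+_i\}$ commutes with any operator that is an even polynomial in $\{\phi^-_i\}$. Thus the ``$-$'' vertex operators $\hat{\gamma}^{{\rm B}-}$, $\hat{\gamma}^{\dag{\rm B}-}$ commute with $\hat{h}^+$, and likewise the ``$+$'' vertex operators commute with $\hat{h}^-$. Moving the ``$-$'' factors rightward through all the ``$+$'' factors, I would recast the expectation value as $\langle 0|U^+U^-|0\rangle$ with
\[
U^+:=\hat{\gamma}^{{\rm B}+}\bigl(2{\pb^{\rm B}}\bigr)\hat{h}^+\hat{\gamma}^{\dag{\rm B}+}\bigl(2\tilde{\pb}^{\rm B}\bigr),\qquad U^-:=\hat{\gamma}^{{\rm B}-}\bigl(2{\pb^{\rm B}}\bigr)\hat{h}^-\hat{\gamma}^{\dag{\rm B}-}\bigl(2\tilde{\pb}^{\rm B}\bigr),
\]
where $U^+$ is a polynomial in the $\{\phi^+_i\}$ only and $U^-$ in the $\{\phi^-_i\}$ only, each of even degree since every building block is an exponential of a bilinear and $\hat{h}^\pm$ is itself even.

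I would then invoke Lemma \ref{factorization_lemma}: since $U^+$ and $U^-$ are both of even degree, $\langle 0|U^+U^-|0\rangle=\langle 0|U^+|0\rangle\langle 0|U^-|0\rangle$. Recognizing each factor directly as a two-sided BKP tau function, $\langle 0|U^\pm|0\rangle=\tau^{\rm B}\bigl(2{\pb^{\rm B}},2\tilde{\pb}^{\rm B}|\hat{h}^\pm\bigr)$, yields the claimed identity. The step I expect to demand the most care is the sign bookkeeping in the regrouping: one must verify that passing each ``$-$'' neutral vertex operator through $\hat{h}^+$ and through the ``$+$'' vertex operators produces no sign, which hinges on both being even in fermionic degree; and one must confirm that this even-degree hypothesis places us in the first, sign-free case of Lemma \ref{factorization_lemma} rather than the odd-odd case, which would instead contribute the spurious factor $2{\rm i}\langle 0|U^+\phi^+_0|0\rangle\langle 0|U^-\phi^-_0|0\rangle$. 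The only other point requiring attention is the matching of the factor $2$ between $\pb'$ and $2{\pb^{\rm B}}$, which is precisely what makes the vertex-operator factorization exact.
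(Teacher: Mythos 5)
Your proposal is correct and follows essentially the same route as the paper, which proves the theorem by combining the splitting $\hat{\gamma}(\pb')=\hat{\gamma}^{{\rm B}+}\bigl(2{\pb^{\rm B}}\bigr)\hat{\gamma}^{{\rm B}-}\bigl(2{\pb^{\rm B}}\bigr)$ (a consequence of $J_m=J^{{\rm B}+}_m+J^{{\rm B}-}_m$ for odd $m$ and the commutativity of the two neutral current algebras) with the factorization Lemma \ref{factorization_lemma} in the even--even case. You have merely written out explicitly the regrouping into $U^+U^-$ that the paper leaves implicit by reference to the analogous one-sided argument in \cite{JM}.
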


The proof is identical to the similar statement \cite{JM} about KP and BKP tau functions with one set of variables
and based on Lemma \ref{factorization_lemma} and on the relation
\[
\hat{\gamma}(\pb')=\hat{\gamma}^{{\rm B}+}\bigl(2{\pb^{\rm B}}\bigr)\hat{\gamma}^{{\rm B}-}\bigl(2{\pb^{\rm B}}\bigr),\qquad
\hat{\gamma}^\dag(\pb')=\hat{\gamma}^{\dag{\rm B+}}\bigl(2{\pb^{\rm B}}\bigr)\hat{\gamma}^{\dag{\rm B-}}\bigl(2{\pb^{\rm B}}\bigr)
\]
(which is the direct result of (\ref{J=J^B+J^B}), (\ref{Hei-B}) and can be treated as an example of
Lemma \ref{g=h^+h^-Lemma}).

Our goal will be to get the generalization of (\ref{2-BKP2-BKP=2-KP}) in Section \ref{Results},
where now $\pb=\pb(\ab/{-}\bb)$ and $\tilde{\pb}=\tilde\pb\bigl(\tilde\ab/{-}\tilde\bb\bigr)$
of (\ref{pb(xb/-yb)}) and (\ref{tilde-pb(xb/-yb)}). This choice of power sums is different from
(\ref{p=p'}) and~(\ref{tilde-p=p'}) except the case mentioned in Remark \ref{when-p=p'}.

{\bf The known relation between KP and BKP lattice tau functions.}
In the previous work \cite{HO2}, we have shown that in case the power sums are restricted according to (\ref{p=p'})
we get the following theorem.

\begin{Theorem}[\cite{HO2}]\label{ThHO2} For ${\pb^{\rm B}}$ and $\pb$ given by \eqref{p_B}, \eqref{p=p'} and \eqref{p_p^B}, respectively, we have
\[
 S_{(\alpha|\beta)}\bigl(\pb|\hat{h}^+\hat{h}^-\bigr) =
\sum_{ \zeta\in\PP(\alpha,\beta)}
\left[\zeta^+,\zeta^-\atop \alpha,\beta \right]
K_{\zeta^+} \bigl(2{\pb^{\rm B}}|\hat{h}^+\bigr)
K_{\zeta^-} \bigl(2{\pb^{\rm B}}|\hat{h}^-\bigr),
\]
where the factor in square brackets is given by \eqref{a}.
In particular,
\[
 s_{(\alpha|\beta)}(\pb') =
\sum_{ \zeta\in\PP(\alpha,\beta)}
 \left[\zeta^+,\zeta^-\atop \alpha,\beta \right]
Q_{\zeta^+}\bigl({\pb^{\rm B}}\bigr)
Q_{\zeta^-}\bigl({\pb^{\rm B}}\bigr)
\]
obtained in {\rm \cite{HO1}}.
\end{Theorem}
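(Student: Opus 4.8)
The plan is to evaluate the lattice KP tau function \eqref{largeS} directly in the fermionic picture, $S_{(\alpha|\beta)}(\pb'|\hat{h}^+\hat{h}^-)=\langle 0|\hat{\gamma}(\pb')\hat{h}^+\hat{h}^-\Psi_\lambda|0\rangle$, and to extract the right-hand side by rewriting everything in neutral fermions. Since $\pb'$ carries only odd times, I would first invoke the splitting $\hat{\gamma}(\pb')=\hat{\gamma}^{{\rm B}+}(2\pb^{\rm B})\hat{\gamma}^{{\rm B}-}(2\pb^{\rm B})$, a direct consequence of \eqref{J=J^B+J^B} and \eqref{Hei-B}. The operators $\hat{\gamma}^{{\rm B}+}$ and $\hat{h}^+$ are even elements built from $\{\phi^+_i\}$, while $\hat{\gamma}^{{\rm B}-}$ and $\hat{h}^-$ are even elements built from $\{\phi^-_i\}$, and the two families commute; hence I may reorder the prefactor as $(\hat{\gamma}^{{\rm B}+}\hat{h}^+)(\hat{\gamma}^{{\rm B}-}\hat{h}^-)$.

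Second, I would expand $\Psi_\lambda$ of \eqref{Psi-lambda} into neutral modes. Inverting \eqref{charged-neutral} gives $\psi_j=(\phi^+_j-{\rm i}\phi^-_j)/\sqrt2$ and $(-1)^j\psi^\dag_{-j}=(\phi^+_j+{\rm i}\phi^-_j)/\sqrt2$, so expanding the product $\psi_{\alpha_1}\cdots\psi_{\alpha_r}\psi^\dag_{-\beta_1-1}\cdots\psi^\dag_{-\beta_r-1}$ produces a sum over assignments of each of the $2r$ indices to the $\phi^+$ or the $\phi^-$ species. The key structural input is \eqref{neutral-canonical}: since $\frac12(-1)^x\delta_{2x,0}$ vanishes for $x\neq 0$, any assignment sending two equal indices to the same species dies. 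Thus for a value in $S=\alpha\cap I(\beta)$ the two copies are forced to opposite species, which is precisely the defining constraint of a polarization, so the surviving terms are indexed exactly by $(\zeta^+,\zeta^-)\in\PP(\alpha,\beta)$. Reordering the $\phi^+$ factors to the left and the $\phi^-$ factors to the right, each in strict-partition order, produces $\Phi^+_{\zeta^+}\Phi^-_{\zeta^-}$ times the reordering sign, which is the sign $\sgn(\zeta)$ of the permutation between \eqref{alpha_beta_sequence} and \eqref{mu_sequence}.

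Third, using that an even $\phi^-$-operator passes through $\Phi^+_{\zeta^+}$ without sign, each summand becomes $\langle 0|U^+U^-|0\rangle$ with $U^\pm=\hat{\gamma}^{{\rm B}\pm}\hat{h}^\pm\Phi^\pm_{\zeta^\pm}$ of equal parity, to which I apply the factorization Lemma \ref{factorization_lemma}. When $m(\zeta^\pm)$ is even the VEV splits as a product, and each factor is a lattice BKP tau function by the definition \eqref{K}, giving $K_{\zeta^+}(2\pb^{\rm B}|\hat{h}^+)K_{\zeta^-}(2\pb^{\rm B}|\hat{h}^-)$. When $m(\zeta^\pm)$ is odd the lemma inserts $\phi^\pm_0$, which promotes $\Phi^\pm_{\zeta^\pm}\phi^\pm_0$ to a multiple of the supplemented product $\Phi^\pm_{\hat{\zeta}^\pm}$ and again yields the same two $K$'s, now at the cost of the extra factor carried by the odd branch of the lemma. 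Setting $\hat{h}^\pm=1$ collapses each $K_{\zeta^\pm}$ to $Q_{\zeta^\pm}$ by Remark \ref{smallSchur} and \eqref{You_Q}, which is the ``in particular'' statement.

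The hard part is entirely the constant: I must collect the prefactor $(-1)^{\sum\beta_j}(-1)^{\frac12 r(r-1)}$ of \eqref{Psi-lambda}, the $2r$ factors $1/\sqrt2$, the $(-1)^{\beta_j+1}$ and imaginary units from the charged--neutral substitution, the normalizations built into the $\Phi^\pm$, the reinforcement of the two sign-assignments attached to each colliding value in $S$, and the factors produced by the two cases of Lemma \ref{factorization_lemma}, and verify that they assemble into exactly the weight \eqref{a}. Two points deserve care. First, the sign prefactors combine cleanly: the product of $(-1)^{\sum\beta_j}$ with the $(-1)^{\sum(\beta_j+1)}$ from the $\psi^\dag$ substitutions and the $(-1)^{\frac12 r(r-1)}$ gives the overall $(-1)^{\frac12 r(r+1)}$ appearing in \eqref{a}. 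Second, the powers of ${\rm i}$: each index routed to $\phi^-$ contributes $\mp{\rm i}$ according to whether it came from an $\alpha$ or a $\beta+1$ slot, which assembles to $(-1)^{\pi(\zeta^-)}{\rm i}^{m(\zeta^-)}$, and the residual ${\rm i}$ supplied by the odd case of the factorization lemma is exactly what is needed to turn this, uniformly in the parity of $m(\zeta^\pm)$, into the real factor $(-1)^{\pi(\zeta^-)+\frac12 m(\hat{\zeta}^-)}$ of \eqref{a}; tracking the powers of two through the collisions in $S$ so that they reduce to the advertised $2^{-(r-s)}$ is the most delicate piece of the bookkeeping.
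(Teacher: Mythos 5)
Your proposal is correct and follows essentially the same route as the paper: the paper's proof is precisely ``apply Lemma \ref{Psi-lambda-Phi-mu-Phi-mu} and Lemma \ref{factorization_lemma}'', and your second step is a faithful inline re-derivation of Lemma \ref{Psi-lambda-Phi-mu-Phi-mu} (including the $2^s$ count from collisions in $S=\alpha\cap I(\beta)$ and the sign $\sgn(\zeta)$), combined with the splitting $\hat{\gamma}(\pb')=\hat{\gamma}^{{\rm B}+}\bigl(2{\pb^{\rm B}}\bigr)\hat{\gamma}^{{\rm B}-}\bigl(2{\pb^{\rm B}}\bigr)$ and the factorization lemma exactly as the paper intends. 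The only point worth noting is that your ``equal indices to the same species die'' argument relies on those indices being nonzero, which holds here because repeated values lie in $I(\beta)$ and hence are at least $1$.
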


Proof follows from Lemmas \ref{Psi-lambda-Phi-mu-Phi-mu} and~\ref{factorization_lemma}.

In \cite{HO2}, we consider KP polynomial tau functions as the bilinear sum of BKP polynomial tau functions
\cite{KvdLRoz}; this relation is given by Theorem~\ref{ThHO2}.

In Section \ref{final-bilinear-relations},
we are going to replace the restriction $\pb=\pb'$ by $\pb=\pb(\ab/{-}\bb)$ and present the generalization
of Theorem \ref{ThHO2}.

\subsection{On lattice polynomial KP and lattice polynomial BKP tau functions}

Among lattice tau functions, there is a family of tau functions polynomial in power sum variables.
Polynomial tau functions were studied in \cite{KvdLRoz} and also in \cite{HO3}.

To be precise let us introduce the following definition.

\begin{de}
We call a lattice KP tau function (\ref{S_Taylor}) polynomial if for {\em any} given $\lambda \in\Pa$
the number of matrix elements
$ \hat{g}_{\mu,\lambda}$ is finite.
\end{de}

\begin{de}
We call a lattice BKP tau function (\ref{K_Taylor}) polynomial if for {\em any} given $\alpha \in\DP$
the number of matrix elements
$ \hat{h}_{\nu,\alpha}$ is finite.
\end{de}

\begin{rem}
In the Grassmannian approach, the property of polynomiality may be formulated as follows:
It is known that for any $\lambda\in\Pa$ there exits such $\hat{g}(\lambda)\in {\rm GL}_\infty$ that
$\hat{g}(\lambda)|0\rangle = \Psi_\lambda|0\rangle$ (see Remark~\ref{g,h,Psi,Phi}),
and perhaps it will be better to attach a lattice KP tau function (\ref{largeS}) to the Schubert cell
labeled with the partition $\lambda$. Then one can write
$\hat{g}=\hat{g}_L\hat{g}_0\hat{g}_R\hat{g}(\lambda)$,
where $\hat{g}_L,\hat{g}_0,\hat{g}_R \in {\rm GL}_\infty$,
$\hat{g}_R|0\rangle=|0\rangle$,
$\langle 0|\hat{g}_L=\langle 0|$,
$\hat{g}_0|0\rangle =g_0|0\rangle$, $\langle 0|\hat{g}_0 = g_0\langle 0|$,
$g_0$ is a number.
Then the condition of the polynomiality is the nilpotent structure of $\hat{A}_L$, where
\smash{$\hat{g}_L={\rm e}^{\hat{A}_L}$}.

The similar condition of the polynomiality for (\ref{K}) we get for
$\hat{h}^\pm=\hat{h}^\pm_L\hat{h}^\pm_0\hat{h}^\pm_R\hat{h}^\pm(\alpha)$,
where \smash{$\hat{h}^\pm(\alpha)|0\rangle = \Phi^\pm_{\hat\alpha}|0\rangle $} (see \cite{HB}),
$\hat{h}^\pm_R|0\rangle=|0\rangle$,
$\langle 0|\hat{h}^\pm_L=\langle 0|$, $\hat{h}^\pm_0|0\rangle =h_0|0\rangle$, $\langle 0|\hat{h}^\pm_0 =h_0\langle 0|$,
 $h_0$ is a number.
Then the condition of the polynomiality is the nilpotent structure of $\hat{B}^\pm_L$, where
$\hat{h}^\pm_L={\rm e}^{\hat{B}^\pm_L}$.

Let us note that in \cite{HO3} we actually consider only the cases where $\hat{g}_L=1$ and
$\hat{h}^\pm_L=1$.
\end{rem}

\section{Results} \label{Results}

\subsection{Key lemmas} \label{Keb_lemmas}

Let us note that tau functions below depend on $\tilde{\Psi}(\ab/{-}\bb)$ and
$\tilde{\Phi}(\ab)$. And as one can notice,
 $\tilde{\Psi}(\ab/{-}\bb)$ and $\tilde{\Phi}(\ab)$ do not depend on the order of the arguments
 $a_1,\dots,a_N,b_1,\dots,b_N$. Nevertheless, as we have already wrote, it is suitable to order
 these sets to have a unified approach for problems under consideration.

\begin{lem}\label{Polarization-fields}
\begin{gather}\label{Psi=PhiPhi}
\Psi(\ab/{-}\bb)=
\sum_{(\zb^+,\zb^-)\in \PP(\ab, \bb)} \textsc{d}^{\zb^+,\zb^-}_{\ab,\bb} \Phi^+(\zb^+)\Phi^-(\zb^-) ,
\\ \label{tilde-Psi=PhiPhi}
\tilde{\Psi}(\ab/{-}\bb)=
\sum_{(\zb^+,\zb^-)\in \PP(\ab, \bb)} \tilde{\textsc{d}}^{\zb^+,\zb^-}_{\ab,\bb}
\tilde{\Phi}^+(\zb^+)\tilde{\Phi}^-(\zb^-) ,
\end{gather}
where
\begin{gather*}
 \textsc{d}^{\zb^+,\zb^-}_{\ab,\bb}:={(-1)^{\frac 12 N(N+1)+\tilde{q}}\over 2^{N-\tilde{q}}}
\sgn(\zb)(-1)^{\pi(\zb^-)} {\rm i}^{ N(\zb^-)},
\\
\tilde{\textsc{d}}^{\zb^+,\zb^-}_{\ab,\bb}=\textsc{d}^{\zb^+,\zb^-}_{\ab,\bb}
\frac{\Delta^{\rm B}(\zb^+)\Delta^{\rm B}(\zb^-)}{\Delta(\ab,\bb)} .
\end{gather*}
\end{lem}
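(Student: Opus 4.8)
The plan is to prove the operator identity \eqref{Psi=PhiPhi} by direct substitution and expansion, and then to read off the normalized version \eqref{tilde-Psi=PhiPhi} simply by dividing through by vacuum expectation values. I would start from the ordered form \eqref{Psi-x1}, namely $\Psi(\ab/{-}\bb)=(-1)^{\frac12 N(N-1)}\psi(a_1)\cdots\psi(a_N)\psi^\dag(-b_1)\cdots\psi^\dag(-b_N)$, and replace every charged field by its neutral components using \eqref{psi(z)-phi(z)}: each $\psi(a_i)$ becomes a combination of $\phi^+(a_i)$ with weight $1$ and $\phi^-(a_i)$ with weight $-\mathrm{i}$, and each $\psi^\dag(-b_i)$ becomes $\phi^+(b_i)$ with weight $1$ and $\phi^-(b_i)$ with weight $+\mathrm{i}$ (together with the $\frac{1}{\sqrt2}$ and the $z$-weight dictated by the mode convention for $\psi^\dag$).

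Expanding the resulting product of $2N$ binomials yields $2^{2N}$ monomials, each labelled by a sign assignment $\epsilon_j\in\{+,-\}$ attached to each of the $2N$ arguments $a_1,\dots,a_N,b_1,\dots,b_N$. The arguments receiving $+$ form an ordered subset $\zb^+$, those receiving $-$ form $\zb^-$, and this is exactly the data of a candidate polarization. In each monomial I would move all $\phi^+$ factors to the left and all $\phi^-$ factors to the right, preserving their internal order; because $\phi^+$ and $\phi^-$ anticommute, this reordering produces precisely the sign of the permutation carrying \eqref{a_b_sequence} into \eqref{z_sequence}, which is $\sgn(\zb)$, and converts the bare monomial into $\Phi^+(\zb^+)\Phi^-(\zb^-)$ once the $2^{N(\zb^\pm)/2}$ normalizations are restored.

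The heart of the matter is then matching the scalar coefficient to \eqref{d*}. The powers of $\mathrm{i}$ assemble cleanly: an argument sent to $\phi^-$ contributes $-\mathrm{i}$ if it came from the $\ab$-block and $+\mathrm{i}$ if it came from the $\bb$-block, so the total is $(-1)^{\pi(\zb^-)}\mathrm{i}^{N(\zb^-)}$, since $\pi(\zb^-)=\#(\ab\cap\zb^-)$ counts exactly the $\ab$-arguments in $\zb^-$ — this reproduces two of the factors in $\textsc{d}$. The intersection constraint of $\PP(\ab,\bb)$ emerges from the identity $\phi^\pm(c)^2=0$: whenever a value $c$ occurs in both $\ab$ and $\bb$, the two assignments that would place two copies of $c$ into the same block annihilate, while the two assignments placing one copy of $c$ into each block survive, coincide, and add. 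This both forces $\zb^+\cap\zb^-=\ab\cap\bb$ (restricting the sum to $\PP(\ab,\bb)$) and contributes the factor $2^{\tilde q}$ responsible for the $\tilde q$-dependence $2^{-(N-\tilde q)}$ in the prefactor. Collecting the overall $(-1)^{\frac12 N(N-1)}$, the $\frac{1}{\sqrt2}$'s, and the remaining powers of $2$ should then yield $\textsc{d}^{\zb^+,\zb^-}_{\ab,\bb}$.

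Finally, \eqref{tilde-Psi=PhiPhi} follows at once: dividing \eqref{Psi=PhiPhi} by $\langle 0|\Psi(\ab/{-}\bb)|0\rangle=\Delta(\ab/{-}\bb)$ from \eqref{VEV-Psi} and writing each $\Phi^\pm(\zb^\pm)=\Delta^{\rm B}(\zb^\pm)\tilde{\Phi}^\pm(\zb^\pm)$ via \eqref{VEV-Phi} turns $\textsc{d}$ into $\tilde{\textsc{d}}=\textsc{d}\,\Delta^{\rm B}(\zb^+)\Delta^{\rm B}(\zb^-)/\Delta(\ab,\bb)$. I expect the main obstacle to be exactly the prefactor bookkeeping: reconciling the abstractly defined permutation sign $\sgn(\zb)$ with the concrete neutral-fermion reordering sign, and keeping the powers of $2$ and $\mathrm{i}$ (and the several $(-1)$ conventions, which are notoriously delicate in this formalism) consistent, so that the assembled constant is exactly \eqref{d*} and not off by a sign or a power of $2$.
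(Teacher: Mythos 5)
Your proposal follows the paper's own proof essentially verbatim: substitute the neutral-fermion decomposition \eqref{psi(z)-phi(z)} into \eqref{Psi-x1}, expand, reorder to collect $\Phi^+(\zb^+)\Phi^-(\zb^-)$ with the permutation sign $\sgn(\zb)$, track the powers of $\mathrm{i}$ and $-1$ (giving $(-1)^{\pi(\zb^-)}\mathrm{i}^{N(\zb^-)}$), use the degeneration at coincident arguments $a_j=b_k$ to restrict the sum to $\PP(\ab,\bb)$ and produce the $2^{\tilde q}$ identical terms, and finally divide by the vacuum expectation values \eqref{VEV-Psi} and \eqref{VEV-Phi} to pass to \eqref{tilde-Psi=PhiPhi}. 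The only differences are cosmetic --- you invoke $\phi^\pm(c)^2=0$ where the paper cites the equivalent identity \eqref{psipis=phiphi-fields} --- and you correctly flag the prefactor bookkeeping as the delicate point, which the paper's proof likewise leaves implicit.
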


\begin{proof}
 Consider equation~(\ref{Psi-x1}) and substitute
 \begin{gather*}
\psi(a_j) = \frac{1} {\sqrt{2}}\bigl( \phi^+(a_j) - {\rm i} \phi^-(a_j)\bigr), \\
 \psi^\dag(-b_j)= \frac{1}{ b_j\sqrt{2}} \bigl(\phi^+(b_j) + {\rm i} \phi^-(b_j)\bigr), \qquad j=1,\dots,N,
\end{gather*}
see (\ref{psi(z)-phi(z)}),
for all the factors, and expand the product as a sum over monomial terms of the form
\[
 \prod_{j=1}^{N(\zb^+)} \phi^+\bigl(z^+_j\bigr) \prod_{k=1}^{N(\zb^-)} \phi^-(z^-_k) .
\]
We have in mind formula (\ref{psipis=phiphi-fields}) if $b_k=a_j$
which reduces the number of terms in the product of Fermi fields.

Taking into account the sign factor $\sgn(\zb)$ corresponding to the order of the neutral fermion
factors, as well as the powers of $-1$ and ${\rm i}$, and noting that there are $2^{\tilde{q}}$ resulting identical terms,
then gives (\ref{Psi=PhiPhi}), then (\ref{tilde-Psi=PhiPhi})
\end{proof}

\begin{lem}\label{Polarization-fields*}
\begin{gather*}\label{Psi=PhiPhi*}
\Psi(\bb^*/{-}\ab^*)=
\sum_{(\zb^+,\zb^-)\in \PP(\bb^*, \ab^*)} \textsc{d}^{\zb^+,\zb^-}_{\bb^*,\ab^*} \Phi^{+}(\zb^+)\Phi^{-}(\zb^-) ,
\\ \label{tilde-Psi=PhiPhi*}
\tilde{\Psi}(\bb^*/{-}\ab^*)=
\sum_{(\zb^+,\zb^-)\in \PP(\bb^*, \ab^*)} \tilde{\textsc{d}}^{\zb^+,\zb^-}_{\bb^*,\ab^*}
\tilde{\Phi}^+(\zb^+)\tilde{\Phi}^-(\zb^-) ,
\end{gather*}
where
\begin{gather*}
 \textsc{d}_{\bb^*,\ab^*}^{\zb^+,\zb^-} :={(-1)^{\frac 12 N(N+1)+q}\over 2^{N-q}}
\sgn(\zb)(-1)^{\pi(\zb^+)} {\rm i}^{ N(\zb^+)},
\\
\tilde{\textsc{d}}^{\zb^+,\zb^-}_{\bb^*,\ab^*}=\textsc{d}^{\zb^+,\zb^-}_{\bb^*,\ab^*}
\frac{\Delta^{\rm B}(\zb^+)\Delta^{\rm B}(\zb^-)}{\Delta(\bb^*/{-}\ab^*)} .
\end{gather*}
\end{lem}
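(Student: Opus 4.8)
The plan is to repeat the computation of Lemma~\ref{Polarization-fields} essentially verbatim, now starting from the factored product defining $\Psi(\bb^*/{-}\ab^*)$ in (\ref{Psi-x}), i.e.\ the ordered product of the fields $\psi\bigl(-b_i^{-1}\bigr)$ and $\psi^\dag\bigl(a_i^{-1}\bigr)$. First I would substitute into each factor the inverse-argument bosonization relations recorded in the remark following (\ref{psi(z)-phi(z)}),
\[
\psi\bigl(-b_i^{-1}\bigr) = \frac{\phi^+\bigl(-b_i^{-1}\bigr) - {\rm i}\phi^-\bigl(-b_i^{-1}\bigr)}{\sqrt 2}, \qquad
\psi^\dag\bigl(a_i^{-1}\bigr) = -a_i\,\frac{\phi^+\bigl(-a_i^{-1}\bigr) + {\rm i}\phi^-\bigl(-a_i^{-1}\bigr)}{\sqrt 2},
\]
whose arguments are precisely the starred coordinates $\bb^*$, $\ab^*$. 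Expanding the product over all choices of $\phi^+$ versus $\phi^-$ in each factor then produces a sum over polarizations $(\zb^+,\zb^-) \in \PP(\bb^*,\ab^*)$, with the coincidences $b_i^* = a_j^*$ yielding $2^{q}$ equal monomials, exactly as the coincidences $a_i = b_j$ gave $2^{\tilde q}$ equal monomials in Lemma~\ref{Polarization-fields}.

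Next I would collect the phase of each monomial $\Phi^+(\zb^+)\Phi^-(\zb^-)$. The one structural difference from Lemma~\ref{Polarization-fields} is that the $\psi^\dag\bigl(a_i^{-1}\bigr)$ factor carries the scalar $-a_i$ rather than the $1/b_j$ of the untarred case; consequently the extra sign $-1$ now attaches to \emph{both} selections from a $\psi^\dag$-factor, so that $\phi^+$ from a $\psi^\dag$-factor contributes phase $-1$, while every $\phi^-$-selection (from a $\psi$- or a $\psi^\dag$-factor alike) contributes the common phase $-{\rm i}$. Counting $\#(\bb^*\cap\zb^+) = \pi(\zb^+)$ among the $\phi^+$-selections and $N(\zb^+) - \pi(\zb^+)$ remaining $\ab^*$-coordinates there, the accumulated phase is
\[
(-1)^{N(\zb^+)-\pi(\zb^+)}(-{\rm i})^{N(\zb^-)} = (-1)^{\pi(\zb^+)}\,{\rm i}^{N(\zb^-)},
\]
using $N(\zb^+)+N(\zb^-)=2N$. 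The same constraint turns ${\rm i}^{N(\zb^-)}$ into ${\rm i}^{N(\zb^+)}$ up to the sign $(-1)^{N+N(\zb^+)}$; this residual sign, together with the reordering sign of the neutral factors and the overall $(-1)^{\frac12 N(N-1)}$ of (\ref{Psi-x}), is absorbed into the prefactor $(-1)^{\frac12 N(N+1)+q}/2^{N-q}$ and into $\sgn(\zb)$, giving the asserted $\textsc{d}^{\zb^+,\zb^-}_{\bb^*,\ab^*}$ with factor $(-1)^{\pi(\zb^+)}{\rm i}^{N(\zb^+)}$. This interchange $\zb^-\leftrightarrow\zb^+$ relative to (\ref{Psi=PhiPhi}) is the genuinely new point, and checking that the leftover signs reorganize exactly into the stated prefactor is where I expect the main bookkeeping effort to lie.

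Finally, the second identity for $\tilde{\Psi}(\bb^*/{-}\ab^*)$ would follow from the first by normalization: dividing the first identity of the lemma by $\langle 0|\Psi(\bb^*/{-}\ab^*)|0\rangle = \Delta(\bb^*/{-}\ab^*)$, evaluated through (\ref{Delta(-b^-1,-a^-1)=}), and replacing each $\Phi^\pm(\zb^\pm)$ by $\Delta^{\rm B}\bigl(\zb^\pm\bigr)\tilde{\Phi}^\pm(\zb^\pm)$ via $\langle 0|\Phi^\pm(\hat{\zb}^\pm)|0\rangle = \Delta^{\rm B}(\zb^\pm)$ from (\ref{VEV-Phi}). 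This reproduces the factor $\Delta^{\rm B}(\zb^+)\Delta^{\rm B}(\zb^-)/\Delta(\bb^*/{-}\ab^*)$ multiplying $\textsc{d}^{\zb^+,\zb^-}_{\bb^*,\ab^*}$, exactly as in the untarred Lemma~\ref{Polarization-fields}.
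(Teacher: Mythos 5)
Your proposal matches the paper's own proof essentially verbatim: the paper likewise substitutes the inverse-argument decompositions of the fields $\psi\bigl(-b_i^{-1}\bigr)$ and $\psi^\dag\bigl(a_i^{-1}\bigr)$ into the ordered product, expands over the $\phi^+/\phi^-$ choices to obtain the sum over $\PP(\bb^*,\ab^*)$, invokes the $2^{q}$ coincident terms together with the reordering sign $\sgn(\zb)$ and the powers of $-1$ and ${\rm i}$, and obtains the second (tilde) identity by normalization. The only difference is that you spell out the phase bookkeeping that the paper compresses into a single sentence, and your derivation of the factor $(-1)^{\pi(\zb^+)}{\rm i}^{N(\zb^+)}$ is consistent with the stated coefficient $\textsc{d}^{\zb^+,\zb^-}_{\bb^*,\ab^*}$.
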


\begin{proof}
 Consider equation~(\ref{Psi-x1}) and substitute
 \begin{gather}
\psi\bigl(-a_j^{-1}\bigr) = \frac{1} {\sqrt{2}}\bigl( \phi^+\bigl(-a_j^{-1}\bigr) - {\rm i} \phi^-\bigl(-a_j^{-1}\bigr)\bigr), \nonumber\\
 \psi^\dag\bigl(b_j^{-1}\bigr)= \frac{b_j}{ \sqrt{2}} \bigl(\phi^+\bigl(-b_j^{-1}\bigr) + {\rm i} \phi^-\bigl(-b_j^{-1}\bigr)\bigr), \qquad j=1,\dots,N,
\end{gather}
see (\ref{psi(z)-phi(z)}),
for all the factors, and expand the product as a sum over monomial terms of the form
\[
 \prod_{j=1}^{N(\zb^+)} \phi^+\bigl(z^+_j\bigr) \prod_{k=1}^{N(\zb^-)} \phi^-(z^-_k) .
 \]

We have in mind formula (\ref{psipis=phiphi-fields'}) if ${b}_k={a}_j$,
which reduces the number of terms in the product of Fermi fields.
 Taking into account the sign factor $\sgn(\zb)$ corresponding to the order of the neutral fermion
factors, as well as the powers of $-1$ and ${\rm i}$, and noting that there are $2^{q}$ resulting identical terms,
then gives (\ref{Psi=PhiPhi*}). Then (\ref{tilde-Psi=PhiPhi*}) follows.
\end{proof}

 For $\lambda=(\alpha|\beta)$, we have the following lemma.

 \begin{lem}\label{Psi-lambda-Phi-mu-Phi-mu}
\be
\Psi_\lambda = {(-1)^{\frac{1}{2}r(r+1) + s}\over 2^{r-s}} \sum_{\zeta=(\zeta^+,\zeta^-)\in \PP(\alpha, \beta)}
\sgn(\zeta)(-1)^{\pi(\zeta^-)} {\rm i}^{ m(\zeta^-)}
\Phi_{\zeta^+} \Phi_{\zeta^-} .
\label{polariz_sum}
\ee
\end{lem}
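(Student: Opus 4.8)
The plan is to mirror the proofs of Lemmas~\ref{Polarization-fields} and~\ref{Polarization-fields*}, replacing the Fermi \emph{fields} used there by the Fermi \emph{modes} appearing in~\eqref{Psi-lambda}. Starting from that definition, I would substitute into each factor the mode expressions
\[
\psi_{\alpha_i}=\frac{1}{\sqrt2}\bigl(\phi^+_{\alpha_i}-{\rm i}\phi^-_{\alpha_i}\bigr),\qquad
\psi^\dag_{-\beta_i-1}=(-1)^{\beta_i+1}\frac{1}{\sqrt2}\bigl(\phi^+_{\beta_i+1}+{\rm i}\phi^-_{\beta_i+1}\bigr),\qquad i=1,\dots,r,
\]
read off from the remark following~\eqref{psi(z)-phi(z)} (with the mode of the $i$-th dagger factor set to $\beta_i+1$). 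Pulling out the $2r$ factors $\frac1{\sqrt2}$ together with the signs $(-1)^{\beta_i+1}$, and using $\frac12 r(r-1)+r=\frac12 r(r+1)$ to combine $\prod_i(-1)^{\beta_i+1}$ with the prefactor $(-1)^{\sum_i\beta_i}(-1)^{\frac12 r(r-1)}$ of~\eqref{Psi-lambda}, reduces $\Psi_\lambda$ to a scalar times a product of $2r$ binomials in the $\phi^\pm$, whose modes are exactly $\alpha_1,\dots,\alpha_r,\beta_1+1,\dots,\beta_r+1$.

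Next I would expand this product into $2^{2r}$ monomials, each a positional product of the $\phi^{\pm}$ at the above modes. A mode occurs twice precisely when it lies in $S=\alpha\cap I(\beta)$; since $\bigl(\phi^\pm_m\bigr)^2=0$ for $m\neq0$ and no part of $I(\beta)$ equals $0$, a monomial survives only if at every shared mode the two factors carry opposite superscripts. This is exactly the content of the binary-sequence lemma: the surviving monomials are indexed by the sequences $\epsilon(\zeta)$, hence by the polarizations $(\zeta^+,\zeta^-)\in\PP(\alpha,\beta)$, and condition~(2) there—that the $\alpha$-factor of a shared mode supplies $\phi^+$ and the $(\beta+1)$-factor supplies $\phi^-$—singles out one representative monomial per polarization.

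It then remains to collect the scalar attached to each polarization. For a fixed $(\zeta^+,\zeta^-)$ the $2^s$ choices at the shared modes, once reordered to the standard form $\Phi_{\zeta^+}\Phi_{\zeta^-}$, yield identical contributions, producing an enhancement $2^{s}$; the reordering of the positional product into $\Phi_{\zeta^+}\Phi_{\zeta^-}$—all $\phi^+$ first, then all $\phi^-$, each in decreasing mode order—produces exactly the permutation sign $\sgn(\zeta)$ defined through~\eqref{alpha_beta_sequence}--\eqref{mu_sequence}. Each selected $\phi^-$ carries a single factor ${\rm i}$, accounting for ${\rm i}^{m(\zeta^-)}$; the remaining signs ($-{\rm i}$ when a $\phi^-$ issues from a $\psi_{\alpha_i}$, $+{\rm i}$ when it issues from a $\psi^\dag_{-\beta_i-1}$) assemble into $(-1)^{\pi(\zeta^-)}$, together with a residual $(-1)^{s}$ that merges with $(-1)^{\frac12 r(r+1)}$ to give $(-1)^{\frac12 r(r+1)+s}$.

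The main obstacle is the constant bookkeeping rather than any conceptual difficulty: one must check that the $2^{s}$ representatives of each polarization add rather than cancel, that the reordering sign genuinely coincides with the intrinsic $\sgn(\zeta)$, and that the accumulated powers of two—from the $\frac1{\sqrt2}$ in every substitution, from the $s$ coincidences of modes, and from the $2^{m(\zeta^\pm)/2}$ built into each $\Phi_{\zeta^\pm}$ (with $m(\zeta^+)+m(\zeta^-)=2r$)—collapse, together with the signs above, to the prefactor $\frac{(-1)^{\frac12 r(r+1)+s}}{2^{\,r-s}}$ displayed in~\eqref{polariz_sum}. Finally, a brief separate check is warranted for a vanishing last part $\alpha_r=0$, where $\bigl(\phi^\pm_0\bigr)^2=\frac12$ rather than $0$; but since $0\notin I(\beta)$ the mode $0$ never coincides with another, so it is never a shared mode and the vanishing argument above is left untouched.
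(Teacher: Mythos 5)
Your proposal follows essentially the same route as the paper's proof of Lemma~\ref{Psi-lambda-Phi-mu-Phi-mu}: substitute the mode relations $\psi_{\alpha_j}=\frac{1}{\sqrt2}\bigl(\phi^+_{\alpha_j}-{\rm i}\phi^-_{\alpha_j}\bigr)$, $\psi^\dag_{-\beta_j-1}=\frac{(-1)^{\beta_j+1}}{\sqrt2}\bigl(\phi^+_{\beta_j+1}+{\rm i}\phi^-_{\beta_j+1}\bigr)$ into \eqref{Psi-lambda}, expand into monomials indexed by polarizations, and collect the sign, the powers of ${\rm i}$ and of $2$, and the $2^{s}$ identical representatives per polarization. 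Your bookkeeping is consistent with the stated coefficient, and the extra observation that the mode $0$ (where $\bigl(\phi^\pm_0\bigr)^2=\tfrac12$ rather than $0$) never lies in $S=\alpha\cap I(\beta)$ is a correct and worthwhile addition.
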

\begin{proof}
 In equation~(\ref{Psi^dag-lambda}), reorder the product over the factors $\psi_{\alpha_j} \psi^\dag_{-\beta_j-1}$ so
 the $\psi_{\alpha_j}$ terms precede the \smash{$\psi^\dag_{-\beta_j-1}$} ones, giving an overall sign
 factor \smash{$(-1)^{\frac{1}{2}r(r-1)}$}. Then substitute
 \[
\psi_{\alpha_j} = \frac{1} {\sqrt{2}}\bigl( \phi^+_{\alpha_j} - {\rm i} \phi^-_{\alpha_j}\bigr), \qquad
 \psi^\dag_{-\beta_j-1}= \frac{(-1)^j}{ \sqrt{2}} \bigl(\phi^+_{\beta_j+1} + {\rm i} \phi^-_{\beta_j+1}\bigr), \qquad j\in \Zb,
\]
which follows from (\ref{charged-neutral}),
for all the factors, and expand the product as a sum over monomial terms of the form
\[
 \prod_{j=1}^{m(\zeta^+)} \phi^+_{\zeta^+_j} \prod_{k=1}^{m(\zeta^-)} \phi^-_{\zeta^-_k} .
 \]
 Taking into account the sign factor $\sgn(\zeta)$ corresponding to the order of the neutral fermion
factors, as well as the powers of $-1$ and $i$, and noting that there are $2^s$ resulting identical terms,
then gives (\ref{polariz_sum}).
\end{proof}

For $\lambda=(\alpha|\beta)$, we also have the following lemma.

 \begin{lem}
\begin{gather*}
\Psi^\dag_\lambda = {(-1)^{\frac{1}{2}r(r+1) + s}\over 2^{r-s}}(-1)^{|\lambda|}
\sum_{\zeta=(\zeta^+,\zeta^-)\in \PP(\alpha, \beta)}
\sgn(\zeta)(-1)^{\pi(\zeta^-)} {\rm i}^{ m(\zeta^-)}
\Phi_{-\zeta^+} \Phi_{-\zeta^-} .
\end{gather*}
\end{lem}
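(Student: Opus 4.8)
The plan is to derive this identity directly from Lemma~\ref{Psi-lambda-Phi-mu-Phi-mu} by applying the Hermitian adjoint, rather than repeating the fermionic expansion from scratch. Since $|\lambda\rangle=\Psi_\lambda|0\rangle$ and $\langle\lambda|=\langle 0|\Psi^\dag_\lambda$, the operator $\Psi^\dag_\lambda$ is the adjoint $(\Psi_\lambda)^\dag$; because $\dag$ is an antilinear anti-automorphism of the Clifford algebra, conjugating the right-hand side of \eqref{polariz_sum} term by term should reproduce the claimed expression. First I would record the adjoint action on the neutral-fermion monomials: a direct computation from the definitions \eqref{charged-neutral} yields the adjoint of each mode $\phi^\pm_j$, and comparing with the definition of $\Phi^\pm_{-\alpha}$ gives $(\Phi_{\zeta^\pm})^\dag=(-1)^{|\zeta^\pm|}\Phi_{-\zeta^\pm}$, where $|\zeta^\pm|$ denotes the sum of the parts of $\zeta^\pm$.

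Next I would conjugate a single summand $\sgn(\zeta)(-1)^{\pi(\zeta^-)}{\rm i}^{m(\zeta^-)}\Phi_{\zeta^+}\Phi_{\zeta^-}$ and isolate three effects. Antilinearity turns ${\rm i}^{m(\zeta^-)}$ into $(-{\rm i})^{m(\zeta^-)}$; the anti-automorphism reverses the order, $(\Phi_{\zeta^+}\Phi_{\zeta^-})^\dag=(\Phi_{\zeta^-})^\dag(\Phi_{\zeta^+})^\dag$, and since $\phi^+$ and $\phi^-$ anticommute by \eqref{neutral-canonical}, restoring the order $\Phi_{-\zeta^+}\Phi_{-\zeta^-}$ costs a sign $(-1)^{m(\zeta^+)m(\zeta^-)}$; and the two per-block signs combine into $(-1)^{|\zeta^+|+|\zeta^-|}$, while the real prefactor $(-1)^{\frac12 r(r+1)+s}/2^{r-s}$ together with $\sgn(\zeta)$ and $(-1)^{\pi(\zeta^-)}$ is unchanged. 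Because every polarization obeys $\zeta^+\cup\zeta^-=\alpha\cup I(\beta)$, the total degree is $|\zeta^+|+|\zeta^-|=|\alpha|+\sum_j(\beta_j+1)=|\lambda|$, so this last factor is exactly the global constant $(-1)^{|\lambda|}$ in the statement (note that the summand $\sum_j 1=r$ supplies the parity that one would otherwise have to hunt for by hand).

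It then remains to show that the surviving phases collapse to ${\rm i}^{m(\zeta^-)}$. The essential input is the parity relation $m(\zeta^+)+m(\zeta^-)=2r$, so that $m(\zeta^+)\equiv m(\zeta^-)\pmod 2$; a one-line check then gives $(-{\rm i})^{m(\zeta^-)}(-1)^{m(\zeta^+)m(\zeta^-)}={\rm i}^{m(\zeta^-)}$, so the antilinear effect and the transposition sign cancel and only ${\rm i}^{m(\zeta^-)}$ remains, as on the right-hand side of \eqref{polariz_sum}. Assembling the three effects across the sum then yields precisely the asserted formula, the lone extra factor being the global $(-1)^{|\lambda|}$.

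The main obstacle is the sign bookkeeping: one must verify that the antilinear conjugation of ${\rm i}^{m(\zeta^-)}$ is compensated exactly by the block-transposition sign $(-1)^{m(\zeta^+)m(\zeta^-)}$ (which is where the equal-parity relation is indispensable), and that the per-part signs from $(\Phi_{\zeta^\pm})^\dag$ assemble into the single global factor $(-1)^{|\lambda|}$ and nothing more. As an independent check I would also run the direct argument parallel to Lemma~\ref{Psi-lambda-Phi-mu-Phi-mu}: starting from \eqref{Psi^dag-lambda}, substitute the neutral-fermion expressions for $\psi_{-\beta_j-1}$ and $\psi^\dag_{\alpha_j}$, expand into monomials $\prod_j\phi^+_{-\zeta^+_j}\prod_k\phi^-_{-\zeta^-_k}$, and collect; the sign $(-1)^{\alpha_j}$ attached to each $\psi^\dag_{\alpha_j}$, combined with the overall $(-1)^{\sum_j\beta_j}$ and the reordering signs, should reproduce the same $(-1)^{|\lambda|}$.
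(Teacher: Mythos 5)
Your strategy---obtaining the expansion of $\Psi^\dag_\lambda$ by applying the adjoint to Lemma~\ref{Psi-lambda-Phi-mu-Phi-mu} instead of repeating the fermionic expansion of (\ref{Psi^dag-lambda})---is legitimate, and arguably cleaner than the direct route the paper leaves implicit (no proof is given for this lemma). Several of your steps are correct: $(\Psi_\lambda)^\dag=\Psi^*_\lambda$, the block-transposition sign $(-1)^{m(\zeta^+)m(\zeta^-)}$, and the cancellation $(-{\rm i})^{m(\zeta^-)}(-1)^{m(\zeta^+)m(\zeta^-)}={\rm i}^{m(\zeta^-)}$ using $m(\zeta^+)\equiv m(\zeta^-)\pmod 2$. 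The gap is in your adjoint formula for the monomials. From (\ref{charged-neutral}) one finds $(\phi^\pm_j)^\dag=(-1)^j\phi^\pm_{-j}$, hence
\begin{gather*}
\bigl(\Phi^\pm_\alpha\bigr)^\dag=2^{k/2}\bigl(\phi^\pm_{\alpha_k}\bigr)^\dag\cdots\bigl(\phi^\pm_{\alpha_1}\bigr)^\dag
=(-1)^{\sum_i\alpha_i}\,2^{k/2}\phi^\pm_{-\alpha_k}\cdots\phi^\pm_{-\alpha_1}=\Phi^\pm_{-\alpha},
\end{gather*}
with \emph{no} residual sign, because the factor $(-1)^{\sum_i\alpha_i}$ is already built into the definition of $\Phi^\pm_{-\alpha}$. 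Your claim $(\Phi_{\zeta^\pm})^\dag=(-1)^{|\zeta^\pm|}\Phi_{-\zeta^\pm}$ counts this sign twice, and it is exactly this double-counted sign that you then assemble into the global factor $(-1)^{|\zeta^+|+|\zeta^-|}=(-1)^{|\lambda|}$.

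With the correct relation $(\Phi_{\zeta^\pm})^\dag=\Phi_{-\zeta^\pm}$ your computation goes through verbatim but yields
\begin{gather*}
\Psi^\dag_\lambda={(-1)^{\frac12 r(r+1)+s}\over 2^{r-s}}\sum_{\zeta\in\PP(\alpha,\beta)}\sgn(\zeta)(-1)^{\pi(\zeta^-)}{\rm i}^{m(\zeta^-)}\Phi_{-\zeta^+}\Phi_{-\zeta^-},
\end{gather*}
i.e., the stated formula \emph{without} the prefactor $(-1)^{|\lambda|}$. So the argument, executed correctly, does not land on the statement as printed; it shows instead that the present lemma and Lemma~\ref{Psi-lambda-Phi-mu-Phi-mu} differ by $(-1)^{|\lambda|}$ and cannot both hold with the displayed normalizations. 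A low-rank check supports the adjoint computation: for $\lambda=(0|0)$ one has $\Psi^*_{(0|0)}=\psi_{-1}\psi^\dag_0=\tfrac12\phi^+_{-1}\phi^+_0+\cdots$, whose $\phi^+\phi^+$ term has the sign predicted by the adjoint of Lemma~\ref{Psi-lambda-Phi-mu-Phi-mu} and the opposite of what the extra $(-1)^{|\lambda|}=-1$ would give. Your proposed independent check (direct expansion of (\ref{Psi^dag-lambda})) would have caught this: the signs $(-1)^{\alpha_j}$ coming from $\psi^\dag_{\alpha_j}=(-1)^{\alpha_j}\bigl(\phi^+_{-\alpha_j}+{\rm i}\phi^-_{-\alpha_j}\bigr)/\sqrt2$, together with the prefactor $(-1)^{\sum_j\beta_j}$, are precisely absorbed when the raw monomials are repackaged into $\Phi_{-\zeta^+}\Phi_{-\zeta^-}$, whose definitions carry the compensating $(-1)^{\sum_i\zeta^\pm_i}$ with $\sum_i\zeta^+_i+\sum_i\zeta^-_i=|\lambda|$. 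You should either locate an independent source for the factor $(-1)^{|\lambda|}$ (I do not see one if Lemma~\ref{Psi-lambda-Phi-mu-Phi-mu} is taken at face value) or flag the discrepancy explicitly rather than engineer the sign to match the target.
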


\section{Bilinear expressions}

Below, we assume the condition
$
\hat{g}=\hat{h}^+\hat{h}^-$, and we use Lemma \ref{factorization_lemma} and results of Section \ref{Keb_lemmas}.

\subsection{KP tau functions as a bilinear expression in BKP tau functions}\label{final-bilinear-relations}

Let
\begin{gather*}
p_m(\ab/{-}\bb)=
\sum_{j=1}^N \bigl( a_j^m - (-b_j)^m \bigr),\qquad m>0, \\
\tilde{p}_{m}(\tilde{\ab}/{-}\tilde{\bb}) =
\sum_{j=1}^{\tilde{N}}\bigl( \tilde{a}_j^m - (-\tilde{b}_j)^m \bigr)
,\qquad m>0.
\end{gather*}
For a given $\zb=\bigl(z_1,\dots,z_{N(\zb)}\bigr)$, we define
\begin{gather*}
p^{\rm B}_m(\zb)=
\sum_{j=1}^{N(\zb)} z_j^m ,\qquad
\tilde{p}^{\rm B}_{m}(\tilde{\zb})=
\sum_{j=1}^{m(\tilde{\zb})} \tilde{z}_j^m
,\qquad m>0.
\end{gather*}

{\bf Two-sided KP tau function as a bilinear expression in two-sided BKP tau functions.}
Then
\begin{gather*}
\tau\bigl(\pb(\ab/{-}\bb),\tilde{\pb}\bigl(\tilde{\ab}/{-}\tilde{\bb}\bigr)|\hat{g}\bigr)=
\langle 0| \tilde{\Psi}(\bb^*/{-}\ab^*) \hat{g} \tilde{\Psi}\bigl(\tilde{\ab}/{-}\tilde{\bb}\bigr)|0\rangle
=:\tau\bigl(\ab/{-}\bb;\tilde{\ab}/{-}\tilde{\bb}|\hat{g}\bigr),
\\
\tau^{{\rm B}\pm}\bigl({\pb^{\rm B}}(\zb),\tilde{\pb}^{\rm B}(\tilde{\zb})|\hat{h}^\pm\bigr):=
\langle 0| \tilde{\Phi}^{\pm}(-\zb^*) \hat{h}^{\pm} \tilde{\Phi}^\pm(\tilde{\zb})|0\rangle =:
\tau^{{\rm B}\pm}\bigl(\zb;\tilde{\zb}|\hat{h}^\pm\bigr) .
\end{gather*}

\begin{Theorem}
Suppose $\hat{g}=\hat{h}^{+}\hat{h}^{ -}$. Then
\[
\tau\bigl(\ab/{-}\bb;\tilde{\ab}/{-}\tilde{\bb}|\hat{h}^{+}\hat{h}^{ -}\bigr)
=\sum_{{\zb}\in\PP({\ab},{\bb}) \atop \tilde{\zb}\in\PP(\tilde{\ab},\tilde{\bb})}
\left[\zb^-,\zb^+\atop \ab,\bb\right]\left[\tilde{\zb}^-,\tilde{\zb}^+\atop \tilde{\ab},\tilde{\bb}\right]^*
\tau^{{\rm B}\pm}\bigl(\zb^+;\tilde{\zb}^+|\hat{h}^+\bigr)\tau^{{\rm B}\pm}\bigl(\zb^-;\tilde{\zb}^-|\hat{h}^-\bigr) ,
\]
where the factors in square brackets are given by \eqref{d}.
\end{Theorem}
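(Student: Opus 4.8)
The plan is to reduce everything to the factorization Lemma~\ref{factorization_lemma}, exactly as in the lattice case, but now expanding \emph{both} sides by polarization. First I would write the two-sided KP tau function in its fermionic form
\[
\tau(\ab/{-}\bb;\tilde\ab/{-}\tilde\bb|\hat h^+\hat h^-)=\langle 0|\tilde\Psi(\bb^*/{-}\ab^*)\,\hat h^+\hat h^-\,\tilde\Psi(\tilde\ab/{-}\tilde\bb)|0\rangle,
\]
and then insert the polarization expansions: Lemma~\ref{Polarization-fields*} for the left factor $\tilde\Psi(\bb^*/{-}\ab^*)$ and Lemma~\ref{Polarization-fields} for the right factor $\tilde\Psi(\tilde\ab/{-}\tilde\bb)$. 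This turns the single VEV into a double sum over polarizations $(\zb^+,\zb^-)\in\PP(\bb^*,\ab^*)$ and $(\tilde\zb^+,\tilde\zb^-)\in\PP(\tilde\ab,\tilde\bb)$, with prefactors $\tilde{\textsc{d}}^{\zb^+,\zb^-}_{\bb^*,\ab^*}$ and $\tilde{\textsc{d}}^{\tilde\zb^+,\tilde\zb^-}_{\tilde\ab,\tilde\bb}$, of VEVs of the shape $\langle 0|\tilde\Phi^+(\zb^+)\tilde\Phi^-(\zb^-)\,\hat h^+\hat h^-\,\tilde\Phi^+(\tilde\zb^+)\tilde\Phi^-(\tilde\zb^-)|0\rangle$. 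The involution $*$ relabels $\PP(\bb^*,\ab^*)$ as $\PP(\ab,\bb)$, which is what converts the starred coordinates into the left arguments $-\zb^*$ of the BKP tau functions and, together with the swapped argument order in $\left[\zb^-,\zb^+\atop\ab,\bb\right]$ versus the definition \eqref{d*}, accounts for the $+\leftrightarrow-$ relabeling.

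The central step is to disentangle the $\phi^+$ and $\phi^-$ sectors. Since $\hat h^\pm$ are even exponentials built from $\{\phi^\pm_i\}$ and $[\phi^+_j,\phi^-_k]_+=0$, the operator $\hat h^+$ commutes past any product of $\phi^-$'s and $\hat h^-$ past any product of $\phi^+$'s; the only nontrivial sign arises from moving $\tilde\Phi^+(\tilde\zb^+)$ to the left past $\tilde\Phi^-(\zb^-)$, giving $(-1)^{N(\tilde\zb^+)N(\zb^-)}$. This brings the VEV into the form $\langle 0|U^+U^-|0\rangle$ with $U^\pm=\tilde\Phi^\pm(\zb^\pm)\hat h^\pm\tilde\Phi^\pm(\tilde\zb^\pm)$ lying entirely in the $\phi^\pm$ subalgebra. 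The polarization constraints $N(\zb^+)\equiv N(\zb^-)$ and $N(\tilde\zb^+)\equiv N(\tilde\zb^-)\pmod 2$ force $U^+$ and $U^-$ to have the same parity, so the vanishing case of Lemma~\ref{factorization_lemma} never occurs. In the even case the VEV factors as $\langle 0|U^+|0\rangle\langle 0|U^-|0\rangle$; in the odd case as $2\mathrm{i}\,\langle 0|U^+\phi^+_0|0\rangle\langle 0|U^-\phi^-_0|0\rangle$. Recognizing that appending $\phi^\pm_0$ is precisely the passage to the supplemented sets $\hat\zb^\pm,\hat{\tilde\zb}^\pm$ of Definition~\ref{supplemented-coordinate-sets}, both cases identify $\langle0|U^\pm(\cdots)|0\rangle$ with the two-sided BKP tau function $\tau^{\mathrm B\pm}(\zb^\pm;\tilde\zb^\pm|\hat h^\pm)$.

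Finally I would collect the constants. The factor $2\mathrm{i}$ of the odd case, the powers $\mathrm{i}^{N(\zb^+)}$ and $\mathrm{i}^{N(\tilde\zb^-)}$ inside the two $\tilde{\textsc{d}}$'s, and the reordering sign $(-1)^{N(\tilde\zb^+)N(\zb^-)}$ combine with the $(-1)^{\pi(\cdot)}$ and $2^{-(N-q)}$ prefactors, while the normalizations of $\tilde\Phi$ and $\tilde\Psi$ supply the ratios $\Delta^{\mathrm B}(\zb^+)\Delta^{\mathrm B}(\zb^-)/\Delta$. The claim is that these reassemble into $\left[\zb^-,\zb^+\atop\ab,\bb\right]$ from \eqref{d*} for the left factor and $\left[\tilde\zb^-,\tilde\zb^+\atop\tilde\ab,\tilde\bb\right]^*$ from \eqref{d} for the right factor, the starred/unstarred distinction recording whether the source was $\tilde\Psi(\tilde\ab/{-}\tilde\bb)$ or $\tilde\Psi(\bb^*/{-}\ab^*)$, and the extra $\prod\tilde a_i\tilde b_i$ and $(-1)^{\tilde N^2}$ in \eqref{d} coming from \eqref{Delta(-b^-1,-a^-1)=}. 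The main obstacle is exactly this bookkeeping: checking that the even and odd parity contributions, the involution-induced $+\leftrightarrow-$ relabeling, and the scattered powers of $-1$, $\mathrm i$, and $2$ collapse to the precise bracket symbols --- a computation of the same character as, but longer than, those in Lemmas~\ref{Polarization-fields}--\ref{Psi-lambda-Phi-mu-Phi-mu}.
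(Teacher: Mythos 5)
Your proposal is correct and follows exactly the route the paper indicates: the paper's own proof is the single sentence ``Proof follows from Lemmas \ref{Polarization-fields}, \ref{Polarization-fields*} and \ref{factorization_lemma},'' and your expansion --- polarizing the left factor via Lemma \ref{Polarization-fields*}, the right via Lemma \ref{Polarization-fields}, separating the $\phi^+$ and $\phi^-$ sectors, and invoking Lemma \ref{factorization_lemma} with the parity argument ruling out the vanishing case --- is precisely that argument spelled out. The sign and normalization bookkeeping you flag as the remaining obstacle is indeed all that the paper leaves implicit.
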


Proof follows from Lemmas \ref{Polarization-fields}, \ref{Polarization-fields*}
and \ref{factorization_lemma}.

Let us consider the case $\hat{g}=1$ (i.e., $\hat{g}_{\mu,\lambda}=\delta_{\mu,\lambda}$ for
$\mu,\lambda\in\Pa$) and $\bb=\ab$, \smash{${\tilde{\ab}=\tilde{\bb}}$,~${N=1}$}, then
\[
 \tau\bigl(\ab/{-}\bb;\tilde{\ab}/{-}\tilde{\bb}\bigr) =\frac{1}{(x+y)(\tilde{x}+\tilde{y})},
\]
$z^\pm_1=x$, $ z^\mp_2=y$, $ \tilde{z}^\pm_1=\tilde{x}$, $ \tilde{z}^\mp_2=\tilde{y}$.

{\bf Lattice KP tau function as a bilinear expression in the lattice BKP tau functions.}
Denote
$\pi_{(\alpha|\beta)}(\hat{g}) (\ab/{-}\bb):=
\langle 0 | \tilde{\Psi}(\bb^*/{-}\ab^*) \hat{g} \Psi_\lambda | 0 \rangle$,
labeled by partitions $\lambda =(\alpha|\beta)$ and
a $B_\infty$ lattice of BKP $\tau$-functions
\begin{gather*}
\kappa_{\alpha} (\hat{h})(\zb) :=
\langle 0 | \tilde{\Phi}^\pm(-\zb^*) \hat{h}^\pm \Phi^\pm_{\alpha} |0 \rangle .
\end{gather*}

\begin{rem}
The case $\lambda=0$ describes the usual (one-side) tau function.
\end{rem}

{\samepage

\begin{Theorem}\label{lattice-tau}
Suppose $\hat{g}=\hat{h}^{+}\hat{h}^{ -}$. Then
\[
 \pi_{(\alpha|\beta)}(\hat{g})(\ab/{-}\bb) =
\sum_{\zb\in\PP(\bb^*,\ab^*)\atop \mu\in\PP(\alpha,\beta)}
 \left[\zb^+,\zb^-\atop \ab,\bb\right] \left[ \zeta^+,\zeta^-\atop \alpha,\beta\right]
\kappa_{\zeta^+} \bigl(\hat{h}\bigr)(\zb^+)
\kappa_{\zeta^-} \bigl(\hat{h}\bigr)(\zb^-),
\]
where factors in square brackets are given by \eqref{a} and \eqref{d}.
In particular $(\hat{g}=1)$,
\[
 s_{(\alpha|\beta)}(\ab/{-}\bb) =
\sum_{\zb\in\PP(\bb^*,\ab^*)\atop \mu\in\PP(\alpha,\beta)}
\left[\zb^+,\zb^-\atop \ab,\bb\right] \left[ \zeta^+,\zeta^-\atop \alpha,\beta\right]
Q_{\zeta^+}(\zb^+)
Q_{\zeta^-}(\zb^-) .
\]
\end{Theorem}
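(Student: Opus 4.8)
The plan is to compute $\pi_{(\alpha|\beta)}(\hat g)(\ab/{-}\bb)=\langle 0|\tilde\Psi(\bb^*/{-}\ab^*)\,\hat g\,\Psi_\lambda|0\rangle$ by inserting the two neutral-fermion expansions furnished by the key lemmas and then splitting the resulting vacuum expectation value. First I would replace $\tilde\Psi(\bb^*/{-}\ab^*)$ by its expansion from Lemma~\ref{Polarization-fields*}, a sum over $(\zb^+,\zb^-)\in\PP(\bb^*,\ab^*)$ of $\tilde{\textsc{d}}^{\zb^+,\zb^-}_{\bb^*,\ab^*}\,\tilde\Phi^+(\zb^+)\tilde\Phi^-(\zb^-)$, and replace $\Psi_\lambda$ by its expansion from Lemma~\ref{Psi-lambda-Phi-mu-Phi-mu}, a sum over $\zeta=(\zeta^+,\zeta^-)\in\PP(\alpha,\beta)$ of $\Phi^+_{\zeta^+}\Phi^-_{\zeta^-}$ carrying the scalar prefactor displayed in~(\ref{polariz_sum}). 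Invoking the hypothesis $\hat g=\hat h^+\hat h^-$, the quantity becomes a double sum whose generic summand is a scalar times $\langle 0|\tilde\Phi^+(\zb^+)\tilde\Phi^-(\zb^-)\,\hat h^+\hat h^-\,\Phi^+_{\zeta^+}\Phi^-_{\zeta^-}|0\rangle$.

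Next I would bring every $\phi^+$-type factor to the left of every $\phi^-$-type factor, that is, rewrite the VEV as $\langle 0|\bigl(\tilde\Phi^+(\zb^+)\hat h^+\Phi^+_{\zeta^+}\bigr)\bigl(\tilde\Phi^-(\zb^-)\hat h^-\Phi^-_{\zeta^-}\bigr)|0\rangle$. This uses two facts: the operators $\hat h^\pm$ of~(\ref{h=exp}) have even degree and hence commute through the opposite-sign neutral fermions, while distinct-sign modes $\phi^+$ and $\phi^-$ mutually anticommute by~(\ref{neutral-canonical}); carrying $\Phi^+_{\zeta^+}$ leftward past $\tilde\Phi^-(\zb^-)$ then produces the single sign $(-1)^{m(\zeta^+)N(\zb^-)}$. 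Writing $U^+:=\tilde\Phi^+(\zb^+)\hat h^+\Phi^+_{\zeta^+}$ and $U^-:=\tilde\Phi^-(\zb^-)\hat h^-\Phi^-_{\zeta^-}$, and noting that $m(\zeta^+)+m(\zeta^-)=2r$ and $N(\zb^+)+N(\zb^-)=2N$ force $U^+$ and $U^-$ to have equal parity, I would apply Lemma~\ref{factorization_lemma}: the VEV factors as $\langle 0|U^+|0\rangle\langle 0|U^-|0\rangle$ when both are even, and as $2{\rm i}\,\langle 0|U^+\phi^+_0|0\rangle\langle 0|U^-\phi^-_0|0\rangle$ when both are odd.

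Each factor is then a lattice BKP tau function: after matching the $*$-involution in the definition of $\kappa$ preceding the theorem with the starred coordinates produced by Lemma~\ref{Polarization-fields*}, one has $\langle 0|\tilde\Phi^\pm(\zb^\pm)\hat h^\pm\Phi^\pm_{\zeta^\pm}|0\rangle=\kappa_{\zeta^\pm}(\hat h)(\zb^\pm)$, while in the odd branch the inserted $\phi^\pm_0$ promotes $\Phi^\pm_{\zeta^\pm}$ to the supplemented product $\Phi^\pm_{\hat{\zeta}^\pm}$, which is exactly what Definitions~\ref{supplemented-partition} and~\ref{supplemented-coordinate-sets} are designed to record. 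It remains to collect the scalar prefactors — the coefficient $\tilde{\textsc{d}}^{\zb^+,\zb^-}_{\bb^*,\ab^*}$ of Lemma~\ref{Polarization-fields*} (whose $\Delta^{\rm B}$'s and denominator $\Delta(\bb^*/{-}\ab^*)$ turn into those of~(\ref{d}) via~(\ref{Delta(-b^-1,-a^-1)=})), the prefactor of~(\ref{polariz_sum}), the reordering sign $(-1)^{m(\zeta^+)N(\zb^-)}$, and the powers of ${\rm i}$ together with the $2{\rm i}$ of the odd branch — and to verify that they assemble into the product of the bracket weights $\left[\zb^+,\zb^-\atop\ab,\bb\right]$ of~(\ref{d}) and $\left[\zeta^+,\zeta^-\atop\alpha,\beta\right]$ of~(\ref{a}). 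This final reconciliation is the main obstacle: one must correctly combine the powers of $-1$ and ${\rm i}$, the $2^{N-q}$ and $2^{r-s}$ denominators, and in particular the exponents $\frac12 m(\hat{\zeta}^-)$ and $\frac12 m(\hat{\zb}^-)$ in~(\ref{a}) and~(\ref{d}), which originate precisely in the odd-parity branch of the factorization lemma. Specializing to $\hat g=1$ then forces $\hat h^\pm=1$, whence $\kappa_{\zeta^\pm}(1)(\zb^\pm)=Q_{\zeta^\pm}(\zb^\pm)$ and, by~(\ref{s-PsiPsi}), $\pi_{(\alpha|\beta)}(1)(\ab/{-}\bb)=s_{(\alpha|\beta)}(\ab/{-}\bb)$, giving the stated particular case.
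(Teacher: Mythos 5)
Your proposal follows exactly the route the paper takes: the paper's entire proof is the single sentence ``Proof follows from Lemmas \ref{Psi-lambda-Phi-mu-Phi-mu}, \ref{Polarization-fields*} and \ref{factorization_lemma}'', and you invoke precisely these three ingredients in the intended order — expand $\tilde\Psi(\bb^*/{-}\ab^*)$ and $\Psi_\lambda$ into neutral-fermion polarization sums, split the VEV with the factorization lemma (including the odd branch that produces the supplemented sets and the $\tfrac12 m(\hat\zeta^-)$, $\tfrac12 m(\hat\zb^-)$ exponents), and identify the factors as $\kappa$'s. Your write-up is in fact more detailed than the paper's, and the bookkeeping of signs and powers of $2$ and ${\rm i}$ that you flag as the remaining work is exactly what the paper leaves implicit.
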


Proof follows from Lemmas \ref{Psi-lambda-Phi-mu-Phi-mu}, \ref{Polarization-fields*} and \ref{factorization_lemma}.

}

{\bf Bi-lattice KP and bi-lattice BKP tau functions.}
In view of the notion of the lattice tau functions (\ref{lattice_KP_series}) and (\ref{lattice_BKP_series}),
it is natural to call $\hat{g}_{\lambda,\tilde{\lambda}}$ bi-lattice KP tau function and to call
$\hat{h}_{\mu,\tilde{\mu}}$ bi-lattice BKP tau function. Then we get the following theorem.
\begin{Theorem}
\[
\hat{g}_{(\alpha|\beta)),(\tilde{\alpha}|\tilde{\beta})}=
\sum_{(\zeta^+,\zeta^-)\in \PP(\alpha,\beta)\atop (\tilde{\zeta}^+,\tilde{\zeta}^-)\in \PP(\tilde{\alpha},\tilde{\beta})}
\left[\tilde{\zeta}^+,\tilde{\zeta}^-\atop \tilde{\alpha}, \tilde{\beta} \right]
\left[ \zeta^+,\zeta^-\atop \alpha,\beta \right]
 \hat{h}^-_{\zeta^-,\tilde{\zeta}^-}\hat{h}^+_{\zeta^+,\tilde{\zeta}^+}.
\]
\end{Theorem}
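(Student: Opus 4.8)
The plan is to evaluate the matrix element directly as a fermionic vacuum expectation value and push everything down to neutral fermions. By the definition \eqref{g_mu_lambda} together with the hypothesis $\hat g=\hat h^+\hat h^-$,
\[
\hat g_{(\alpha|\beta),(\tilde\alpha|\tilde\beta)}=\langle 0|\Psi^\dag_{(\alpha|\beta)}\,\hat h^+\hat h^-\,\Psi_{(\tilde\alpha|\tilde\beta)}|0\rangle .
\]
First I would expand $\Psi_{(\tilde\alpha|\tilde\beta)}$ by Lemma~\ref{Psi-lambda-Phi-mu-Phi-mu} and $\Psi^\dag_{(\alpha|\beta)}$ by its adjoint companion (the $\Psi^\dag_\lambda$ expansion stated just below it), turning the right-hand side into a double sum over polarizations $\zeta\in\PP(\alpha,\beta)$ and $\tilde\zeta\in\PP(\tilde\alpha,\tilde\beta)$, each term being an explicit scalar prefactor times $\langle 0|\Phi^+_{-\zeta^+}\Phi^-_{-\zeta^-}\,\hat h^+\hat h^-\,\Phi^+_{\tilde\zeta^+}\Phi^-_{\tilde\zeta^-}|0\rangle$.

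Next I would decouple the $\phi^+$-sector from the $\phi^-$-sector. Since $[\phi^+_j,\phi^-_k]_+=0$ by \eqref{neutral-canonical} and each $\hat h^\pm$ is of even degree, I can move $\Phi^-_{-\zeta^-}$ to the right past $\hat h^+$ (no sign) and past $\Phi^+_{\tilde\zeta^+}$ (picking up $(-1)^{m(\zeta^-)m(\tilde\zeta^+)}$), collecting the operator into $U^+U^-$ with $U^+=\Phi^+_{-\zeta^+}\hat h^+\Phi^+_{\tilde\zeta^+}$ and $U^-=\Phi^-_{-\zeta^-}\hat h^-\Phi^-_{\tilde\zeta^-}$. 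Because $m(\zeta^+)\equiv m(\zeta^-)$ and $m(\tilde\zeta^+)\equiv m(\tilde\zeta^-)$ modulo $2$, the factors $U^+$ and $U^-$ always have equal degree-parity, so Lemma~\ref{factorization_lemma} applies in either its even--even or its odd--odd branch, never the vanishing mixed branch.

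Then I would identify each factorized piece with a BKP matrix element. In the even branch ($m(\zeta^+)\equiv m(\tilde\zeta^+)$) the lemma gives $\langle 0|U^+|0\rangle\langle 0|U^-|0\rangle$, and each factor is literally $\hat h^\pm_{\zeta^\pm,\tilde\zeta^\pm}$ by its definition. In the odd branch the lemma produces $2{\rm i}\,\langle 0|U^+\phi^+_0|0\rangle\langle 0|U^-\phi^-_0|0\rangle$; here I would absorb the extra zero mode through $\Phi^\pm_{\tilde\zeta^\pm}\phi^\pm_0=2^{-1/2}\Phi^\pm_{\hat{\tilde\zeta}^\pm}$ and the dual relation for $\phi^\pm_0\Phi^\pm_{-\zeta^\pm}$, together with the zero-mode action \eqref{phi_0_vac_r} and $(\phi^\pm_0)^2=\frac12$, so that both factors once more become $\hat h^\pm_{\zeta^\pm,\tilde\zeta^\pm}$ evaluated on the supplemented partitions of Definition~\ref{supplemented-partition}.

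Finally I would reassemble the scalar prefactors. The units ${\rm i}^{m(\zeta^-)}$, ${\rm i}^{m(\tilde\zeta^-)}$ from the two expansion lemmas, the reordering sign $(-1)^{m(\zeta^-)m(\tilde\zeta^+)}$, and in the odd branch the factor $2{\rm i}$ with the powers of $2^{\pm 1/2}$, must recombine into the real phases $(-1)^{\frac12 m(\hat\zeta^-)}$ and $(-1)^{\frac12 m(\hat{\tilde\zeta}^-)}$ that convert the raw lemma coefficients into the bracket symbols $\left[\zeta^+,\zeta^-\atop \alpha,\beta\right]$ and $\left[\tilde\zeta^+,\tilde\zeta^-\atop \tilde\alpha,\tilde\beta\right]$ of \eqref{a}. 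I expect this parity-by-parity accounting of the imaginary units and the $\sqrt2$'s --- and in particular checking that the factor $(-1)^{|\lambda|}$ produced by the $\Psi^\dag$-expansion is absorbed into the remaining signs and leaves no spurious factor --- to be the only genuine obstacle; once it is settled, the resulting double sum matches the claimed expression term by term.
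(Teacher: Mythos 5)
Your proposal follows exactly the route the paper intends: the paper states this theorem without a written-out proof, but (as with the adjacent theorems, which are each dispatched by "Proof follows from Lemmas \ref{Psi-lambda-Phi-mu-Phi-mu} \dots and \ref{factorization_lemma}") the argument is precisely to expand $\Psi^\dag_{(\alpha|\beta)}$ and $\Psi_{(\tilde\alpha|\tilde\beta)}$ via Lemma \ref{Psi-lambda-Phi-mu-Phi-mu} and its adjoint companion, separate the $\phi^+$ and $\phi^-$ sectors, and apply the factorization Lemma \ref{factorization_lemma}, with the zero-mode/supplemented-partition bookkeeping handled as in \eqref{orthonormalitb_Psi}--\eqref{orthogonalitb_Phi}. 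Your parity analysis (the mixed branch of the factorization lemma never occurs because $m(\zeta^+)\equiv m(\zeta^-)$ and $m(\tilde\zeta^+)\equiv m(\tilde\zeta^-)$ mod $2$) and your identification of the factors with $\hat h^\pm_{\zeta^\pm,\tilde\zeta^\pm}$ are correct, so the proposal is sound and matches the paper's approach.
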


\begin{cor} For $\hat{g}=1$, we have
\[
\delta_{\alpha,\tilde{\alpha}}\delta_{\beta\tilde{\beta}}=
\sum_{(\zeta^+,\zeta^-)\in \PP(\alpha,\beta)\atop (\tilde{\zeta}^+,\tilde{\zeta}^-)\in \PP(\tilde{\alpha},\tilde{\beta})}
\left[\tilde{\zeta}^+,\tilde{\zeta}^-\atop \tilde{\alpha}, \tilde{\beta} \right]
\left[ \zeta^+,\zeta^-\atop \alpha,\beta \right]
 \delta_{\zeta^-,\tilde{\zeta}^-}\delta_{\zeta^+,\tilde{\zeta}^+}.
\]
\end{cor}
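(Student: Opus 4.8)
The plan is to derive the Corollary as the specialization $\hat{g}=1$ of the preceding Theorem, so that essentially no new computation is required beyond evaluating the two kinds of matrix elements at the identity and observing that the polarization weights in square brackets are carried over unchanged.

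First I would treat the left-hand side. When $\hat{g}=1$, the defining matrix element $\hat{g}_{\mu,\lambda}=\langle 0|\Psi^\dag_\mu\hat{g}\Psi_\lambda|0\rangle$ of (\ref{g_mu_lambda}) reduces to $\langle 0|\Psi^\dag_\mu\Psi_\lambda|0\rangle$, which by the orthonormality of Schur functions recorded in (\ref{ss}) equals $\delta_{\mu,\lambda}$. Taking $\mu=(\alpha|\beta)$ and $\lambda=(\tilde{\alpha}|\tilde{\beta})$ and recalling that two partitions coincide precisely when their Frobenius coordinates coincide, this yields $\hat{g}_{(\alpha|\beta),(\tilde{\alpha}|\tilde{\beta})}=\delta_{\alpha,\tilde{\alpha}}\delta_{\beta,\tilde{\beta}}$, which is exactly the left-hand side of the Corollary.

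Next I would treat the right-hand side. Since the Theorem assumes $\hat{g}=\hat{h}^+\hat{h}^-$, for $\hat{g}=1$ I take the factorization furnished by Lemma \ref{g=h^+h^-Lemma} with antisymmetric matrix $B=0$, namely $\hat{h}^+=\hat{h}^-=1$. By Remark \ref{smallSchur}, each lattice BKP tau function then collapses to a single $Q$-function, $K_\mu(\zb|\hat{h}^\pm=1)=Q_\mu(\zb)$; comparing this with the expansion (\ref{lattice_BKP_series}) forces the bi-lattice BKP coefficients to satisfy $\hat{h}^\pm_{\zeta^\pm,\tilde{\zeta}^\pm}=\delta_{\zeta^\pm,\tilde{\zeta}^\pm}$ at the identity. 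Substituting the two reductions into the formula of the Theorem then produces precisely the claimed identity.

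The only point requiring care is the normalization convention for the neutral-fermion matrix elements. The \emph{raw} pairing at the identity is $\langle 0|\Phi^\pm_{-\zeta^\pm}\Phi^\pm_{\tilde{\zeta}^\pm}|0\rangle=2^{\ell(\zeta^\pm)}\delta_{\zeta^\pm,\tilde{\zeta}^\pm}$ by (\ref{QQ}), so I must verify that the $\hat{h}^\pm_{\zeta^\pm,\tilde{\zeta}^\pm}$ appearing in the Theorem are the coefficients of the expansion (\ref{lattice_BKP_series}), in which the factor $2^{-\ell}$ is already absorbed; this is exactly what yields the clean $\delta_{\zeta^\pm,\tilde{\zeta}^\pm}$ rather than $2^{\ell(\zeta^\pm)}\delta_{\zeta^\pm,\tilde{\zeta}^\pm}$. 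Once this bookkeeping is confirmed, the Corollary follows immediately, and no further argument about the polarization sum is needed.
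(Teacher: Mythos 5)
Your proposal is correct and follows exactly the route the paper intends: the Corollary is the immediate specialization $\hat{g}=\hat{h}^+=\hat{h}^-=1$ of the preceding bi-lattice Theorem, with $\hat{g}_{\mu,\lambda}=\delta_{\mu,\lambda}$ from the orthonormality (\ref{ss}) and $\hat{h}^\pm_{\zeta^\pm,\tilde{\zeta}^\pm}=\delta_{\zeta^\pm,\tilde{\zeta}^\pm}$ in the normalization of (\ref{lattice_BKP_series}). Your closing remark on the $2^{\ell}$ bookkeeping is well taken -- the paper's raw pairing (\ref{QQ}) versus the normalized expansion coefficients do differ by that factor, and the convention you select is the one forced by Remark \ref{smallSchur} and by the clean Kronecker deltas in the statement -- so nothing further is needed.
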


The simple nontrivial example of the bi-lattice KP tau function $\hat{g}_{\lambda,\tilde{\lambda}}$ is the product
$s^*_\mu(\lambda)\dim\,\lambda$ where $s^*_\mu(\lambda)$ is the so-called shifted Schur function
introduced by Okounkov in \cite{EOP} and $\dim\lambda$ is the number of standard tableaux of the shape
$\lambda$, see \cite{Mac1}. The shifted Schur functions were used in an approach to the representation theory
developed by G. Olshanski and A. Okounkov in~\cite{OO}.
The simple nontrivial example of the bi-lattice BKP tau function $h_{\mu,\nu}$
is the product $Q^*_\mu(\nu) \dim^{\rm B}\mu$, where $Q^*_\mu(\nu)$ is the shifted projective Schur function
introduced by Ivanov in \cite{Iv} and $\dim^{\rm B}\mu$ is the number of the shifted standard tableaux of
the shape $\mu$. Functions $Q^*_\mu(\nu)$ are of use in the study of spin Hurwitz numbers \cite{MMN2019}.

The relation between shifted Schur functions and the shifted projective Schur functions was written
done in \cite{O-2021}.

\section[Summary of other formulas {[30]}]{Summary of other formulas \cite{O-2021}}

\subsection[Relation between characters of symmetric group and characters of Sergeev group]{Relation between characters of symmetric group \\ and characters of Sergeev group}

For a given $\Delta\in\OP$, one can split its parts into two ordered odd partitions $(\Delta^+,\Delta^-)$:
$\Delta=\Delta^+\cup\Delta^-$, $\Delta^+,\Delta^- \in OP$,
$\ell(\Delta^+) + \ell(\Delta^-)=\ell(\Delta)$. The set of all such $(\Delta^+,\Delta^-)$ we denote
by $O\PP(\Delta)$.

From $J_n=J^{\rm B+}_n+J^{\rm B^-}_n,\,n\,\text{odd}$ (see \cite{JM}), we obtain the following lemma.
\begin{lem} \label{Polarization-currents}
\[
J_\Delta = \sum_{(\Delta^+,\Delta^-)\in O\PP} J^{{\rm B}+}_{\Delta^+}J^{{\rm B}-}_{\Delta^-},
\qquad
J_{-\Delta} = \sum_{\Delta^+\in \OP\atop \Delta^+\cup \Delta^-=\Delta}
J^{{\rm B}+}_{-\Delta^+}J^{{\rm B}-}_{-\Delta^-},
\]
\end{lem}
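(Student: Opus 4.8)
The plan is to reduce the statement to a purely combinatorial expansion of a product of pairwise commuting operators. I would start from the definition \eqref{J-Delta}, $J_\Delta=\prod_{i=1}^{\ell(\Delta)}J_{\Delta_i}$, and, since every part $\Delta_i$ is odd, replace each factor using \eqref{J=J^B+J^B}, so that $J_\Delta=\prod_{i=1}^{\ell(\Delta)}\bigl(J^{{\rm B}+}_{\Delta_i}+J^{{\rm B}-}_{\Delta_i}\bigr)$. The analogous substitution $J_{-\Delta}=\prod_{i=1}^{\ell(\Delta)}\bigl(J^{{\rm B}+}_{-\Delta_i}+J^{{\rm B}-}_{-\Delta_i}\bigr)$ handles the second identity, since $-\Delta_i$ is again odd.

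The key observation is that all the BKP currents occurring in these two products commute with one another, so the products expand exactly as they would for ordinary commuting scalars. Indeed, the cross relation $\bigl[J^{{\rm B}+}_m,J^{{\rm B}-}_n\bigr]=0$ from \eqref{Hei-B} kills every mixed commutator, while the same-type relation $\bigl[J^{{\rm B}\pm}_m,J^{{\rm B}\pm}_n\bigr]=\frac12 m\,\delta_{m+n,0}$ vanishes whenever $m$ and $n$ have the same sign, because then $m+n\neq 0$. In $J_\Delta$ all indices $\Delta_i$ are positive and in $J_{-\Delta}$ all indices are negative, so in each case every pair of factors commutes; hence no reordering signs or correction terms arise.

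It then remains to organize the expansion. Expanding $\prod_i\bigl(J^{{\rm B}+}_{\Delta_i}+J^{{\rm B}-}_{\Delta_i}\bigr)$ produces one term for each way of selecting, for every index $i$, either the $+$ summand or the $-$ summand; collecting the chosen $+$ factors into $J^{{\rm B}+}_{\Delta^+}$ and the chosen $-$ factors into $J^{{\rm B}-}_{\Delta^-}$ (legitimate precisely because the factors commute) gives $J^{{\rm B}+}_{\Delta^+}J^{{\rm B}-}_{\Delta^-}$, where $(\Delta^+,\Delta^-)$ is the induced splitting of the parts of $\Delta$. Summing over all selections yields the sum over $O\PP(\Delta)$, and the identical argument with negative indices yields the second formula.

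I expect the only real care to be in the bookkeeping of the last step: one must check that the assignment (a choice of $\pm$ for each part) $\mapsto$ ordered pair $(\Delta^+,\Delta^-)$ matches the definition of $O\PP(\Delta)$ exactly, and in particular that when $\Delta$ has repeated parts the multiplicities on the two sides agree. Since the selections are indexed by the positions $i=1,\dots,\ell(\Delta)$ while $O\PP(\Delta)$ is defined as splittings of the multiset of parts into ordered odd subpartitions, this is a routine but not entirely automatic verification; everything else is immediate from \eqref{J=J^B+J^B} and \eqref{Hei-B}.
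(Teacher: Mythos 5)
Your proof follows exactly the paper's (one-line) argument: the paper simply asserts that the lemma follows from $J_n=J^{{\rm B}+}_n+J^{{\rm B}-}_n$ for $n$ odd, and your expansion of $\prod_i\bigl(J^{{\rm B}+}_{\Delta_i}+J^{{\rm B}-}_{\Delta_i}\bigr)$ using the commutativity guaranteed by \eqref{Hei-B} (all indices of one sign, so $m+n\neq 0$) is precisely the intended filling-in of that step. The multiplicity caveat you flag for repeated parts of $\Delta$ is a genuine subtlety the paper glosses over: the sum must be read as running over splittings of the parts by position (i.e., of the multiset), so that, e.g., for $\Delta=(3,3)$ the mixed term appears with coefficient $2$.
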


It is known \cite{Mac1} that the power sums labeled by partitions
$\pb_\Delta =p_{\Delta_1}p_{\Delta_2}\cdots$, $\Delta\in\Pa$
(here $ \pb = (p_1, p_2, p_3, \dots) $) are uniquely expressed
in terms of Schur polynomials
\be\label{p-s-chi}
\pb_\Delta =\sum_{\lambda\in\Pa} \chi_\lambda(\Delta) s_\lambda(\pb),
\ee
while the odd power sum variables (power sums labeled by odd numbers)
$\pb_\Delta =p_{\Delta_1}p_{\Delta_2}\cdots$, $\Delta\in\OP$
also denoted by $\pb^{\rm B}_\Delta$ \big(where ${\pb^{\rm B}}=(p_1,p_3,p_5,\dots)$\big)
 are uniquely expressed in terms of projective Schur polynomials
\begin{gather*}
\pb^{\rm B}_\Delta =\sum_{\alpha\in\DP} \chi^{\rm B}_\alpha(\Delta) Q_\alpha\bigl({\pb^{\rm B}}\bigr)=\pb_\Delta,
\qquad \Delta \in \OP.
\end{gather*}

Let me recall that the coefficients $\chi_\lambda(\Delta)$ in (\ref{p-s-chi}) has the meaning of
the irreducible characters of the symmetric group $S_d$, $d=|\lambda|$ evaluated on the cycle class $C_\Delta$,
$\Delta=(\Delta_1,\dots,\Delta_k) $, ${|\lambda|=|\Delta|=d}$, see \cite{Mac1}, and we can write it as
$\chi_\lambda(\Delta) = \langle 0| J_\Delta \Psi_\lambda |0\rangle$,
where $J_\Delta $ and $\Psi_\lambda$ are given by~(\ref{J-Delta}) and~(\ref{Psi-lambda}), respectively
(see, for instance, \cite{MMNO} for details).

The characters of symmetric groups have a very wide application,
in particular, in mathematical physics. I will give two works as an example \cite{ItMirMor,MMS-knots}.

The notion of Sergeev group was introduced in \cite{EOP}.
The coefficient $\chi^{\rm B}_\alpha$ is the irreducible char\-ac\-ter of this group \cite{EOP,Serg}.
As it was shown in \cite{EOP}, the so-called spin Hurwitz numbers (introduced in this work) are expressed in terms
of these characters.
As it was pointed out in~\cite{Lee2018,MMN2019}, the generating function for spin Hurwitz numbers
can be related to the BKP hierarchy in a similar way as the generating function for usual Hurwitz numbers
is related to the KP (and also to Toda lattice) hierarchies \cite{MMN2011, Okoun2000,Okounkov-Pand-2006}.

We can write these characters in terms of the BKP currents $J^{\rm B}_m$ ($m$ odd)
\[
 \chi^{\rm B}_\alpha(\Delta) =2^{-\ell(\alpha)} \langle 0|J^{\rm B}_{\Delta}\Phi_{\alpha} |0\rangle,
\]
see \cite{MMNO}.

From Lemma \ref{Psi-lambda-Phi-mu-Phi-mu} and (\ref{Polarization-currents}), we obtain the following theorem.

\begin{Theorem}
The character $\chi_\lambda,\,\lambda=(\alpha|\beta)$ evaluated on an odd cycle $\Delta\in\OP$ is the bilinear
function of the Sergeev characters as follows:
\[
\chi_{(\alpha|\beta)}(\Delta)=
\sum_{ (\nu^+,\nu^-)\in\PP(\alpha,\beta) \atop (\Delta^+,\Delta^-)\in O\PP(\Delta)} 2^{\ell(\nu^+) + \ell(\nu^+) }
\left[ \nu^+,\nu^- \atop \alpha,\beta\right]
\chi^{\rm B+}_{\nu^+}\bigl(\Delta^+\bigr)\chi^{\rm B-}_{\nu^-}(\Delta^-) .
\]
\end{Theorem}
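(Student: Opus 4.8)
The plan is to evaluate $\chi_{(\alpha|\beta)}(\Delta)$ from its fermionic form $\chi_{(\alpha|\beta)}(\Delta)=\langle 0|J_\Delta\Psi_\lambda|0\rangle$ by decomposing both operators into their neutral $+$ and $-$ constituents and then factorizing the resulting vacuum expectation value. First I would substitute the current splitting of Lemma~\ref{Polarization-currents}, $J_\Delta=\sum_{(\Delta^+,\Delta^-)\in O\PP(\Delta)}J^{{\rm B}+}_{\Delta^+}J^{{\rm B}-}_{\Delta^-}$, together with the polarization expansion of Lemma~\ref{Psi-lambda-Phi-mu-Phi-mu},
\[
\Psi_\lambda=\frac{(-1)^{\frac12 r(r+1)+s}}{2^{r-s}}\sum_{(\zeta^+,\zeta^-)\in\PP(\alpha,\beta)}\sgn(\zeta)(-1)^{\pi(\zeta^-)}{\rm i}^{m(\zeta^-)}\Phi^+_{\zeta^+}\Phi^-_{\zeta^-}.
\]
This turns $\chi_{(\alpha|\beta)}(\Delta)$ into a double sum over $(\Delta^+,\Delta^-)$ and $(\zeta^+,\zeta^-)$ of the VEVs $\langle 0|J^{{\rm B}+}_{\Delta^+}J^{{\rm B}-}_{\Delta^-}\Phi^+_{\zeta^+}\Phi^-_{\zeta^-}|0\rangle$. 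Since each $J^{{\rm B}-}_m$ is quadratic, hence of even degree, in the $\phi^-$ modes, it passes through $\Phi^+_{\zeta^+}$ without sign (the $\phi^+$ and $\phi^-$ anticommute by~\eqref{neutral-canonical}), so the operator regroups as $U^+U^-$ with $U^+:=J^{{\rm B}+}_{\Delta^+}\Phi^+_{\zeta^+}$ and $U^-:=J^{{\rm B}-}_{\Delta^-}\Phi^-_{\zeta^-}$, whose common parity is that of $m(\zeta^\pm)$ (these coincide, as $m(\zeta^+)+m(\zeta^-)=2r$).

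Next I would apply the factorization Lemma~\ref{factorization_lemma}. When $m(\zeta^\pm)$ are even the VEV factorizes as $\langle 0|U^+|0\rangle\langle 0|U^-|0\rangle$, and identifying $\langle 0|J^{{\rm B}\pm}_{\Delta^\pm}\Phi^\pm_{\zeta^\pm}|0\rangle=2^{\ell(\zeta^\pm)}\chi^{{\rm B}\pm}_{\zeta^\pm}(\Delta^\pm)$ through the Sergeev-character formula $\chi^{\rm B}_\alpha(\Delta)=2^{-\ell(\alpha)}\langle 0|J^{\rm B}_\Delta\Phi_\alpha|0\rangle$ produces the factor $2^{\ell(\zeta^+)+\ell(\zeta^-)}$. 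When $m(\zeta^\pm)$ are odd the VEV equals $2{\rm i}\,\langle 0|U^+\phi^+_0|0\rangle\langle 0|U^-\phi^-_0|0\rangle$; here I would use $\Phi^\pm_{\zeta^\pm}\phi^\pm_0=\frac{1}{\sqrt 2}\Phi^\pm_{\hat\zeta^\pm}$ to absorb the insertion of $\phi^\pm_0$ into the supplemented product $\Phi^\pm_{\hat\zeta^\pm}$, so that the same character formula (now against $\Phi_{\hat\zeta^\pm}$, consistently with~\eqref{You_Q}) again yields $2^{\ell(\zeta^\pm)}\chi^{{\rm B}\pm}_{\zeta^\pm}(\Delta^\pm)$, the two resulting factors $\frac{1}{\sqrt2}$ combining with the $2{\rm i}$ into an overall ${\rm i}$. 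In both cases the constraints $|\zeta^\pm|=|\Delta^\pm|$, which fix the match $\nu^\pm=\zeta^\pm$ with the cycle type, are automatic, because $J^{{\rm B}\pm}_{\Delta^\pm}$ lowers the Fermi energy by $|\Delta^\pm|$ and the VEV vanishes otherwise.

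The step I expect to carry the real content is verifying that the scalar prefactors coalesce, across both parity cases, into exactly the weight~\eqref{a}. The coefficient inherited from Lemma~\ref{Psi-lambda-Phi-mu-Phi-mu} is $C\,\sgn(\zeta)(-1)^{\pi(\zeta^-)}{\rm i}^{m(\zeta^-)}$ with $C=(-1)^{\frac12 r(r+1)+s}/2^{r-s}$, whereas~\eqref{a} carries $C\,\sgn(\zeta)(-1)^{\pi(\zeta^-)+\frac12 m(\hat\zeta^-)}$. In the even case $m(\zeta^-)$ is even, $\hat\zeta^-=\zeta^-$, and ${\rm i}^{m(\zeta^-)}=(-1)^{m(\zeta^-)/2}=(-1)^{\frac12 m(\hat\zeta^-)}$, so no further factor is needed; in the odd case the extra ${\rm i}$ supplied by the factorization upgrades ${\rm i}^{m(\zeta^-)}$ to ${\rm i}^{m(\zeta^-)+1}=(-1)^{\frac12(m(\zeta^-)+1)}=(-1)^{\frac12 m(\hat\zeta^-)}$, since $m(\hat\zeta^-)=m(\zeta^-)+1$. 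Both cases thus reproduce the single coefficient $\left[\zeta^+,\zeta^-\atop\alpha,\beta\right]$, and collecting $2^{\ell(\zeta^+)+\ell(\zeta^-)}$ and relabelling $\zeta^\pm=\nu^\pm$ gives the claimed bilinear expansion. The delicacy is precisely that the definition~\eqref{a} of the weight is engineered so that the ${\rm i}$-powers of the polarization lemma and the $2{\rm i}$ of the odd-parity factorization merge into the uniform sign $(-1)^{\frac12 m(\hat\zeta^-)}$.
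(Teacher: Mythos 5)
Your proposal is correct and follows exactly the route the paper indicates (it only cites Lemma \ref{Psi-lambda-Phi-mu-Phi-mu} and Lemma \ref{Polarization-currents}, leaving the factorization and coefficient bookkeeping implicit): split $J_\Delta$ and $\Psi_\lambda$ into neutral constituents, regroup, apply Lemma \ref{factorization_lemma}, and check that the $\rm i$-powers and the $2\rm i$ of the odd case assemble into $(-1)^{\frac12 m(\hat\zeta^-)}$. Your derivation also confirms that the exponent in the printed statement should read $2^{\ell(\nu^+)+\ell(\nu^-)}$ rather than $2^{\ell(\nu^+)+\ell(\nu^+)}$.
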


\subsection[Relation between generalized skew Schur polynomials and generalized projective skew Schur polynomials]{Relation between generalized skew Schur polynomials \\ and generalized projective skew Schur polynomials}

This section may be treated as a remark to \cite{HO1}.
Let us find the relation between the following quantities:
\begin{gather}\label{skew-s}
s_{\lambda/\mu}(\pb'):=\langle 0|\Psi^*_\mu\hat{\gamma}(\pb') \Psi_\lambda |0\rangle,\\
\label{skew-Q}
Q_{\nu/\theta}\bigl({\pb^{\rm B}}\bigr):=\langle 0|\Phi_{-\theta}\hat{\gamma}^{\rm B}\bigl(2{\pb^{\rm B}}\bigr) \Phi_\nu|0\rangle ,
\end{gather}
where the power sum variables are as follows: $\pb'=(p_1,0,p_3,0,p_5,0,\dots)$, ${\pb^{\rm B}}=\frac12(p_1,p_3,p_5,\dots)$,
where $\lambda=(\alpha|\beta)$, $\mu=({\gamma}|\delta)$ and where $\alpha,\beta,{\gamma},\delta,\theta,\nu\in\DP$.

 \begin{Theorem}\label{skew}
 \[
 s_{\lambda/\mu}(\pb') = \sum_{(\nu^+,\nu^-)\in \PP(\alpha,\beta)\atop (\theta^+,\theta^-)\in \PP({\gamma},\delta)}
 \left[\nu^+,\nu^-\atop\alpha,\beta\right] \left[\theta^+,\theta^-\atop {\gamma},\delta\right] Q_{\nu^+/\theta^+}\bigl({\pb^{\rm B}}\bigr)Q_{\nu^-/\theta^-}\bigl({\pb^{\rm B}}\bigr).
 \]
 \end{Theorem}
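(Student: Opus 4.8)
The plan is to run the same fermionic computation that produced Theorem~\ref{lattice-tau}, but now reading \emph{both} neutral factors as skew projective Schur functions. I start from the fermionic definition \eqref{skew-s}, $s_{\lambda/\mu}(\pb')=\langle 0|\Psi^*_\mu\,\hat{\gamma}(\pb')\,\Psi_\lambda|0\rangle$, and replace $\hat{\gamma}(\pb')$ by its neutral factorization
\[
\hat{\gamma}(\pb')=\hat{\gamma}^{{\rm B}+}\bigl(2{\pb^{\rm B}}\bigr)\hat{\gamma}^{{\rm B}-}\bigl(2{\pb^{\rm B}}\bigr),
\]
which is legitimate here because $\pb'$ carries only odd-labelled components and $J_n=J^{{\rm B}+}_n+J^{{\rm B}-}_n$ for odd $n$; the normalization ${\pb^{\rm B}}=\tfrac12(p_1,p_3,p_5,\dots)$ is exactly the one appearing in \eqref{skew-Q}.

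Next I would expand the two charged-fermion products into neutral monomials. Lemma~\ref{Psi-lambda-Phi-mu-Phi-mu} rewrites $\Psi_\lambda$ as a sum over polarizations $(\nu^+,\nu^-)\in\PP(\alpha,\beta)$ of $\Phi^+_{\nu^+}\Phi^-_{\nu^-}$, and after absorbing the factors ${\rm i}^{m(\nu^-)}$ and $(-1)^{\frac12 m(\hat{\nu}^-)}$ the coefficient is precisely the bracket $\bigl[\nu^+,\nu^-\atop\alpha,\beta\bigr]$ of \eqref{a}. The companion lemma (the one giving the expansion of $\Psi^*_\mu$) does the same for the bra, producing a sum over $(\theta^+,\theta^-)\in\PP(\gamma,\delta)$ of $\Phi^+_{-\theta^+}\Phi^-_{-\theta^-}$ with coefficient $\bigl[\theta^+,\theta^-\atop\gamma,\delta\bigr]$. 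Substituting both expansions turns the VEV into a double sum over polarizations whose summand is
\[
\langle 0|\,\Phi^+_{-\theta^+}\Phi^-_{-\theta^-}\,\hat{\gamma}^{{\rm B}+}\bigl(2{\pb^{\rm B}}\bigr)\hat{\gamma}^{{\rm B}-}\bigl(2{\pb^{\rm B}}\bigr)\,\Phi^+_{\nu^+}\Phi^-_{\nu^-}\,|0\rangle .
\]

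I would then use that the two species anticommute, together with the fact that each $\hat{\gamma}^{{\rm B}\pm}$ is even in its own modes, to reorder each summand so that all $+$-type operators precede all $-$-type operators, collecting a graded sign $(-1)^{m(\theta^-)m(\nu^+)}$. This puts the summand in the form $\pm\,U^+U^-$ with $U^\pm=\Phi^\pm_{-\theta^\pm}\,\hat{\gamma}^{{\rm B}\pm}(2{\pb^{\rm B}})\,\Phi^\pm_{\nu^\pm}$, whereupon the factorization Lemma~\ref{factorization_lemma} applies: in the even case it gives $\langle 0|U^+|0\rangle\langle 0|U^-|0\rangle$, and by \eqref{skew-Q} each factor is a skew projective Schur function, $\langle 0|U^+|0\rangle=Q_{\nu^+/\theta^+}({\pb^{\rm B}})$ and $\langle 0|U^-|0\rangle=Q_{\nu^-/\theta^-}({\pb^{\rm B}})$. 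In the odd--odd case the lemma inserts the extra $2{\rm i}\,\phi^\pm_0$ factors, which is exactly the correction encoded by the supplemented partitions $\hat{\nu}^\pm$, $\hat{\theta}^\pm$ in the defining products.

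The main obstacle is the sign- and normalization-bookkeeping in the odd-degree branch: one must check that the reordering sign $(-1)^{m(\theta^-)m(\nu^+)}$, the factor $2{\rm i}$ from Lemma~\ref{factorization_lemma}, the phases ${\rm i}^{m(\nu^-)}$, ${\rm i}^{m(\theta^-)}$ coming from the two expansion lemmas, and the half-cardinalities $\tfrac12 m(\hat{\nu}^-)$, $\tfrac12 m(\hat{\theta}^-)$ hidden in \eqref{a} all recombine so that precisely the product $\bigl[\nu^+,\nu^-\atop\alpha,\beta\bigr]\bigl[\theta^+,\theta^-\atop\gamma,\delta\bigr]$ survives in each summand. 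This is the identical cancellation already verified for Theorem~\ref{lattice-tau}, with the pair $\Phi^\pm_{-\theta^\pm}$ now playing the role that $\tilde{\Phi}^\pm(-\zb^*)$ played there, so I expect it to go through verbatim; collecting the surviving terms over the two independent polarization sums then yields the stated formula.
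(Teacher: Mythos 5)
Your proposal follows essentially the same route as the paper: expand $\Psi_\lambda$ and $\Psi^*_\mu$ into neutral-fermion monomials via Lemma~\ref{Psi-lambda-Phi-mu-Phi-mu} and its companion, factor $\hat{\gamma}(\pb')=\hat{\gamma}^{{\rm B}+}\bigl(2{\pb^{\rm B}}\bigr)\hat{\gamma}^{{\rm B}-}\bigl(2{\pb^{\rm B}}\bigr)$, separate the two species, apply the factorization Lemma~\ref{factorization_lemma} according to the parities of $m\bigl(\nu^\pm\bigr)$, $m\bigl(\theta^\pm\bigr)$, and identify the resulting VEVs with the skew $Q$-functions of~(\ref{skew-Q}). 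The sign and normalization bookkeeping you defer is exactly what the paper also leaves implicit, so the argument is correct and matches the paper's proof.
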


 For the proof, we apply Lemma \ref{Psi-lambda-Phi-mu-Phi-mu} to the first equality (\ref{skew-s})
 and consider all possible parities of $m\bigl(\nu^\pm\bigr),m\bigl(\theta^\pm\bigr)$ to apply Lemma \ref{factorization_lemma}
 and take into account the fermionic expression (\ref{skew-Q}).

 Theorem \ref{skew} in a straightforward way can be generalized for the generalized skew Schur and
 skew projective Schur polynomials which we define as follows.
 Suppose
 \[
 \hat{g} =\hat{g}(C)={\rm e}^{\sum C_{ij}\psi_i \psi^\dag_j } ,
 \]
 where the entries $C_{ij}$ form a matrix.
 Also suppose
 \[
\hat h^\pm=\hat h^\pm(A)={\rm e}^{\sum A_{ij}\phi_i^\pm \phi_j^\pm} ,
 \]
 where now the entries $A_{ij}=-A_{ji}$ form a skew-symmetric matrix.
Suppose that
\be\label{Borel}
\hat g |0\rangle = |0\rangle c_1,
\qquad
\hat h^\pm |0\rangle = |0\rangle c_2 ,\qquad c_{1,2}\in \mathbb{C},\qquad c_{1,2}\neq 0.
\ee

 Define the generalized skew Schur polynomials and the generalized projective skew Schur polynomials (the same, generalized skew Schur's $Q$-functions) by
\begin{gather}\label{skew-s-A}
s_{\lambda/\mu}(\pb|\hat g):=\langle \mu|\hat{\gamma}(\pb)\hat g |\lambda\rangle,\\
\label{skew-Q-A}
Q_{\nu/\theta}\bigl({\pb^{\rm B}}|\hat h^\pm\bigr):=\langle 0|\Phi_{-\theta}\hat{\gamma}^{\rm B\pm}\bigl(2{\pb^{\rm B}}\bigr) \hat h^\pm \Phi_\nu|0\rangle.
\end{gather}
\begin{rem}
In \cite{HO2}, $s_{\lambda/\mu}(\pb|\hat g)$ was defined by $s_{\lambda/\mu}(\pb|C)$ and
$Q_{\nu/\theta}\bigl({\pb^{\rm B}}|\hat h^\pm\bigr)$ was defined by $Q_{\nu/\theta}\bigl({\pb^{\rm B}}|A\bigr)$.
\end{rem}
\begin{rem}
The constraints (\ref{Borel}) are sufficient, but not necessary, for the polynomiality condition of the right-hand sides in (\ref{skew-s-A}) and (\ref{skew-Q-A}).
\end{rem}

The polynomiality of (\ref{skew-s-A}) in $p_1,\dots,p_{|\lambda|-|\mu|}$ and the polynomiality
of (\ref{skew-Q-A}) in $p_1,\dots,p_{|\nu|-|\theta|}$
follows from (\ref{Borel}).
One can treat a given $\pb$ as a constant and study $s_{\lambda/\mu}(\pb|A)$ as the function of discrete
variables $\lambda_i-i$ and $\mu_i-i$.

 \begin{Theorem}
 Suppose $\hat g=\hat h^+ \hat h^-$ and \eqref{Borel} is true. Then
 \[
 s_{\lambda/\mu}(\pb'|\hat g) = \sum_{(\nu^+,\nu^-)\in \PP(\alpha,\beta)\atop (\theta^+,\theta^-)\in \PP({\gamma},\delta)}
 \left[\nu^+,\nu^-\atop\alpha,\beta\right] \left[\theta^+,\theta^-\atop {\gamma},\delta\right]
 Q_{\nu^+/\theta^+}\bigl({\pb^{\rm B}}|\hat h^+\bigr)Q_{\nu^-/\theta^-}\bigl({\pb^{\rm B}}|\hat h^-\bigr)
 \]
 \end{Theorem}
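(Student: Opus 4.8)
The plan is to prove the statement by the same fermionic mechanism used for Theorem~\ref{skew} (its $\hat g=1$ specialization), now carrying the extra group element $\hat g=\hat h^+\hat h^-$ through the computation. First I would write the left-hand side as a vacuum expectation value of neutral and charged operators,
\[
s_{\lambda/\mu}(\pb'|\hat g)=\langle 0|\Psi^*_\mu\,\hat\gamma(\pb')\,\hat h^+\hat h^-\,\Psi_\lambda|0\rangle,
\]
using $\langle\mu|=\langle 0|\Psi^*_\mu$, $|\lambda\rangle=\Psi_\lambda|0\rangle$ and definition \eqref{skew-s-A}. The constraint \eqref{Borel} guarantees that both sides are genuine polynomials, so that all the rearrangements below take place inside finite sums and the matrix elements are well defined.

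Next I would factor the charged current exponential into its two neutral halves, $\hat\gamma(\pb')=\hat\gamma^{{\rm B}+}(2{\pb^{\rm B}})\hat\gamma^{{\rm B}-}(2{\pb^{\rm B}})$, which follows from \eqref{J=J^B+J^B} and \eqref{Hei-B}. I would then expand $\Psi_\lambda$ by Lemma~\ref{Psi-lambda-Phi-mu-Phi-mu} into a sum over polarizations $(\nu^+,\nu^-)\in\PP(\alpha,\beta)$ of products $\Phi_{\nu^+}\Phi_{\nu^-}$, and expand $\Psi^*_\mu$ by its adjoint companion (the lemma immediately following Lemma~\ref{Psi-lambda-Phi-mu-Phi-mu}) into a sum over $(\theta^+,\theta^-)\in\PP(\gamma,\delta)$ of products $\Phi_{-\theta^+}\Phi_{-\theta^-}$. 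After this step the whole expression is a double sum of vacuum expectation values built purely from neutral fermions.

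The core of the argument is then the factorization. Because $J^{{\rm B}+}_m$ commutes with $J^{{\rm B}-}_n$ and the $\hat h^\pm$ are even quadratics in $\phi^\pm$, all $\phi^+$-factors ($\Phi_{-\theta^+}$, $\hat\gamma^{{\rm B}+}$, $\hat h^+$, $\Phi_{\nu^+}$) may be collected into one operator $U^+$ and all $\phi^-$-factors into $U^-$, at the cost of a reordering sign coming from anticommuting the odd $\phi^+$-blocks past the odd $\phi^-$-blocks. I would then apply Lemma~\ref{factorization_lemma}: the mixed-parity terms vanish, and in both surviving cases the expectation value splits as $\langle 0|\Phi_{-\theta^+}\hat\gamma^{{\rm B}+}\hat h^+\Phi_{\nu^+}|0\rangle$ times its $(-)$ analogue, which are precisely $Q_{\nu^+/\theta^+}({\pb^{\rm B}}|\hat h^+)$ and $Q_{\nu^-/\theta^-}({\pb^{\rm B}}|\hat h^-)$ by definition \eqref{skew-Q-A} (in the odd--odd branch using the supplemented partitions and the $\phi^\pm_0$ insertions).

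I expect the only real obstacle to be the sign and power-of-two bookkeeping. One must check that the reordering sign, the prefactors $(-1)^{\frac{1}{2}r(r+1)+s}2^{-(r-s)}\sgn(\nu)(-1)^{\pi(\nu^-)}{\rm i}^{m(\nu^-)}$ from the two expansion lemmas, and the factor $2{\rm i}$ produced by the odd--odd branch of Lemma~\ref{factorization_lemma} combine exactly into the two bracket weights of \eqref{a}, namely $\left[\nu^+,\nu^-\atop\alpha,\beta\right]$ and $\left[\theta^+,\theta^-\atop\gamma,\delta\right]$; the key identity is that ${\rm i}^{m(\nu^-)}$ matches $(-1)^{\frac{1}{2}m(\hat\nu^-)}$ up to the single factor of ${\rm i}$ per odd block that the $2{\rm i}$ supplies. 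Since this is the identical bookkeeping already carried out for Theorem~\ref{skew}, inserting $\hat h^\pm$ alters none of the combinatorial factors and merely upgrades each plain $Q_{\nu^\pm}$ to the dressed skew function $Q_{\nu^\pm/\theta^\pm}(\,\cdot\,|\hat h^\pm)$, which yields the asserted formula.
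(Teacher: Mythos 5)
Your proposal is correct and follows exactly the route the paper intends: the paper omits this proof, deferring to the reasoning of Theorem~\ref{skew} and of \cite{HO2}, which is precisely your chain of steps --- rewrite the left side as a VEV, split $\hat\gamma(\pb')=\hat\gamma^{{\rm B}+}\bigl(2{\pb^{\rm B}}\bigr)\hat\gamma^{{\rm B}-}\bigl(2{\pb^{\rm B}}\bigr)$, expand $\Psi_\lambda$ and $\Psi^*_\mu$ by the polarization lemmas, and factorize via Lemma~\ref{factorization_lemma} using that $\hat h^\pm$ are even in their respective neutral sectors. Your observation that inserting $\hat h^\pm$ leaves all combinatorial weights unchanged and merely dresses each $Q_{\nu^\pm/\theta^\pm}$ is the essential point, so the argument goes through as you describe.
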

The proof is based on the same reasoning as the proof of the theorem in \cite{HO2}. We omit it.

For a given function $r$ of a single variable, introduce
\begin{gather*}
s_{\lambda/\mu}(\pb|\rb):=\langle 0|\Psi^*_\mu \hat{\gamma}_r(\pb) \Psi_\lambda|0\rangle,\\
Q_{\nu/\theta}({\pb^{\rm B}}|\rb):=\langle 0|\Phi_{-\theta}\hat{\gamma}_r^{\rm B\pm}(2{\pb^{\rm B}}) \Phi_\nu|0\rangle ,
\end{gather*}
where
\begin{gather*} 
\hat{\gamma}_r(\pb):={\rm e}^{\sum_{j=1}^\infty \frac 1j p_j \sum_{k\in\Zb} \psi_k \psi^\dag_{k+j} r(k+1)\cdots r(k+j)} ,
\\ 
\hat{\gamma}_r^{\rm B}(2{\pb^{\rm B}}):={\rm e}^{\sum_{j=1,\text{odd}}^\infty \frac 2j p_j
\sum_{k\in\Zb} (-1)^k\phi^\pm_{k-j} \phi^\pm_{-k} r(k+1)\cdots r(k+j)}.
\end{gather*}
By the Wick theorem, we obtain
\[
 s_{\lambda/\mu}(\pb|\rb)=\det \bigl( r(\mu_i-i+1) \cdots r(\lambda_j-j)
 s_{(\lambda_i-\mu_j-i+j)}(\pb) \bigr)_{i,j}.
\]
The similar relation for $Q_{\nu/\theta}\bigl({\pb^{\rm B}}|\rb\bigr)$ is more spacious, the Wick theorem
yields a Nimmo-type pfaffian formula, we
shall omit it. We have
\[
 s_{\lambda/\mu}(\pb'|\rb) = \sum_{(\nu^+,\nu^-)\in \PP(\alpha,\beta)\atop (\theta^+,\theta^-)\in \PP({\gamma},\delta)}
\left[\nu^+,\nu^-\atop \alpha,\beta\right] \left[\theta^+,\theta^-\atop {\gamma},\delta\right]
 Q_{\nu^+/\theta^+}\bigl({\pb^{\rm B}}|\rb\bigr)Q_{\nu^-/\theta^-}\bigl({\pb^{\rm B}}|\rb\bigr).
 \]

\subsection{Relation between shifted Schur and shifted projective Schur functions}

Let us recall the notion of the shifted Schur function introduced by Okounkov and Olshanski~\cite{OkOl}.
In can be defined as
\begin{gather*}
s^*_\mu(\lambda) = \frac{\dim\, \lambda/\mu}{\dim\,\lambda} n(n-1)\cdots (n-k+1)
=\frac{s_{\lambda/\mu}(\pb_1)}{s_{\lambda}(\pb_1)} ,
\end{gather*}
where $n=|\lambda|$, $k=|\mu|$, $\pb_1=(1,0,0,\dots)$ and
\[
\dim\, \lambda/\mu=s_{\lambda/\mu}(\pb_1)(n-k)!,\qquad \dim\, \lambda=s_{\lambda}(\pb_1)n!
\]
are the number of the standard tableaux of the shape $\lambda/\mu$ and $\lambda$ respectively, see \cite{Mac1}.
The function $s^*_\mu(\lambda)$ as a function of the Frobenius coordinates is also known as Frobenius--Schur
function $FS(\alpha,\beta)$

On the other hand, Ivanov \cite{Iv} introduced the projective analogue of shift $Q$-functions
\begin{gather*}
Q^*_\theta(\nu)
=\frac{Q_{\nu/\theta}(\pb_1)}{Q_{\nu}(\pb_1)}.
\end{gather*}
Therefore,
\[
s^*_\mu(\lambda)s_{\lambda}(\pb_1)=
\sum_{(\nu^+,\nu^-)\in \PP(\alpha,\beta)\atop (\theta^+,\theta^-)\in \PP({\gamma},\delta)}
 \left[ \nu^+,\nu^-\atop\alpha,\beta\right]
 \left[\theta^+,\theta^-\atop{\gamma},\delta\right]
 Q^*_{\theta^+}\bigl(\nu^+\bigr)Q^*_{\theta^-}(\nu^-)
 Q_{\nu^+}(\pb_1)Q_{\nu^-}(\pb_1).
\]

Let me add that both $s^*$ and $Q^*$ were used in the description of the generalized cut-and-join structure
\cite{MMN2019, MMN2011} in the topics of Hurwitz and spin Hurwitz numbers.

\appendix

\section{Wick's theorem}
\label{wick_app}

The following summarizes the implication of Wick's theorem for fermionic VEV's
in a form that is used repeatedly in this work (see, e.g., \cite[Section~5.1]{HB}).

 \begin{Theorem}[Wick's theorem] 
The vacuum expectation value of the product of an even number of linear elements $\{w_i\}_{1\le i \le 2n}$
 of the fermionic Clifford algebra is
\be
\langle 0| w_1w_2\cdots w_{2n}|0\rangle=
\Pf\left(W_{ij}\right)_{1\le i,j \le 2n} ,
\label{wick_pfaff}
\ee
where $\{W_{ij}\}_{1\le i,j \le 2n}\}$ are elements of the skew $2n \times 2n$ matrix defined by
\[
W_{ij} = \begin{cases}
\langle 0| w_i w_j|0\rangle & \text{if } i<j ,\\
 0 & \text{if } i=j ,\\
-\langle 0| w_i w_j|0\rangle & \text{if } i> j,
\end{cases}
\]
whereas the VEV of the product of an odd number vanishes
\[
\langle 0| w_1w_2\cdots w_{2n+1}|0\rangle = 0.
\]
 \end{Theorem}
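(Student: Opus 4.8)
The plan is to prove both statements simultaneously by induction on the number of factors, arranging the reduction so that it reproduces the row (Laplace) expansion of the Pfaffian. The starting point is to split each linear element into its annihilation and creation parts, $w_i = a_i + c_i$, determined by $a_i|0\rangle = 0$ and $\langle 0|c_i = 0$; such a splitting exists because each $w_i$ is a finite linear combination of the generators $\psi_j,\psi^\dag_j$ (equivalently the $\phi^\pm_j$), whose annihilation/creation character is fixed by the vacuum conditions (\ref{vac_annihil_psi_j_r}). The anticommutation relations (\ref{charged-canonical}) make every anticommutator of two linear elements a scalar, and in the Fock representation one has $[a_i,a_j]_+ = [c_i,c_j]_+ = 0$. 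A one-line computation then identifies the contraction with the skew-matrix entry, since for $i<j$ the only surviving term is $\langle 0|w_i w_j|0\rangle = \langle 0|a_i c_j|0\rangle = [a_i,c_j]_+ = W_{ij}$.

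The core step is the recursion. Because $\langle 0|c_1 = 0$, I would first replace the leading factor by its annihilation part and then carry $a_1$ rightward through the product, using $a_1 w_k = [a_1,w_k]_+ - w_k a_1 = W_{1k} - w_k a_1$ at each transposition. Telescoping through the remaining $2n-1$ factors yields
\[
\langle 0|w_1 w_2\cdots w_{2n}|0\rangle
= \sum_{k=2}^{2n} (-1)^{k}\, W_{1k}\,
\langle 0|w_2\cdots \widehat{w_k}\cdots w_{2n}|0\rangle ,
\]
in which the boundary term, carrying $a_1$ to the far right, annihilates $|0\rangle$ and drops out. This is exactly the cofactor expansion of $\Pf(W)$ along its first row, so together with the base case $\langle 0|w_1 w_2|0\rangle = W_{12}$ the even assertion (\ref{wick_pfaff}) follows by induction on $n$, the remaining $(2n-2)$-fold expectation value being $\Pf$ of the matrix obtained by deleting rows and columns $1$ and $k$.

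For the odd case I would invoke a parity argument: the vacuum lies in the even fermion-number sector, each $w_i$ is odd, and a product of an odd number of odd operators maps the even sector into the odd sector, which is orthogonal to $\langle 0|$; hence the expectation value vanishes. Equivalently, the same telescoping identity applied to an odd product reduces it to expectation values of odd products of length $2n-1$, all vanishing by induction from the base case $\langle 0|w_1|0\rangle = \langle 0|a_1|0\rangle = 0$.

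The commutation bookkeeping is routine; the point that needs care is matching the signs $(-1)^{k}$ produced by anticommuting $a_1$ past the $k-2$ intermediate factors with the alternating signs of the Pfaffian row expansion, and confirming that the boundary term genuinely annihilates the vacuum at each stage of the induction. These are the only places where the quasi-free (Fock) structure of $|0\rangle$ is actually used, through the identities $[a_i,a_j]_+ = [c_i,c_j]_+ = 0$ and $a_i|0\rangle = 0$.
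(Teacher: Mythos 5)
The paper does not prove this statement: it is quoted as a known result, with a pointer to the literature (Harnad--Balogh, Section~5.1), so there is no in-paper argument to compare against. Your proof is the standard one and is correct: the splitting $w_i=a_i+c_i$ into annihilating and creating parts exists and the two subspaces are isotropic (for the basis modes, $[\psi_j,\psi^\dag_k]_+=\delta_{jk}$ vanishes whenever one index lies in the annihilation range and the other in the creation range), the identification $\langle 0|w_iw_j|0\rangle=[a_i,c_j]_+=[a_1,w_j]_+$ for $i<j$ is right, and the telescoping of $a_1$ to the right produces exactly the signs $(-1)^k$ of the Pfaffian expansion along the first row, with the boundary term killed by $a_1|0\rangle=0$. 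The base cases $\langle 0|w_1w_2|0\rangle=W_{12}$ and $\langle 0|w_1|0\rangle=0$ close the induction, and either the parity argument or the same recursion disposes of the odd case. The only caveat worth recording is that the argument as written applies to finite linear combinations of the modes; its use in the paper for the fields $\psi(z)$, $\phi^\pm(z)$ requires the usual extension by linearity and convergence in the formal-series or $|z|$-ordered sense, which is harmless here.
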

 In particular, if half the $w_i$'s consist of creation operators $\{u_i\}_{i=1, \dots , n}$
 and the other half annihilation operators $\big\{v^\dag_i\big\}_{i=1, \dots , n}$, so that
 \[
 \langle 0| u_i u_j |0 \rangle =0, \qquad \langle 0| v^\dag_i v^\dag_j |0 \rangle =0,\qquad 1\le i,j \le n,
 \]
 then (\ref{wick_pfaff}) reduces to
 \begin{gather*}
 \langle 0| u_1 v^\dag_1 \cdots u_n v^\dag_n|0\rangle= \det \bigl(\langle 0 | u_i v^\dag_j |0 \rangle\bigr)_{1\le i, j \le n}.
 \label{wick_det}
 \end{gather*}

 \subsection*{Acknowledgements}

This work of A.O.\ is an output of a research project implemented as part of the Basic Research Program at the National Research University Higher School of Economics (HSE University). The author thanks the anonymous referees for their helpful comments.

\pdfbookmark[1]{References}{ref}
\LastPageEnding

\end{document}